\newcommand{\inv}{^{-1}}
\newcommand{\id}{\mathds 1}
\DeclareMathOperator*{\argmin}{\arg\!\min\;}
\newcommand{\N}{\mathbb{N}}
\newcommand{\W}{\mathbb{W}}
\newcommand{\X}{\mathbb{X}}
\renewcommand{\aa}{\mathbf{a}}
\newcommand{\kk}{\mathbf{k}}
\newcommand{\mm}{\mathbf{m}}
\newcommand{\uu}{\mathbf{u}}
\newcommand{\vv}{\mathbf{v}}
\newcommand{\xx}{\mathbf{x}}
\newcommand{\yy}{\mathbf{y}}
\newcommand{\zz}{\mathbf{z}}
\renewcommand{\AA}{\mathcal A}
\newcommand{\BB}{\mathcal B}
\newcommand{\CC}{\mathcal C}
\newcommand{\DD}{\mathcal D}
\newcommand{\II}{\mathcal I}
\newcommand{\Scal}{\mathcal{S}}
\newcommand{\TT}{\mathcal T}
\newcommand{\VV}{\mathcal V}
\newcommand{\pers}{\mathsf{pers}}
\newcommand{\birth}{\mathsf{birth}}
\newcommand{\M}{\mathsf{M}}
\newcommand{\Mult}{\mathsf{Mult}}
\newcommand{\supp}{\mathsf{supp}}
\begin{document}

\title{Approximating Continuous Functions on Persistence Diagrams Using Template Functions}

\author{\name Jose A.~Perea\thanks{Corresponding author} \email j.pereabenitez@northeastern.edu \\
        \addr Department of Mathematics; and\\
       Khoury College of Computer Sciences\\
       Northeastern University\\
       Boston, MA 02115, USA
       \AND
        \name Elizabeth Munch \email muncheli@msu.edu\\
       \addr Department of Computational Mathematics, Science, and Engineering; and\\
       Department of Mathematics \\
       Michigan State University\\
       East Lansing, MI 48824, USA
       \AND
       \name Firas A.~Khasawneh \email khasawn3@egr.msu.edu \\
       \addr Department of Mechanical Engineering\\
       Michigan State University\\
       East Lansing, MI 48824, USA
       }

\editor{}

\maketitle

\begin{abstract}%   <- trailing '%' for backward compatibility of .sty file
The persistence diagram is an increasingly useful tool from Topological Data Analysis,
but its use alongside typical machine learning techniques requires mathematical finesse.
The most success to date has come from methods that map persistence diagrams into vector spaces,
in a way which maximizes the structure preserved. This process is commonly referred to
as featurization. In this paper, we describe a mathematical framework for featurization
called \emph{template functions}, and we show that it addresses
the problem of approximating continuous functions on compact subsets of the space of persistence diagrams.
Specifically, we begin by characterizing relative compactness with respect to the bottleneck distance,
and then provide explicit theoretical methods for constructing compact-open dense subsets of continuous functions
on persistence diagrams.
These dense subsets---obtained via template functions---are
leveraged for supervised learning tasks with persistence diagrams.
Specifically, we test the method for classification and regression algorithms on several examples including shape data and dynamical  systems.
\end{abstract}

\begin{keywords}
    Topological Data Analysis, Persistent Homology, Machine Learning, Featurization, Bottleneck Distance
\end{keywords}

\section{Introduction}
Many machine learning tasks can be reduced to the following problem:
Approximate a continuous function defined on a topological space, the ``ground truth,'' given the function values (or approximations thereof) on some subset of the points.
This task has been  studied extensively for data  in Euclidean space, but more work is necessary to extend these ideas to arbitrary topological spaces.
In this paper, we focus on the task of learning continuous functions on the space $\DD$ of persistence diagrams  endowed with the bottleneck distance $d_B$, as input to  supervised learning (e.g., regression and classification) on $\DD$.

Persistence diagrams are mathematical objects arising in the field of Topological Data Analysis (TDA).
They are signatures giving insight into the underlying structure of  data sets,
and  typically arise in the following  pipeline.
Given a sequence of simplicial complexes $K_1 \subseteq K_2 \subseteq \cdots \subseteq K_n$
obtained, for example, by connecting the points in a data set at increasing proximity scales (e.g., as in the Vietoris-Rips filtration),
then their homology with coefficients in a field and the maps induced by the inclusions $K_i \hookrightarrow K_{i+1}$
yield a sequence $H_p(K_1) \to H_p(K_2) \to \cdots \to H_p(K_n)$  of vector spaces and linear transformations.
One can then interrogate this sequence to determine when homological features appear (are born) and disappear (die),
and encode each such feature as a point (birth, death) in a so-called persistence diagram.
It is possible to weaken the implicit finiteness assumptions of the input data and instead start with a family
 $\{K_a \mid a  \in \mathbb{R}\}$ of spaces with $K_a \subseteq K_b$ for $a \leq b$,
 giving rise to a so-called persistence module $\{ H_p(K_a) \mid a \in \mathbb{R}\}$ with linear maps $\phi_a^b: H_p(K_a) \to H_p(K_b)$ such that
 $\phi_a^a$ is the identity map of $H_p(K_a)$ and $\phi_b^c\phi_a^b = \phi_a^c$ for $a \leq b \leq c$.
 To simplify definitions, we will assume that our input diagrams are defined over positive indices $a \geq 0$.
There has been extensive study on the various niceness restrictions that can be placed on a persistence module \citep{Bubenik2018b}.
With enough assumptions, a persistence module can be represented up to isomorphism as a persistence diagram, which is simply a collection of points with multiplicity in the ``wedge'' $\W =  \{(x,y)\in \mathbb{R}^2 \mid  0 \leq x < y\}$.
Persistence diagrams are particularly useful for data analysis due to the availability of metrics, and their stability \citep{Lesnick2015}.

The downside of all this mathematical structure is that the geometry of
the space of persistence diagrams
 $(\DD,d_B)$ is not directly amenable to the application of existing machine learning methodologies.
Thus, methods for utilizing persistence diagrams in statistics and machine learning contexts have taken two basic forms.
The first are attempts at working with the persistence diagrams directly; however, the issues with the geometry (particularly the lack of unique means) mean that the work in this direction is rather limited.
More recently, a great deal of success has been found in \textit{featurization}; that is, transforming each persistence diagram into a point in a vector space,  in a way that preserves as much of the structure as possible.
The work in this paper is part of the latter category, and provides a new method for featurization which sits on a solid mathematical foundation with respect to the structure of $(\mathcal{D}, d_B)$.

Mathematically, we are working with the following framework.
Suppose  one has a set  $\mathcal{S} \subset \DD $, typically compact,  and a continuous function $F: \mathcal{S} \longrightarrow \mathbb{R}$ encoding the ground truth of the phenomenon under study.
\textbf{Our goal} is to devise provably-correct and computationally feasible approaches to approximating $F$, given a finite sample $D_1,\ldots, D_n \in \mathcal{S}$ and their values $F(D_1),\ldots, F(D_n) \in \mathbb{R}$.
This encompasses, for instance,  supervised learning tasks such as regression and classification.
The problem at hand is thus (1) to characterize  compactness in $(\DD,d_B)$, for these are the sets
where sequential approximations are guaranteed to converge,
(2) to construct dense subsets of the space of continuous functions from $\DD$ to $\mathbb{R}$,
for these will be the search space for approximation (i.e., supervised learning) tasks,
and (3) to devise algorithms using said families to approximate real valued functions on compact subsets of $\DD$.

\subsection{Our contribution }

The first contribution of this paper is a characterization of (relative) compactness for subsets of
$\DD$ with respect to $d_B$ (see Figure \ref{fig:CompactExamples} and  Theorem \ref{thm:Compactness}).
These results can be viewed in parallel to the characterization  in \cite{Mileyko2011} of relative compactness with respect to the Wasserstein distance, though our results are shown to  capture  different phenomena (see Sec. \ref{sec:WassVSBottle}).
Our characterization also comes with some unexpected consequences for the  topology of
$(\DD, d_B)$: (1) Every compact subset of $\DD$ has empty interior (hence $\DD$ is not locally compact);
(2) $\DD$ cannot be written as a countable union of compact subsets; and if  $C(\DD, \mathbb{R})$ denotes  the set of continuous functions from $\DD$ to $\mathbb{R}$,
then (3) the compact-open topology on $C(\DD,\mathbb{R})$---which captures approximations on compact subsets of $\DD$---is not metrizable.
The main consequence of this last point is the impossibility of purely metric-based objective functions
for approximations (e.g., supervised learning) in $C(\DD,\mathbb{R})$ with respect to the topology of convergence on compact sets.

To circumvent this, we turn our attention to the problem of finding  compact-open dense subsets of $C(\DD, \mathbb{R})$.
Ideally, the elements of these sets should be succinctly represented (e.g., with a few parameters) and efficiently searched (e.g., via appropriate optimization routines),
in order to devise general computational schemes.
Our second contribution is a methodology for constructing infinitely many examples of said families.
The strategy   goes as follows:
First, we continuously embed $\DD$ in an appropriate topological vector space $V$---consistent with the monoidal structure of $\DD$ given by
disjoint union $\sqcup$ of multisets---and then restrict the
continuous $\mathbb{R}$-linear maps on $V$ to yield elements in $C(\DD, \mathbb{R})$.
Specifically, let $C_c(\W)$ denote the set of compactly supported continuous functions from
$\W $ to $\mathbb{R}$, endowed with the \emph{strict inductive limit topology} (see Section \ref{sec:LinInfDiag}).
Let
$C_c(\W)'$ be its topological dual, endowed with the corresponding \emph{weak-* topology}, and let
\[
\nu : \mathcal{D} \longrightarrow C_c(\W)'
\]
be the function
which assigns to each  $D\in \mathcal{D}$ the Radon measure on $\W$ consisting
of a Dirac delta mass at each $\xx \in D$.
We show in Theorem \ref{thm:Linearization} that $\nu$
is continuous, injective and satisfies $\nu(D \sqcup D') = \nu(D) + \nu(D')$
for all $D,D' \in \mathcal{D}$,
providing the aforementioned linear embedding.
Hence, each bounded linear operator $T: C_c(\W)' \longrightarrow \mathbb{R}$
yields a map $T\circ \nu \in C(\mathcal{D},  \mathbb{R}) $---a feature, in machine learning parlance---which respects the monoidal structure $\sqcup$ of $\mathcal{D}$.

We show (see Theorem \ref{thm:V_iso_V2dual} and discussion thereafter) that each such $T$
 uniquely determines an  $f\in C_c(\W)$, and viceversa, so that $T\circ \nu (D)$ is exactly the same as
integrating $f$ against the Radon measure $\nu(D)$.
We then show how to construct
compact-open dense subsets of $C(\DD,\mathbb{R})$
by taking countably many dilations and translations
of any nonzero  $f\in C_c(\W)$ (Theorems \ref{thm:approximation} and \ref{thm:templates}).
This is why we refer to the elements of $C_c(\W)$ as \emph{template functions}.

As the final contribution of this paper, we provide two explicit families of template functions---called respectively tent functions and interpolating polynomials (see Section \ref{sec:TemplateFunctionExamples})---so that the algebras they generate in $C(\DD, \mathbb{R})$ are
compact-open dense.
We then provide algorithms to perform regularized regression and classification using template functions
(in Section \ref{sec:RidgeRegression}),
and finally, we compare tent functions and interpolating polynomials in several tasks including
shape classification and inference in dynamical systems (Section \ref{sec:Experiments}).

\subsection{Related work}
Existing methods for applying statistics and machine learning methods to persistence diagrams can be loosely divided into two categories.
The first attempts to work in the space of persistence diagrams directly.
This can be done by studying the Fr\'echet mean for collections of diagrams \citep{Mileyko2011,Turner2014,Munch2015}, cofindence sets for persistence diagrams \citep{Fasy2014}, or simply passing the  Wasserstein or bottleneck distance matrix to metric learning methods for classification tasks (e.g.~\cite{Li2014}).
The main issue with this viewpoint is that the geometry of the space of persistence diagrams is ill-behaved \citep{wagner2019nonembeddability}, so directly working with these objects is often not advisable.

The second collection of methods maps the space of persistence diagrams into another, more well-behaved space where available mathematical machinery can be readily applied.
Our work on approximating continuous functions of persistence diagrams  with templates fits into this category.
These approaches take inspiration from different viewpoints including algebraic geometry \citep{Adcock2012,CarlssonVerovsek2016,Kalisnik2018,Fabio2015}, functional representations \cite{Bubenik2015,Berry2018,Chazal2014b,Padellini2017},
image encodings \cite{Adams2017,Chen2015,Donatini1998,Ferri1998,Rouse2015},
path representations \cite{Chevyrev2018},
kernel methods \cite{Reininghaus2015,Kwitt2015,Kusano2016,Kusano2017,Kusano2018,Carriere2017c,Corbet2018a,Carriere2018,Zhao2019,Anirudh2016,Le2018,Zhu2016,Carriere2019},
and other more ad hoc methodologies \cite{Bendich2016a,Singh2014,Chung2009,Pachauri2011,Carriere2015,Zielinski2018,Zielinski2018a}.
Our work is most closely related to the persistence images of \cite{Adams2017}, where each point in a persistence diagram contributes to a Gaussian bump, and the sum of these Gaussians provides a function on the upper half plane.
Our work is, in some sense, dual to this idea, where we start with bump functions and evaluate this function at the points of the persistence diagram

\subsection{Outline}

We go over the background needed for understanding persistence diagrams in Sec.~\ref{sec:Basics}.
In Sec.~\ref{sec:Compact}, we give a full characterization of compact sets in persistence diagram space with the bottleneck distance (Sec.~\ref{thm:Compactness}); the reader more interested in the featurization method than in its mathematical justification may safely skip this section.
We provide the mathematical justification for the template functions in Sec.~\ref{sec:LinearizingD}, and fit this into a function approximation scheme in Sec.~\ref{sec:Approximating}.
In Sec.~\ref{sec:TemplateFunctionExamples} we give two options for template functions: tent functions and (Chebyshev) interpolating polynomials.
In Sec.~\ref{sec:RidgeRegression} we fit these into a regression framework.
We give results of our experiments in Sec.~\ref{sec:Experiments} and discuss implications and future directions in Sec.~\ref{sec:Discussion}.

%--------------------
%--------------------
\section{Basics}
\label{sec:Basics}
%--------------------
%--------------------

Traditionally, persistence diagrams arise in the course of the following procedure.
Given  a  function  $f:\X \to \mathbb{R}$  on a topological space $\X$, denote the sublevel set by $\X_a = f\inv(-\infty,a]$.
For example, given a point cloud $X \subseteq \mathbb{R}^d$, one can define  $f:\mathbb{R}^d \to \mathbb{R}$ as $f(y)= \inf\limits_{x \in X}\|x-y\|$.
Such a function induces a filtration  $\X_a \subseteq \X_b$,  $a \leq b$, on $\X$ and
applying $k$-dimensional homology  yields the persistence module $(H_k(\X_a),\phi_a^b)$.
Namely, the collection of vector spaces\footnote{Homology is computed with coefficients in a field $\kk$.} $H_k(\X_a)$ with induced maps $\phi_a^b: H_k(\X_a) \to H_k(\X_b)$ for all $a \leq b$.
In full generality, persistence modules can simply be viewed as a collection of vector spaces and linear maps $\VV = (V_a, \phi_a^b)$ where $\phi_a^b:V_a \to V_b$, $\phi_a^a = \id_{V_a}$, and $\phi_b^c\phi_a^b = \phi_a^c$.

Under  appropriate tameness conditions, a persistence module can be decomposed uniquely.
The pieces of the decomposition are called \textit{interval modules}; these are persistence modules $\II_{U} = (I_a,i_a^b)$
where $U \subset \mathbb{R}$ is an (open, closed or half-open)  interval with end-points $-\infty \leq r \leq s \leq \infty $,
 $I_a = \kk$ if $a \in U$ and 0 otherwise.
The maps $i_a^b$ are identities whenever possible.
A persistence module $\VV = (V_a,\phi_a^b)$ is called pointwise-finite if $V_a$ is finite dimensional for every $a$.
Every pointwise-finite persistence module decomposes uniquely as a direct sum of interval modules, $\VV = \bigoplus_{U \in \AA}\II_{U}$
\cite{crawley2015decomposition}.
This decomposition is often visualized as a persistence diagram as seen in the center panel of Fig.~\ref{fig:PersDgmExample}.
The diagram consists of a point at $(r,s) \in (\mathbb{R}\cup \{\pm\infty\})^2$ for the endpoints of each $U \in \AA$.
We remark that diagonal points $(r,r)$ will be discarded  as they encode non-persistent features (these are called ephemeral modules),
and are not detected by our template functions $f \in C_c(\W)$ for  in this case $\mathsf{supp}(f) \subset \W$ does not intersect  the diagonal.
For the same reason, points $(r,s)$ with $\{r,s\} \cap \{\pm \infty\} \neq \emptyset$ (i.e., with infinite persistence) will also be discarded.

For simplicity of notation and definitions, we will be working with positive diagrams:  those with $r \geq 0$ for each point $(r,s)$ in the persistence diagram.
Note that we can always convert a finite set of persistence diagrams to have this assumption by shifting the index by adding the minimum birth value.
For infinite sets, we can reindex using any positive, monotone increasing function of the original index parameter to achieve the same results.

Certain visualizations will be on birth-lifetime coordinates, consisting of a point at $(r,s-r) \in \mathbb{R}^2$ for each interval $U \in \AA$ with endpoints (either closed or open) $r\leq s$.
See the right of Fig.~\ref{fig:PersDgmExample} for an example.
We note that the methods here developed also apply to zigzag persistence modules.
These  consist of vector spaces $V_a$ and
 linear maps between them that can point in either direction: i.e.,  $V_a \rightarrow V_b$ or $V_a \leftarrow V_b$ for $a\leq b$.
It turns out that Zigzag modules can also be decomposed as sums of (zigzag) intervals modules \cite{carlsson2010zigzag}, which in turn
can be represented as   persistence diagrams.

\begin{figure}[tb]
	\centering
	\includegraphics[width = \textwidth]{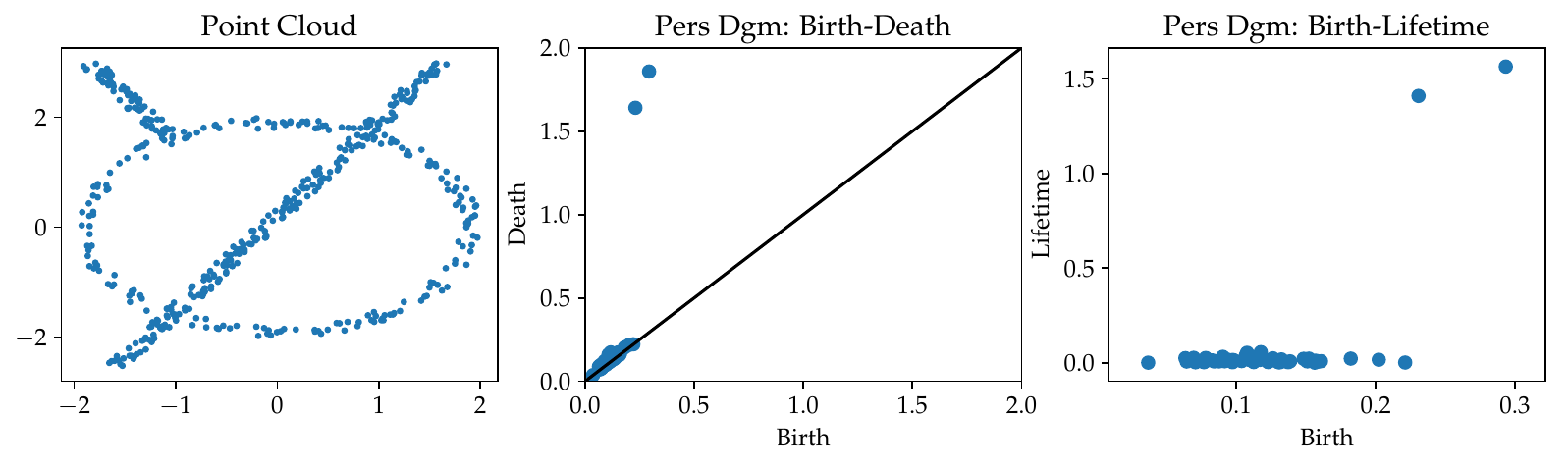}
	\caption{An example point cloud is shown at left, with its persistence diagram shown in the middle.  At right, we show the conversion of the persistence diagram into the birth-lifetime plane which is used throughout this paper. }
	\label{fig:PersDgmExample}
\end{figure}

\subsection{The space of persistence diagrams}
A persistence diagram $D$, then, can be thought of as a collection of points
\[
S \subset   \{(x,y) \in \mathbb{R}^2 \mid 0 \leq x < y\}
\]
with a notion of multiplicity, which we write as a function
$$\mu:S \to \N = \{1,2,\ldots\}.$$
We will often write $D = (S,\mu)$.
In order to make statements about the structure of the space of persistence diagrams we will need  a few notions.

\begin{definition}
	\label{defn:MultiplicityOfDgm}
Given $D = (S,\mu)$ and $U \subset \mathbb{R}^2$,
 the \textbf{multiplicity of $D$ in $U$} is
 \begin{equation*}
 \Mult(D,U) =
 \begin{cases}
 % \displaystyle{
 \sum\limits_{\xx \in S \cap U }
\mu(\xx)
% }
& \text{ if this is finite, or}\\
\infty & \text{ else.}
 \end{cases}
\end{equation*}
\end{definition}
Other common notions  we will use repeatedly are as follows.
The diagonal is denoted $\Delta := \{(x,x) \in \mathbb{R}^2 \mid x \geq 0 \}$, and
the wedge is
\[
\W := \left\{(x,y) \in \mathbb{R}^2 \mid 0 \leq x < y\right\}.
\]
Note that the boundary of $\W$ is $\Delta$, but it is not included in $\W$.
The persistence of a point $\xx = (x,y) \in \W$ is $\pers(\xx) = y-x$, and
the portion of $\W$ where persistence is greater than $\varepsilon$ is
\[	
\W^\varepsilon := \{\xx \in \W \mid \pers(\xx) > \varepsilon\}.
\]
Note that the ($\varepsilon$-diagonal) lower boundary is not included;
in order to do so we write
\[
\overline{\W^\varepsilon} =  \{\xx \in \W \mid \pers(\xx) \geq \varepsilon \}.
\]
If we want to   work with the portion of  points in $D = (S,\mu)$ in a  region $U \subseteq \mathbb{R}^2$, we write
\[
D \cap U := (S \cap U, \mu\vert_{S\cap U}).
\]
We further abuse notation by writing
	$D \subset U$
if $S \subset U$.
If $S = \emptyset$, we follow the set-theoretic convention
$\mu = \emptyset$ and denote by $\varnothing= (\emptyset, \emptyset)$ the resulting (empty) persistence diagram.
For the sake of figures, we sometimes plot persistence diagrams in the birth-lifetime plane.
That is, we plot $\xx = (x,y) $ at the point $(x, y-x) = (x, \pers(\xx))$.
In this representation, $\Delta$ gets mapped to the $x$-axis.  It should be noted that this transformation is different from the rotation used by \cite{Bubenik2015} and \cite{Adams2017}.
With these notions we define the space of persistence diagrams as follows.

\begin{definition}
The \textbf{space of persistence diagrams}, denoted $\DD$,
 is the collection
of pairs $D=(S,\mu)$ where:
\begin{enumerate}
  \item $S\subset \W$ is   the \textbf{underlying set of $D$},
  and $\mu: S \rightarrow \mathbb{N} = \{1,2,\ldots \}$ encodes the  \textbf{multiplicity}
  $\mu(\xx)\in \N$   of each $\xx \in S$. \\
  \item
 $\Mult(D,\W^\varepsilon)< \infty$ for any $\epsilon > 0$.
\end{enumerate}
The \textbf{space of finite persistence diagrams} is $
\DD_0 := \{(S,\mu) \in \DD \mid S \mbox{ is finite}\}
$.
\end{definition}
Finite persistence diagrams  were the first  to appear in the literature,
 and many current papers implicitly assume finiteness.
 We do not do so here, as  diagrams with infinitely many points can be used to encode fractal behavior---e.g., in attractors from dynamical systems.
Since we will be interested in studying subsets of $\DD$, we will extend Definition~\ref{defn:MultiplicityOfDgm} as follows.
\begin{definition}
Given $\mathcal{S}\subset\DD$ and $U \subset \mathbb{R}^2$,
then the total multiplicity of $\mathcal{S}$ in $U$ is
 \begin{equation*}
 \Mult(\mathcal{S},U) =
 \begin{cases}
 \displaystyle{
 \sum_{D \in \mathcal{S}}
\Mult(D,U)
}
& \text{ if this is finite, or}\\
\infty & \text{ else.}
 \end{cases}
\end{equation*}
In particular, for each $D\in \mathcal{D}$ we write
$\Mult(D,U)$ instead of $\Mult(\{D\}, U)$.
\end{definition}

\subsection{The Bottleneck distance}

The space of persistence diagrams can be endowed with a distance, which we now define.
To each persistence diagram $D = (S,\mu)$ one can associate a set
\begin{equation}
	\label{eq:OtherPersNotation}
S_\mu :=
\big\{
(\xx, k ) \mid
\xx \in S \mbox{ and } 1\leq k \leq \mu (\xx)
\big\}
\end{equation}
obtained by replicating the elements of $S$ and decorating them with integer labels
according to their multiplicity.
A \textbf{partial matching }between two persistence
diagrams $(S,\mu), (T,\alpha)$
is a
bijection  (with notation as in eq (\ref{eq:OtherPersNotation}))
\[
\begin{array}{rccc}
\M : & A & \longrightarrow & B\\
& \mathbin{\rotatebox[origin=c]{270}{$\subseteq$}} && \mathbin{\rotatebox[origin=c]{270}{$\subseteq$}}\\
& S_\mu && T_\alpha
\end{array}
\]
between a subset $A$ of $S_\mu$ and a subset $B$ of $T_\alpha$.
If $(\yy,n) = \M(\xx, k)$ we say that $(\xx,k)$ is matched
with $(\yy,n)$ and, conversely, that $(\yy,n)$ is matched with $(\xx,k)$.
If $(\zz,m)$ is in either $S_\mu \setminus A $ or $
T_\alpha \setminus B$, then we call it unmatched.
Given $\delta >0$, a partial matching $\M$
between  $(S,\mu)$ and $(T,\alpha)$
is a \textbf{$\delta$-matching} if two things happen:
\begin{enumerate}
  \item If $(\xx,k) \in S_\mu$ and $ (\yy,n) = \M(\xx, k)$ are matched, then  $\|\xx - \yy\|_\infty < \delta$,
   where  $\|(x_1,x_2)\|_\infty = \max\{\lvert x_1\rvert,\lvert x_2\rvert\}$ denotes the $L^\infty$ norm on $\mathbb{R}^2$.\\
  \item If $(\zz,m) \in S_\mu \cup T_\alpha$ is unmatched,  then $\pers(\zz) < 2\delta$.
\end{enumerate}

\begin{definition}
	\label{defn:bottleneck}
The \textbf{bottleneck distance}, $d_B : \DD \times \DD \longrightarrow [0,\infty)$,
is given by
\[
d_B(D,D') :=
\inf
\big\{
\delta > 0 \mid \mbox{there is a $\delta$-matching between $D$ and $D'$ }
\big\}
\]
\end{definition}
It has been shown that $d_B$ defines a metric on $\DD$ \cite{Cohen-Steiner2007}, and that $\DD$ is the metric completion of $\DD_0$ \cite{blumberg2014robust}.

For simplicity (though without loss of generality) we assume in this paper that all persistence points are finite; i.e.~for each point $(b,d)$ in the diagram, the lifetime $d-b$ is finite.
That is, we discount homological features with infinite lifetime which occur when $d = \infty$.
The assumptions on $\DD$ make it so that the bottleneck distance is still finite between diagrams in $\DD$.
In particular, this comes from the triangular inequality and the fact that
\[
d_B(D,\varnothing) = \tfrac{1}{2}\max \{\pers(\xx) \mid \xx \in S\} \;\;\;\; , \;\;\; \mbox{ for all } D= (S,\mu)\in \DD.
\]

\section{Compactness in $\mathcal{D}$}
\label{sec:Compact}
Our first contribution is Thm.~\ref{thm:Compactness}, which gives a criterion for characterizing  (relatively) compact sets in $(\DD, d_B)$.
This work can be viewed in parallel to Theorem 21 of \cite{Mileyko2011}, which does the same using the related Wasserstein distance
$d_{W_p}$ for persistence diagrams. For other structural properties of families of persistence modules see  \cite{Bubenik2018b}.

\begin{figure}%[!htb]
	\centering
	\includegraphics[width = .5\textwidth]{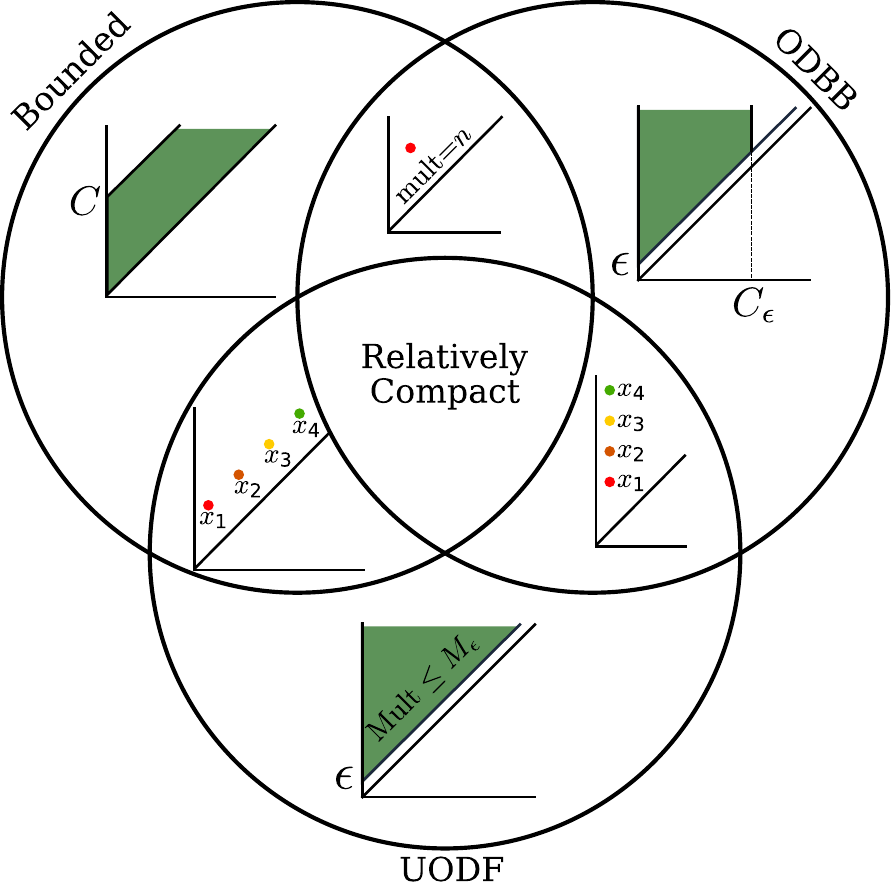}
	\caption{The three criterion for compact sets with examples given on intersections.}
	\label{fig:CompactExamples}
\end{figure}

\begin{definition}
A subspace of a topological space is \textbf{relatively compact} if its closure is compact.
\end{definition}
In what follows we will provide  a criterion to check whether a subset $\Scal \subseteq \DD$ is relatively compact.
Specifically, we will show that a set is relatively compact if it satisfies the following three properties.

%------------
%------------
\subsection{Bounded}
%------------
%------------
% \begin{definition}[Bounded]
The first property of interest is boundedness.
A subset of a metric space is said to be bounded if it is contained in an open ball of finite radius.
% \end{definition}
Let
\[
B_C(D) := \{D' \in \DD \mid d_B(D,D') < C\}
\]
denote the ball of radius $C > 0$ about the diagram $D$.
% \begin{remark}
In particular, it can be seen from the definition that   $\mathcal{S} \subset \mathcal{D}$ is bounded
if and only if there exists $C > 0 $
so that $\mathcal{S} \subseteq B_C(\varnothing)$.
% \end{remark}
\begin{proposition}\label{prop:RelCpcBounded}
Relatively compact subsets of $(\DD, d_B)$  are bounded.
\end{proposition}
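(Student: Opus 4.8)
The plan is to prove the contrapositive of the standard fact that compact (hence totally bounded, hence bounded) metric subspaces are bounded; but since "relatively compact" means the closure is compact, and a compact metric space is bounded, it suffices to observe that a subset of a bounded set is bounded. So the real content is just: a compact metric space has finite diameter, and if $\overline{\Scal}$ is compact then $\Scal \subseteq \overline{\Scal}$ inherits that bound. First I would recall that $(\DD, d_B)$ is a metric space (cited above from \cite{Cohen-Steiner2007}), so all the usual metric-space machinery applies.

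Concretely, suppose $\Scal \subseteq \DD$ is relatively compact, i.e., $\overline{\Scal}$ is compact. Fix the basepoint $\varnothing \in \DD$. The collection of open balls $\{B_n(\varnothing)\}_{n \in \N}$ is an open cover of $\DD$, hence of the compact set $\overline{\Scal}$, so there is a finite subcover; since these balls are nested ($B_n(\varnothing) \subseteq B_{n+1}(\varnothing)$), this means $\overline{\Scal} \subseteq B_N(\varnothing)$ for some $N \in \N$. Then in particular $\Scal \subseteq B_N(\varnothing)$, so $\Scal$ is contained in an open ball of finite radius, i.e., $\Scal$ is bounded. (Alternatively: a compact subset of a metric space is totally bounded, so it is covered by finitely many balls of radius $1$; the union of finitely many bounded sets is bounded, and any subset of a bounded set is bounded.)

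There is essentially no obstacle here — this is a soft, general-topology argument that uses nothing special about persistence diagrams beyond the fact that $d_B$ is a genuine metric and that the empty diagram $\varnothing$ is available as a convenient center. The only thing worth being slightly careful about is the definition of "bounded" in use: the paper defines it as "contained in an open ball of finite radius," and then notes that bounded $\Scal$ can always be taken inside $B_C(\varnothing)$ for some $C$, so phrasing the conclusion in terms of $B_N(\varnothing)$ matches the intended normalization and sets up cleanly for the later characterization in \cref{thm:Compactness}.
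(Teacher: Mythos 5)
Your proof is correct and is essentially the same finite-subcover argument as the paper's: the paper covers $\overline{\Scal}$ by radius-$1$ balls centered at its own points and then bounds pairwise distances, while you use the nested cover $\{B_n(\varnothing)\}_{n\in\N}$, which is legitimate because the paper has already established that $d_B(D,\varnothing)$ is finite for every $D\in\DD$. Your variant has the mild advantage of directly producing the normalization $\Scal\subseteq B_N(\varnothing)$ that the paper invokes later.
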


\begin{proof}
Let $\Scal \subseteq \DD$ be relatively compact.
To see that $\Scal$ is in fact bounded,
consider the cover
$\left\{B_1(D) \mid D\in \overline{\Scal}\right\}$
by open balls of radius 1
and let $\{B_1(D_i)\}_{i=1}^N$ be a finite subcover.
If
\[
C > 1 + \max\{d_B(D_j, \varnothing) \mid 1\leq  j \leq N\}
\]
then
it follows that $\Scal \subset \overline{\Scal}\subset B_{C}(\varnothing)$,
as claimed.

\end{proof}

Note that this proof works for a general metric space, but we work in $\DD$ for clarity.

%------------
%------------
\subsection{Off-diagonally birth bounded}
%------------
%------------
The second property of interest controls   persistence diagrams with unbounded birth.

\begin{definition}\label{def:ODBB}%[Off-diagonally birth bounded]
A set $\mathcal{S} \subset \mathcal{D}$ is said to
be \textbf{off-diagonally birth bounded (ODBB)} if for
every $\varepsilon > 0$ there exists a constant $C_\varepsilon \geq 0$
so that if $\xx \in S \cap \overline{\W^\varepsilon}$ (i.e., $\pers(\xx)) \geq \varepsilon$) for $(S,\mu) \in \mathcal{S}$,
 % and $\pers(x,y) \geq  \varepsilon$
 then $\birth(\xx) \leq C_\varepsilon$.

\end{definition}
 See Fig.~\ref{fig:CompactExamples} for a visualization of the notation.

\begin{proposition}\label{prop:RelCpcBirthBounded}
Relatively compact subsets of $(\DD,  d_B)$ are ODBB. %off-diagonally birth bounded.
\end{proposition}
\begin{proof}
By way of contradiction, assume $\mathcal{S}\subseteq \DD$ is relatively compact but not ODBB.
% off-diagonally birth bounded,
Then there exist
(i) $\varepsilon > 0 $;
(ii) a sequence $\{D_n\}_{n\in \N} \subseteq \mathcal{S}$  with  $D_n = (S_n , \mu_n)$;
(iii) a fixed diagram $D = (S,\mu)\in \DD$
such that
\begin{equation*}
\lim\limits_{n \to \infty }D_n = D,
\end{equation*}
which exists because $\overline \Scal \subset \DD$ is compact; and
(iv) a chosen point in each diagram $D_n$, namely $\xx_n \in S_n$,
with
% $y_n \geq  x_n + \varepsilon$ and
$\pers(\xx_n) \geq \varepsilon$ for all $n\in \N$, and
$\lim\limits_{n \to \infty} \birth(\xx_n) = \infty$.

Let $\delta < \frac{\varepsilon}{2}$ and let $N\in \N$ be large enough so that
$d_B(D_n ,D) < \delta$ for all $n\geq N$.
For each $n \geq N$ fix a $\delta$-matching
% $\gamma_n : D_n \longrightarrow D $
\[
\begin{array}{rccc}
\gamma_n : & A_n & \longrightarrow & B_n\\
& \mathbin{\rotatebox[origin=c]{270}{$\subseteq$}} && \mathbin{\rotatebox[origin=c]{270}{$\subseteq$}}\\
& (S_{n})_{\mu_n} && S_\mu
\end{array}
\]
Since $\xx_n \in S_n$  has $\pers(\xx_n) \geq \varepsilon$ and $\delta < \varepsilon/2$, then $(\xx_n,1) \in A_n$.
Let $\yy_n \in S$ be such that $\gamma_n(\xx_n,1) = (\yy_n, k_n)$.
As $\gamma_n$ is a $\delta$-matching, then  $\|\xx_n - \yy_n\|_\infty < \delta$ which means, in particular, that $\pers(\yy_n) \geq \delta$.
Hence $\{\yy_n\}_{n \in \N} \subseteq S$ is an infinite set in $ \W^\delta$, contradicting   $D \in \DD$.

\end{proof}

%------------------
%------------------
\subsection{Uniformly off-diagonally finite}
%------------------
%------------------
The final property of interest controls the multiplicity of points across all diagrams:

\begin{definition}%[Uniformly off-diagonally finite]
\label{def:UODF}
A set $\mathcal{S} \subset \mathcal{D}$ is said to
be \textbf{uniformly off-diagonally finite (UODF)} if for every
$\varepsilon > 0$ there exists $M_\varepsilon \in \N$
so that
\[
\Mult\left( D,\overline{\mathbb{W}^{\varepsilon}} \right)
\leq \;M_\varepsilon
\]
for all $D\in \mathcal{S}$.
\end{definition}
Again, see Fig.~\ref{fig:CompactExamples} for a visualization of the notation.
\begin{proposition}\label{prop:RelCpcFinite}
Relatively compact subsets of $\mathcal{D}$ are uniformly off-diagonally finite.
\end{proposition}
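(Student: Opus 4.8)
The plan is to argue by contradiction, mirroring the structure of the proof that relatively compact sets are ODBB. Suppose $\Scal \subseteq \DD$ is relatively compact but fails to be UODF. Then there is a fixed $\e > 0$ such that no bound $M_\e$ works; that is, for every $M \in \N$ there is a diagram in $\Scal$ with more than $M$ points (counted with multiplicity) in $\overline{\W^\e}$. Choosing $M = n$ for each $n$, I extract a sequence $\{D_n\}_{n \in \N} \subseteq \Scal$ with $\Mult(D_n, \overline{\W^\e}) \geq n$. By relative compactness (so $\overline{\Scal}$ is compact metric, hence sequentially compact) I may pass to a subsequence, still called $D_n$, converging to some $D = (S,\mu) \in \overline{\Scal}$.

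Next I would use the previously established facts to locate the ``missing mass'' and derive a contradiction with $D \in \DD$. Fix $\delta < \e/4$ and take $N$ large enough that $d_B(D_n, D) < \delta$ for all $n \geq N$. For each such $n$ there is a $\delta$-matching $\gamma_n$ between $D_n$ and $D$. Every point $\xx$ of $D_n$ lying in $\overline{\W^\e}$ has $\pers(\xx) \geq \e > 2\delta$, so it cannot be left unmatched by a $\delta$-matching; hence each of the (at least $n$) labeled copies in $(S_n)_{\mu_n} \cap \overline{\W^\e}$ is matched to a labeled copy of some point $\yy \in S$ with $\|\xx - \yy\|_\infty < \delta$, and in particular $\pers(\yy) > \e - 2\delta > \e/2$. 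Since $\gamma_n$ is injective on labeled points, this shows $\Mult(D, \overline{\W^{\e/2}}) \geq \Mult(D, \W^{\e/2}) \geq n$ for every $n \geq N$ — wait, more carefully: it shows $\Mult(D, \W^{\e - 2\delta}) \geq n$. Letting $n \to \infty$, $\Mult(D, \W^{\e/2}) = \infty$, which contradicts the requirement in the definition of $\DD$ that $\Mult(D, \W^{\e'})$ be finite for every $\e' > 0$ (here $\e' = \e/2 > 0$). This completes the argument.

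The one point requiring a little care — and the place I'd expect the main friction — is the bookkeeping between ``points with multiplicity'' and ``labeled copies'' $S_\mu$ as in Equation~\eqref{eq:OtherPersNotation}, since $D_n$ may well have a single point of very large multiplicity rather than many distinct points. The key observation that makes this harmless is that a $\delta$-matching is by definition a bijection on \emph{labeled} copies, so $n$ labeled copies in $D_n \cap \overline{\W^\e}$ force $n$ distinct labeled copies in $D \cap \W^{\e/2}$, i.e. $\Mult(D, \W^{\e/2}) \geq n$ regardless of how that mass is distributed among underlying points. Everything else is a direct reuse of the machinery already in place for Propositions~\ref{prop:RelCpcBounded} and~\ref{prop:RelCpcBirthBounded}, namely sequential compactness of $\overline{\Scal}$ and the no-unmatched-persistent-points clause of the $\delta$-matching definition.
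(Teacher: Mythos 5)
Your argument is correct, but it takes a different route from the paper's. The paper argues by contrapositive without ever producing a limit diagram: from the failure of UODF it extracts a sequence $\{D_n\}$ with strictly increasing multiplicities in $\overline{\W^\e}$, then uses a pigeonhole argument to show that any partial matching between two distinct terms must leave some point of persistence at least $\e$ unmatched, so the pairwise bottleneck distances are bounded below and the sequence admits no convergent subsequence. You instead extract a convergent subsequence $D_n \to D \in \overline{\Scal}$ (which exists by relative compactness), push the $\geq n$ labeled copies of $D_n \cap \overline{\W^\e}$ through a $\delta$-matching into $D \cap \W^{\e-2\delta}$, and conclude $\Mult(D,\W^{\e/2}) = \infty$, contradicting the defining axiom of $\DD$. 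This is exactly the template of the paper's proof of Proposition~\ref{prop:RelCpcBirthBounded}, transplanted to the UODF setting, and your handling of the labeled-copy bookkeeping (injectivity of the matching on $S_\mu$) is the right way to make it airtight. What each approach buys: the paper's version is more self-contained in that it never invokes the local-finiteness axiom of $\DD$ and only needs the elementary fact that a uniformly separated sequence cannot converge; your version leans on that axiom but sidesteps the slightly delicate pigeonhole step (matched partners of points in $\overline{\W^\e}$ need only land in $\W^{\e - 2\delta}$, not in $\overline{\W^\e}$ itself), which your $\W^{\e-2\delta}$ bookkeeping handles explicitly and cleanly. Both are valid proofs.
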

\begin{proof}
By the contrapositive, if $\mathcal{S} \subset \mathcal{D}$ is  not UODF,
then there exist $\varepsilon > 0 $ and a sequence
$\{D_n\}_{n \in \N} \subset \mathcal{S}$
so that
\[
\Mult\left(D_n, \overline{\W^\varepsilon}\right) <
\Mult\left(D_{n+1},\overline{\W^\varepsilon}\right)
\]
for all $n\in \N$.
In particular, using the pigeonhole principle, any partial matching between $D_n$ and $D_{n+1}$ must have at least one point in $D_{n+1}$ unmatched.
As this point has persistence greater than $\varepsilon$, it follows that $d_B(D_n, D_{n+1}) \geq \varepsilon$
for every $n\in \N$
and therefore $\{D_n\}_{n\in \N}$ cannot have
a convergent subsequence.
This shows that $\overline{\mathcal{S}}$ is not compact.

\end{proof}

%--------------
%--------------
\subsection{Helpful counterexamples}
\label{sec:CounterExamples}
%--------------
%--------------
The three conditions bounded, off-diagonally birth bounded and
uniformly off-diagonally finite are independent.
Indeed, here are three examples of sets  $\mathcal{S} = \{(S_n,\mu_n) \mid n \in \N\}$ which satisfy only
two out of the three conditions.
See Fig.~\ref{fig:CompactExamples}.

% \begin{example}
\begin{enumerate}
  \item Bounded and ODBB, but not UODF.
  % Bounded and off-diagonally birth bounded but not uniformly off-diagonally finite.

   \noindent  $S_n = \{(0,1)\}$ with $\mu_{n}(0,1) = n$. \\
  \item Bounded and UODF, but not ODBB.
  % Bounded and uniformly off-diagonally finite but not off-diagonally birth bounded.

  \noindent   $S_n = \{(n, n+1)\}$ with $\mu_{n}(n,n+1) = 1$. \\
  \item UODF and ODBB, but not bounded.
  % Uniformly off-diagonally finite and off-diagonally birth bounded, but not bounded.

  \noindent   $S_n = \{(0,n)\}$ with $\mu_{n}(0,n)= 1$.
\end{enumerate}
% \end{example}

%--------------
%--------------
\subsection{Characterizing Compactness in $(\mathcal{D}, d_B)$}
\label{ssec:Compact}
%--------------
%--------------
With these definitions, we can now state our main compactness theorem.
%-
\begin{theorem}[Characterization of compactness in $(\mathcal{D}, d_B)$]
  \label{thm:Compactness}
A set $\mathcal{S} \subset \mathcal{D}$ is relatively compact in $(\DD, d_B)$
if and only if it is bounded, off-diagonally birth bounded (ODBB) and uniformly off-diagonally finite (UODF).
\end{theorem}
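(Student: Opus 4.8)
The plan is to prove both directions. The "only if" direction is already dispensed with by Propositions~\ref{prop:RelCpcBounded}, \ref{prop:RelCpcBirthBounded}, and \ref{prop:RelCpcFinite}, so the real work is the "if" direction: assuming $\Scal$ is bounded, ODBB, and UODF, show $\overline{\Scal}$ is compact. Since $(\DD, d_B)$ is a metric space, it suffices to show $\overline{\Scal}$ is sequentially compact, which reduces (by a standard argument) to showing that every sequence $\{D_n\}_{n\in\N} \subseteq \Scal$ has a Cauchy subsequence; completeness of $\DD$ (it is the metric completion of $\DD_0$) then gives a limit in $\DD$, and since the limit is a limit of points of $\Scal$ it lies in $\overline{\Scal}$.

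To extract a Cauchy subsequence, I would use a diagonalization over a sequence of persistence thresholds $\e_k \downarrow 0$. Fix $\e > 0$. By UODF there is a uniform bound $M_\e$ on $\Mult(D_n, \overline{\W^\e})$, and by ODBB together with boundedness the points of $D_n$ with persistence $\geq \e$ all lie in a fixed compact rectangle $R_\e = [0, C_\e] \times [0, 2C]$ (say), where $C$ comes from boundedness via $d_B(D,\varnothing) = \tfrac12 \max\pers$. Hence, for each $n$, the restriction $D_n \cap \overline{\W^\e}$ is a configuration of at most $M_\e$ points (with multiplicity) in the compact set $R_\e \cap \overline{\W^\e}$. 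The space of such bounded multiset configurations of size $\leq M_\e$ in a compact metric space is itself compact in the Hausdorff/bottleneck-type metric (this is essentially $\DD_0$ restricted to a compact region with a cap on total multiplicity, which is a classical compactness fact — one can phrase it as a quotient of $(R_\e)^{M_\e}$ by the symmetric group, padding with diagonal points). So along a subsequence the truncations $D_n \cap \overline{\W^\e}$ converge. Iterating over $\e_k = 1/k$ and diagonalizing produces one subsequence $\{D_{n_j}\}$ such that $D_{n_j} \cap \overline{\W^{1/k}}$ converges for every $k$.

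It remains to check this diagonal subsequence is Cauchy in $d_B$. Given $\eta > 0$, pick $k$ with $1/k < \eta$. For $i, j$ large, the truncations to $\overline{\W^{1/k}}$ are within $\eta/2$ in bottleneck distance, so there is an $(\eta/2)$-matching between $D_{n_i} \cap \overline{\W^{1/k}}$ and $D_{n_j} \cap \overline{\W^{1/k}}$ — actually I need a slightly more careful threshold choice to ensure the matching is compatible with the unmatched-point condition. Concretely: choose $k$ with $2/k < \eta$; points of either diagram with persistence $< 1/k$ have persistence $< 2 \cdot (\eta/2)$ so they may be left unmatched in an $(\eta/2)$-matching; the points with persistence $\geq 1/k$ are handled by the convergent truncations. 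Extending the matching of the high-persistence parts by leaving all low-persistence points unmatched yields an $\eta$-matching between $D_{n_i}$ and $D_{n_j}$, so $d_B(D_{n_i}, D_{n_j}) < \eta$. This gives the Cauchy property, and completeness finishes the proof.

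The main obstacle I expect is the second step: making precise and proving the compactness of the space of capped-multiplicity configurations in a compact region of $\W$ under bottleneck distance, and then stitching the truncation limits together coherently across the scales $\e_k$ so that the limiting object is genuinely a persistence diagram in $\DD$ (i.e.\ verifying its $\Mult(\cdot, \W^\e)$ is finite for each $\e$ — which it is, since it is bounded by $M_\e$ inherited from the uniform bound). A subtle point is that a point at persistence exactly $1/k$ might drift below the threshold; using strict inequalities carefully and the freedom in choosing $\e_k$ avoids boundary pathologies. One could alternatively bypass the ad hoc configuration-compactness argument by invoking completeness of $\DD$ more directly and showing the diagonal subsequence is Cauchy via a more hands-on matching construction, but the truncation-and-diagonalize structure is the cleanest route.
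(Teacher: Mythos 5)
Your proposal is correct and follows essentially the same strategy as the paper's proof: the configuration-compactness fact you flag as the main obstacle is precisely the paper's Lemma~\ref{lemma:LocalSeqCompc} (proved there by pigeonholing on the multiplicity profile and then using compactness of $\overline{R}^{J}$ for the tuples of points), and your diagonalization followed by stitching together $\delta$-matchings that leave all low-persistence points unmatched is the paper's concluding argument almost verbatim. The only differences are cosmetic: the paper decomposes the wedge into disjoint boxes $\mathbb{B}_k$ (persistence bands $\tfrac{C}{k+1} < \pers \leq \tfrac{C}{k}$ with birth bounded by $C_k$) rather than nested truncations $\overline{\W^{1/k}}$, and it constructs the limit diagram $A = \bigsqcup_k A^k$ explicitly and proves convergence to it directly, whereas you extract only a Cauchy subsequence and invoke completeness of $(\DD, d_B)$ --- a legitimate shortcut, since the paper records that $\DD$ is the metric completion of $\DD_0$.
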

%-
Note that one direction is already provided by Propositions \ref{prop:RelCpcBounded}, \ref{prop:RelCpcBirthBounded} and \ref{prop:RelCpcFinite}, so our main job is to show that a set which satisfies the three conditions is relatively compact.
Before we prove this, however, we will need to build a bit of machinery.
First, notice that if $\mathcal{S} \subset \mathcal{D}$ is bounded ($\Scal \subset B_C(\varnothing)$) and ODBB,
then there exist a collection of finite ``boxes'' $\mathbb{B}_k$ in $\W$
 whose union contain all points in the diagrams.
Specifically, if $\Scal$ is bounded and ODBB, then there is a $C > 0 $ and
$\{C_k\}_{k \in \N} \subset \mathbb{R}_{> 0}$ non-decreasing,  so that  if
\begin{equation}\label{eq:defBox}
\mathbb{B}_k =
\left\{
\xx \in \mathbb{W} \mid  0 \leq  \birth(\xx) \leq C_k \;\;\;
\mbox{ and }\;\;\; \frac{C}{k+1} < \pers(\xx) \leq \frac{C}{k}
\right\}
\end{equation}
then  for all $ D \in \overline{\mathcal{S}}$,
% \[
$D  \subset \bigcup\limits_{k\in \N}  \mathbb{B}_k$.
% \]
While these are parallelograms in the birth-death plane, they become rectangles in the birth-lifetime plane, hence the moniker ``box'', see Fig.~\ref{fig:BoxNotationExample}.
% \end{remark}

\begin{figure}[!htb]
  \centering
    \includegraphics[width = .5\textwidth]{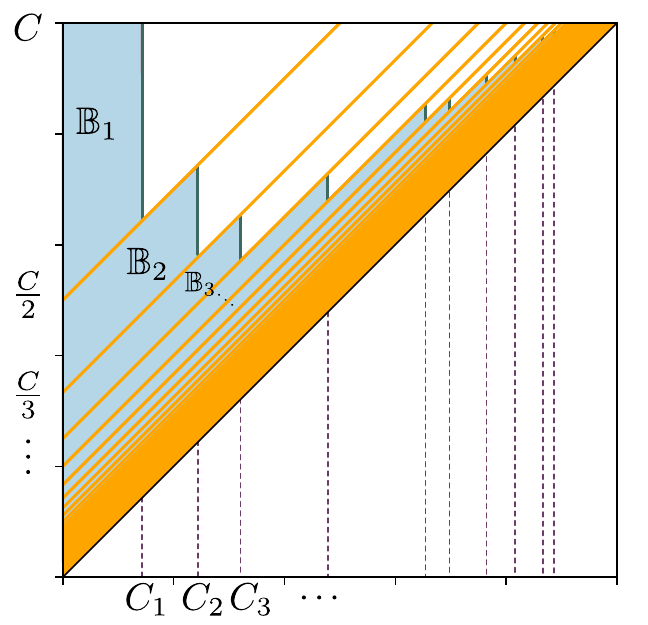}
    \caption{An example to show the notation used in Eq.~\protect\ref{eq:defBox}.}
    \label{fig:BoxNotationExample}
\end{figure}

Second, we can control the multiplicity of the diagrams in these boxes.
% \begin{remark}
  % \label{rmk:BoxFiniteNotation}
Indeed, if  $\mathcal{S} \subset \mathcal{D}$
is simultaneously bounded, ODBB, and UODF, then there exists a  sequence
$\{M_k\}_{k\in \N} \subset \N$ so that
% if $\mathbb{B}_k \subset \mathbb{W},\; k\in \N,$ is as in Eq.~\ref{rmk:BoxNotation} then
for every $D\in \overline{\mathcal{S}}$ and every $k \in \N$ we have that
\begin{equation}
  \label{eq:boxDefnMk}
\Mult(
D , \mathbb{B}_k
)\leq M_k
\end{equation}
% \end{remark}
We can use this to prove the following useful, technical lemma:

\begin{lemma}\label{lemma:LocalSeqCompc}
Let $R\subset \mathbb{W}$ be a relatively compact subset of $\mathbb{R}^2$.
If $\{D_n\}_{n\in \N} \subset \mathcal{D}$ is so that
\[
\sup\limits_{n\in \N} \;\;\Mult(D_n, R) < \infty
\]
then the restricted
sequence $\{D_n \cap R\}_{n\in \N}$ has a convergent
subsequence.
Specifically, there exists a diagram $D \in \DD$ with $D \subset \overline R$
% $D = (S,\mu)\in \mathcal{D}$
% with $S\subset \overline{R}$
and a strictly increasing function
$\varphi: \N \to \N$ so that
\[
\lim\limits_{n \to \infty} \left(D_{\varphi(n)}\cap R\right) = D.
\]
\end{lemma}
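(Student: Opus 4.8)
The plan is to identify each restricted diagram $D_n\cap R$ with a point of a fixed finite product of copies of the compact metric space $\overline R$, extract a convergent subsequence there by sequential compactness, and read off the limiting diagram $A$, checking finally that coordinatewise convergence translates into bottleneck convergence.

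\emph{Step 1 (reduction to constant multiplicity).} Since $R$ is relatively compact in $\R^2$, its closure $\overline R$ is a compact metric space. Put $M:=\sup_n \Mult(D_n,R)$, which is a nonnegative integer by hypothesis. Each $\Mult(D_n,R)$ lies in $\{0,1,\dots,M\}$, so by the pigeonhole principle some value $m$ occurs for infinitely many $n$; passing to that subsequence (and relabeling) we may assume $\Mult(D_n,R)=m$ for all $n$. Using the notation of \eqref{eq:OtherPersNotation}, enumerate the points of $D_n\cap R$ with multiplicity as an $m$-tuple $(\xx_{n,1},\dots,\xx_{n,m})\in(\overline R)^{\,m}$; the choice of order is irrelevant.

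\emph{Step 2 (the limit diagram).} The product $(\overline R)^{\,m}$ is a compact metric space, hence sequentially compact, so there are a strictly increasing $\varphi:\N\to\N$ and points $\zz_1,\dots,\zz_m\in\overline R$ with $\xx_{\varphi(n),j}\to\zz_j$ for every $j$. Note $\overline R\subset\overline{\W}$, so each $\zz_j$ with $\pers(\zz_j)>0$ actually lies in $\W$. Let $A$ be the persistence diagram whose underlying set is $\{\zz_j : 1\le j\le m,\ \pers(\zz_j)>0\}\subset\W\cap\overline R$, with the multiplicity of each such point equal to the number of indices $j$ realizing it; the $\zz_j$ on $\Delta$ are discarded. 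Then $A$ is finite, so $A\in\DD_0\subset\DD$, and $A\subset\overline R$, as required.

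\emph{Step 3 (bottleneck convergence).} Set $\delta_n:=\max_{1\le j\le m}\|\xx_{\varphi(n),j}-\zz_j\|_\infty$, so $\delta_n\to 0$. Fix any $\delta>\delta_n$. Define a partial matching between $D_{\varphi(n)}\cap R$ and $A$ by sending, for each $j$ with $\zz_j\in\W$, the labeled point corresponding to $\xx_{\varphi(n),j}$ to the one corresponding to $\zz_j$, distributing the integer labels of \eqref{eq:OtherPersNotation} so that this is a bijection onto all of $A$; leave every remaining $\xx_{\varphi(n),j}$ (those with $\zz_j\in\Delta$) unmatched. Matched pairs are within $\ell^\infty$-distance $\delta_n<\delta$. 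An unmatched $\xx_{\varphi(n),j}$ has both coordinates within $\delta_n$ of the single coordinate of $\zz_j\in\Delta$, so $\pers(\xx_{\varphi(n),j})<2\delta_n\le 2\delta$, and $A$ itself has no unmatched points. Hence this is a $\delta$-matching, so $d_B(D_{\varphi(n)}\cap R,A)\le\delta_n$, and letting $n\to\infty$ gives $\lim_{n\to\infty}(D_{\varphi(n)}\cap R)=A$.

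The routine parts are Step 1 (pigeonhole) and the compactness invocation in Step 2. The main obstacle is the bookkeeping in Steps 2 and 3: limits $\zz_j$ of off-diagonal points can collapse onto $\Delta$, so they must be dropped from $A$ and accounted for as unmatched points of small persistence, while the surviving $\zz_j$ must be matched with exactly the right multiplicity. Both hypotheses are essential here: relative compactness of $R$ in $\R^2$ supplies sequential compactness of $\overline R$, and the uniform bound on $\Mult(D_n,R)$ is what lets us work inside the fixed-size product $(\overline R)^{\,m}$ rather than an unbounded product.
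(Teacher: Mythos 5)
Your proof is correct and follows essentially the same strategy as the paper's: pigeonhole to fix the combinatorics of the restricted diagrams, sequential compactness of a finite power of $\overline R$ to extract the limit points, and an explicit matching to verify bottleneck convergence. You are in fact slightly more careful than the paper on one point: you explicitly discard limit points that collapse onto the diagonal $\Delta$ and account for them as unmatched points of persistence less than $2\delta$, whereas the paper's proof tacitly assumes every limit $\xx_j$ is off-diagonal (harmless in the application, where $R=\mathbb{B}_k$ has closure bounded away from $\Delta$, but an omission for the lemma as stated).
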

\begin{proof}
Let
 $M\in \N$  be so that
 \[
 \Mult(D_n , R) \leq M
 \]
for every $n\in \N$, which exists by hypothesis.
For each $n\in \N$, let $\mathbf{m}_{n} \in \{0, \ldots, M\}^M$ be the vector having as entries the integers
$
\mu_n(\xx)$ for $\xx \in D_n \cap R
$,
sorted in descending order and padded with zeros at the end as necessary.
The pigeonhole principle   implies that there is a   vector
$\mathbf{m} \in \{0,\ldots, M\}^{M}$ which repeats infinitely often.
That is, there is
a strictly increasing function $\phi: \N \longrightarrow \N$ such that $\mathbf{m} = \mathbf{m}_{\phi(n)}$ for all $n \in \N$.

If $\mathbf{m}$ is the zero vector, then $D_{\phi(n)}\cap R = \varnothing$ for all $n\in \N$, and thus we can let
$\varphi = \phi$,  $D = \varnothing$, which trivially satisfy
\[
\lim\limits_{n\to \infty} (D_{\varphi(n)} \cap R) = \varnothing.
\]

If on the other hand $\mathbf{m}$ is nonzero, then let $J\geq 1$ be so that
\[
\mathbf{m} = (m_1,\ldots, m_{J},0,\ldots, 0)
\]
and $M\geq m_1 \geq m_2 \geq \cdots \geq m_{J} >0$.
For $n\in \N$,
order the underlying set of $D_{\phi(n)} \cap R$ as $\{\xx^n_1,\cdots ,\xx^n_J\}$
% let $\xx^{n}_{j}\in D_{\phi(n)} \cap R$ be
in such a way that
$\mu_{\phi(n)}(\xx^{n}_{j}) = m_j$ for all $j=1,\ldots, J$ .
Then the collection
$\{(\xx^{n}_{1},\ldots, \xx^{n}_{J})\}_{n \in \N}$
is an infinite sequence in $R^{J}$
which, by compactness of $\overline{R}$, has an accumulation point
$(\xx_1,\ldots,\xx_{J})\in \overline{R}^J$.
Thus, let $\psi: \N \longrightarrow \N$ be strictly increasing
with the property that
\[
\lim\limits_{n\to \infty}
\left(\xx^{\psi(n)}_{1},\ldots, \xx^{\psi(n)}_{J}\right)
 = (\xx_{1},\ldots, \xx_{J}).
\]
Define  $D$ as the disjoint union
\[
D
=
\bigsqcup_{j=1}^{J}
\left(
\left\{\xx_j\right\}, m_j
\right)
\]
where $(\{\xx_j\}, m_j)$ is the persistence diagram having one off-diagonal point at $\xx_j \in \overline{R}$ with  multiplicity $m_j$.

We contend that the subsequence of $\{D_n \cap R\}_{n\in \N}$
defined by $\varphi = \phi \circ \psi$ converges to $D$.
Indeed, given $\varepsilon > 0$,  let $N\in \N$ be so that
$n\geq N$ implies
\begin{equation}
  \label{eq:lemmaMaxLinfty}
\max\limits_{1\leq  j \leq  J} \;
\left\|
\xx_j^{\psi(n)} - \xx_j
\right\|_\infty
< \varepsilon.
\end{equation}
Since the diagram $D_{\varphi(n)}\cap R$ has the collection
$\left\{\xx_1^{\psi(n)},\ldots, \xx_J^{\psi(n)}\right\}$ as underlying
set,  then
\[
\begin{array}{rrcl}
\gamma_n : & D_{\varphi(n)}\cap R & \longrightarrow & D \\
& \xx_j^{\psi(n)} & \mapsto & \xx_j
\end{array}
\]
defines
  a bijection (of multisets).
As no points are unmatched, Eq.~\ref{eq:lemmaMaxLinfty} implies that  this is an $\varepsilon$-matching.
Thus
$d_B(D_{\varphi(n)}\cap R ,D) \leq \varepsilon$ for all $n\geq N$, and convergence follows.

\end{proof}

With this lemma in place, we can return to the proof of the main theorem.

\begin{proof}[Thm.~\ref{thm:Compactness}]
($\Rightarrow$) If $\mathcal{S} \subset \mathcal{D}$ is
relatively compact, then it being bounded, ODBB and UODF
% off-diagonally birth bounded and uniformly off-diagonally finite
follow from
Propositions \ref{prop:RelCpcBounded}, \ref{prop:RelCpcBirthBounded}
and \ref{prop:RelCpcFinite}, respectively.

($\Leftarrow$) Let $\mathcal{S} \subset \mathcal{D}$ be bounded, ODBB, and UODF.
% off-diagonally birth bounded and uniformly off-diagonally finite.
Fix $\{M_k\}_{k\in \N} \subset \N$ and
$\mathbb{B}_k \subset \mathbb{W}$, $k\in \N$, as in Eq.~\ref{eq:defBox} and \ref{eq:boxDefnMk},
% Remark \ref{rmk:BoxFiniteNotation}
and let
$\{D_n\}_{n\in \N} \subset \overline{\mathcal{S}}$
be arbitrary.
We will use Lemma \ref{lemma:LocalSeqCompc}  inductively to
construct a sequence $\varphi_1, \varphi_2,\ldots, \varphi_k ,\ldots $
of strictly increasing functions $\varphi_k : \N \longrightarrow \N$
so that if $\Phi_k = \varphi_1 \circ \cdots \circ \varphi_k$,
then the subsequence of restricted diagrams
\[
\left\{
D_{\Phi_k(n)}\cap
 \mathbb{B}_k
\right\}_{n\in \N}
\]
converges to a diagram  $D^k \subset \overline{\mathbb{B}}_k$,
 for each $k\geq 1 $.
 Once we have built this, we will let
\[
\begin{array}{rccc}
\varphi: &\N &\longrightarrow & \N \\
& n& \mapsto & \Phi_n(n)
\end{array}
\]
and the main task for the proof is to  show that $\{D_{\varphi(n)}\}_{n\in \N}$ converges
to the diagram
\begin{equation*}
D =
\bigsqcup\limits_{k=1}^\infty D^k.
\end{equation*}

We now proceed inductively in $k$.
The base case follows from applying Lemma \ref{lemma:LocalSeqCompc} to the sequence $\{D_n\}_{n\in \N}$ and the relatively compact set $\mathbb{B}_1$.
This results in a strictly increasing function
$\varphi_1 : \N \to \N$ and a diagram
$D^1 \subset \overline{\mathbb{B}}_1$, so that
\[
\lim\limits_{n \to \infty} D_{\varphi_1(n)} \cap \mathbb{B}_1 = D^1.
\]
Now the inductive step.
Let $k\geq 1$ and assume that $\varphi_j : \N \to \N$
and $D^j \subset \overline{\mathbb{B}}_j$, $1\leq j \leq k$,
% $(D^1,\mu^1),\ldots (D^k, \mu^k) \in \mathcal{D}$
have been constructed in such a way that
 if
$\Phi_j = \varphi_1 \circ \cdots \circ \varphi_j$,
then
\[
\lim\limits_{n \to \infty} D_{\Phi_j(n)}\cap \mathbb{B}_j = D^j.
\]
The sequence $\{D_{\Phi_k(n)}\}_{n\in \N}$ and the set
$\mathbb{B}_{k+1}$ satisfy the hypotheses of Lemma \ref{lemma:LocalSeqCompc}.
Thus, there exist a strictly increasing function $\varphi_{k+1} : \N \longrightarrow  \N $ and a diagram   $D^{k+1} \subset \overline{\mathbb{B}}_{k+1}$
so that if $\Phi_{k+1} = \Phi_k\circ \varphi_{k+1}$, then
\[
\lim\limits_{n \to \infty} D_{\Phi_{k+1}(n)}
\cap \mathbb{B}_{k+1} = D^{k+1}.
\]

Now, let $\varphi(n) = \Phi_n(n)$, and let us show that $\{D_{\varphi(n)}\}_{n\in \N}$ converges
to
$D = \bigsqcup\limits_{k=1}^\infty D^k$.
To this end, fix $\varepsilon > 0$ and let $K \in \N$ be large enough so that $\tfrac{C}{K} < \tfrac{\varepsilon}{2}$.
For each $1 \leq k \leq K$, let $N_k \in \N$ be so that
$n\geq N_k$ implies
\[
d_B\left(D_{\Phi_k(n)}\cap \mathbb{B}_k , D^k\right)
<
\frac{\varepsilon}{2}
\]
and let $N = \max\{K,N_1,\ldots, N_K\}$.
Note  that if $n > N$ and $1\leq k \leq K$, then
\[
\varphi(n) =
\Phi_k (\varphi_{k+1} \circ \cdots \circ \varphi_n (n) ).
\]
Moreover,
since
$\varphi_{k+1} \circ \cdots \circ \varphi_n (n) \geq n  > N_k$,
then
$d_B\left(D_{\varphi(n)}\cap \mathbb{B}_k, D^k\right) <
\frac{\varepsilon}{2}$.
Thus, we can assume we have an $\varepsilon/2$-matching
\[
\gamma_n^k : D_{\varphi(n)}\cap \mathbb{B}_k \longrightarrow D^k.
\]
As the $\mathbb{B}_k$'s are disjoint,  then the union of the
$\gamma_n^k$'s
yields  a bijection (of multisets)
\[
\Gamma_n^K: \; D_{\varphi(n)} \cap
\bigcup_{k \leq K} \mathbb{B}_k  \;
\longrightarrow
\bigsqcup_{k \leq K} D^k
\]
Moreover, since all points in $D_{\varphi(n)} \cap \bigcup\limits_{k>K} \mathbb{B}_k$
have persistence at most $\varepsilon/2$,
% and $\bigsqcup\limits_{ k > K} A^k$ with the diagonal,
then $\Gamma_n^K$ defines an $\varepsilon/2$-matching
between  $D_{\varphi(n)}$ and  $D$, and
hence $d_B(D_{\varphi(n)}, D) < \varepsilon$ for all $n > N$.

\end{proof}

\subsection{Bottleneck vs Wasserstein Compactness}\label{sec:WassVSBottle}
The Wasserstein distance $d_{W_p}$, for $p\in \N$, is another common measure of similarity between persistence diagrams.
It is given by
\begin{equation}\label{eq:WassDist}
d_{W_p}(D,D') \;\;:=\;\;
\inf\limits_{\mathsf{M}}
\left(
\sum_{\xx, \xx' \atop \mbox{\tiny matched} } \|\xx - \xx'\|^p_\infty \;\;+
\sum_{\zz \atop \mbox{\tiny unmatched}}  \left(\frac{\pers(\zz)}{2}\right)^p
\right)^{1/p}
    \end{equation}
where the infimum ranges  over all partial matchings $\mathsf{M}$ between $D$ and $D'$.
One can show that $d_{W_p}$ defines a metric on the set
\[
\DD_p\; := \;\left\{D \in \DD \mid d_{W_p}(D, \varnothing) < \infty\right\}
\]
and that $(\DD_p, d_{W_p})$ is a complete  separable metric space \cite{Mileyko2011}. Moreover,
\begin{equation}\label{eq:FiltBottleneck}
\DD_{1} \subset \DD_{2} \subset \cdots \subset \DD_p \subset \cdots \subset \DD
\end{equation}
and   all  inclusions can be shown to be continuous with respect to the appropriate metrics.
In particular, if $\iota_p : \DD_p \hookrightarrow \DD$ is the inclusion map  and
$\mathcal{S} \subset \DD_p$ is relatively-compact with respect to $d_{W_p}$, then
$\iota_p\left(\overline{\mathcal{S}}\right)$ is compact (hence closed) with respect to $d_B$,
and the equality
\[
\iota_p\left(\overline{\mathcal{S}}\right)
=
\overline{
\iota_p\left(\overline{\mathcal{S}}\right)}
=
\overline{\iota_p(\mathcal{S})}
\]
shows that   relatively-compact subsets of $(\DD_p, d_{W_p})$ are also
relatively compact in $(\DD, d_B)$.
This implies that relatively-compact subsets  of $(\DD_p, d_{W_p})$ also satisfy the  three conditions of  Theorem
\ref{thm:Compactness}.
The characterization of relative-compactness in $(\DD_p, d_{W_p})$ is work of \cite{Mileyko2011}, and we will describe it next.
\begin{definition} A set $\mathcal{S} \subset \DD_p$ is \textbf{uniform} if for every $\epsilon >0 $ there exists $\alpha > 0$
so that
\[
\sum_{\xx \in S \atop \pers(\xx) < \alpha} \mu(\xx)\pers(\xx)^p \leq \epsilon
\]
for all $(S,\mu) \in \mathcal{S}$.
\end{definition}
The relevant characterization is as follows:
\begin{theorem}[\cite{Mileyko2011}]\label{thm:CompactnessWasserstin}
A set $\mathcal{S} \subset \DD_p$ is relatively compact in $(\DD_p, d_{W_p})$ if and only if it is bounded (with respect to $d_{W_p}$), off-diagonally birth bounded
(in the sense of Definition \ref{def:ODBB}) and uniform.
\end{theorem}

The next two examples illustrate how relative compactness in $(\DD, d_B)$ and $(\DD_p, d_{W_p})$
can exhibit very different behaviors, even if Theorems \ref{thm:Compactness} and \ref{thm:CompactnessWasserstin}
seem similar at first glance.\\

\noindent \textbf{Examples:}
\begin{enumerate}
  \item Let $\mathcal{S}$ be the set of persistence diagrams
    $D_{jk} = (S_{jk}, \mu_{jk})$, for $j,k \in \N$, with
    \[ S_{jk} = \left\{\left(0, \frac{1}{k n^{1/k}}\right) \Big\vert \; 1 \leq n \leq j \right\} \]
    and multiplicity function $\mu_{jk}\left(0 , \frac{1}{kn^{1/k}}\right) = 1$.
    Notice that   each $S_{jk}$ is  finite, and hence $\mathcal{S}\subset \DD_p$ for all  $p\in \N$.
    However, for $k = p$ the set $\{ D_{jp} \}_{j\in \N} \subset \mathcal{S}$
    is not bounded with respect to $d_{W_p}$, and thus $\mathcal{S}$ is not
    relatively compact in   $(\DD_p, d_{W_p})$ for any $p\in \N$.

    On the other hand, $d_B(D_{jk}, \varnothing )\leq 1$ for every $j,k\in \N$, so
    $\mathcal{S}$ is bounded with respect to $d_B$, and it  is clearly birth-bounded
    (hence ODBB). In order to see that $\mathcal{S}$ is UODF, fix $\epsilon >0$
    and let $k_0 \in \N$ be so that $\frac{1}{\epsilon} < k_0$.
    It follows that $D_{jk} \cap    \overline{\W^\epsilon} = \varnothing$ for  $k \geq k_0$
    and every $j\in \N$,
    so assume  $1 \leq k < k_0$, $j\in \N$,  and let $\left(0, \frac{1}{kn^{1/k}}\right) \in S_{jk} \cap \overline{\W^\epsilon}  $.
    Hence
    \[
    n \;\leq \; \frac{1}{(k\epsilon)^k}  \;\leq \; \frac{1}{\epsilon^k} \;< \; k_0^{k_0}
    \]
    and therefore $\Mult\left(D, \overline{\W^\epsilon}\right) < k_0^{k_0}$ for every $D\in \mathcal{S}$.
    This shows that $\mathcal{S}$ is UODF and thus relatively compact in $(\DD_,d_B)$.
  \item Let $\mathcal{S}$ be the set of persistence diagrams $D_k = (S_k,\mu_k)$, for $k\in \N$,
    with  \[S_k = \left\{(0,1), \left(0, n^{-1/k}\right)\Big\vert\; n \geq 2\right\}\]
    and multiplicity function $\mu_k(0,1)=k$, $\mu_k\left(0, n^{-1/k}\right)= 1$.
    The first thing to note is that $\mathcal{S} \subset \DD$, and that $\Mult\left(D_k, \overline{\W^{0.5}}\right) > k$ for every $k\in \N$.
    Therefore $\mathcal{S}$ is not UODF, and thus---by Theorem \ref{thm:Compactness}---it is not relatively compact in $(\DD, d_B)$.

    On the other hand, for each $p \in \N$  we have that
    $\mathcal{S} \cap \DD_p = \{D_k \mid 1\leq k  < p \}$, and thus
    \[
    \mathcal{S} = \bigcup_{p \in \N} \mathcal{S} \cap \DD_p .
    \]
    Since  $\mathcal{S} \cap \DD_p$ is  finite, then it is compact in $(\DD_p, d_{W_p})$ for every $p\in \N$.
\end{enumerate}

In summary, in order to understand relative compactness in $(\DD,d_B)$ it is not enough to
do so at each $(\DD_p, d_{W_p})$;
the class of relatively compact subsets of $(\DD,d_B)$ is in fact larger than the union over $p\in \N$ of those in
$(\DD_p, d_{W_p})$.

\subsection{Consequences of Thm.~\protect\ref{thm:Compactness}}
We note a few  consequences of our  characterization of relative compactness in $(\DD,d_B)$.
\begin{theorem}\label{thm:CompactEmptyInterior}
  Relatively compact subsets of $(\mathcal{D}, d_B)$ have empty interior.
\end{theorem}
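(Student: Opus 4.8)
To show every relatively compact $\Scal \subseteq \DD$ has empty interior, it suffices to show that no open ball $B_\delta(D)$ is contained in a relatively compact set; equivalently, that for every $D \in \DD$ and every $\delta > 0$, the ball $B_\delta(D)$ fails at least one of the three conditions of \cref{thm:Compactness}. The natural choice is to violate UODF: inside any ball around $D$ there are diagrams with arbitrarily large multiplicity at some fixed off-diagonal location, so the ball cannot be UODF, hence cannot be relatively compact, hence cannot sit inside a relatively compact set. Since a set with nonempty interior contains such a ball, this finishes the proof.

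First I would fix $D = (S,\mu) \in \DD$ and $\delta > 0$, and pick any point $\zz_0 = (x_0, y_0) \in \W$ with $0 < \pers(\zz_0) = y_0 - x_0 < 2\delta$ — for instance $\zz_0 = (0, \delta)$. For each $n \in \N$ define $D_n := D \sqcup (\{\zz_0\}, n)$, i.e.\ $D$ with $n$ extra copies of $\zz_0$ added (if $\zz_0$ already lies in $S$, increase its multiplicity by $n$ instead). I would check $D_n \in \DD$: the only new points have persistence exactly $\pers(\zz_0)$, so for $\e > \pers(\zz_0)$ the quantity $\Mult(D_n, \W^\e)$ is unchanged and finite, and for $\e \le \pers(\zz_0)$ it is the old value plus $n$, still finite; so condition (1) of the definition of $\DD$ holds. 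Next, the partial matching that is the identity on $S_\mu$ and leaves all $n$ copies of $\zz_0$ unmatched is a $\delta$-matching between $D$ and $D_n$, since every unmatched point $\zz_0$ has $\pers(\zz_0) < 2\delta$; hence $d_B(D, D_n) \le \delta$. A slightly more careful choice, $\pers(\zz_0) < 2\delta'$ for some $\delta' < \delta$, or simply $\zz_0 = (0,\delta)$ together with the strict inequality in the infimum, gives $d_B(D,D_n) < \delta$, so $D_n \in B_\delta(D)$ for every $n$.

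Now fix $\e$ with $0 < \e < \pers(\zz_0)$. Then $\zz_0 \in \overline{\W^\e}$ and $\Mult(D_n, \overline{\W^\e}) \ge n$, so there is no uniform bound $M_\e$ valid for all $D_n \in B_\delta(D)$; thus $B_\delta(D)$ is not UODF. By \cref{thm:Compactness}, $B_\delta(D)$ is not relatively compact. If $\Scal$ were relatively compact with $D$ an interior point, then $B_\delta(D) \subseteq \Scal$ for some $\delta > 0$, and $B_\delta(D)$ would be relatively compact (a subset of a relatively compact set is relatively compact, its closure being a closed subset of the compact set $\overline{\Scal}$) — contradiction. Hence $\Scal$ has empty interior.

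I don't anticipate a serious obstacle here; this is a short argument once \cref{thm:Compactness} is in hand. The only point requiring mild care is the strict-versus-nonstrict inequality in $d_B(D, D_n) < \delta$: one must ensure the added point has persistence \emph{strictly} less than $2\delta$ (not merely $\le 2\delta$) so that the identity-plus-unmatched matching is a genuine $\delta$-matching, which is why choosing $\pers(\zz_0)$ comfortably below $2\delta$ — e.g.\ equal to $\delta$ — is the cleanest route. One could alternatively violate boundedness or ODBB, but the UODF violation is the most transparent and localized.
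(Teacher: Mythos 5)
Your argument is correct. The one technical point you flagged---that the added point must have persistence strictly below $2\delta'$ for some $\delta' < \delta$ so that the identity-plus-unmatched matching certifies $d_B(D,D_n) < \delta$---is handled properly by taking $\pers(\zz_0) = \delta$, which gives $d_B(D,D_n) \leq \delta/2$. The paper uses the same basic perturbation trick (stay inside an $\e$-ball by adding a point of small persistence) but violates a different condition: it fixes the relatively compact set $\Scal$, covers the underlying sets of its diagrams by the boxes $\mathbb{B}_k$ from the proof of \cref{thm:Compactness}, and then adds to $D$ a single point of small persistence whose birth coordinate exceeds $C_k + 1$, so the new diagram lies in the ball but escapes the boxes and hence cannot belong to $\Scal$. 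You instead violate UODF by piling up multiplicity at a fixed low-persistence point, conclude via \cref{prop:RelCpcFinite} that the ball itself is not relatively compact, and finish with the hereditary property of relative compactness. Your route proves the slightly stronger statement that no open ball in $(\DD, d_B)$ is relatively compact, and it only needs one of the three necessary conditions rather than the box construction; the paper's route is more direct in that it exhibits an explicit diagram in the ball lying outside $\Scal$. Both are sound.
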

\begin{proof} Let $\mathcal{S} \subset \mathcal{D}$ be relatively compact, and let
\[
\mathbb{B}_k =
\left\{
\xx \in  \mathbb{W} \;\Big\vert \; 0 \leq \birth(\xx) < C_k \, \mbox{ and }\, \frac{C}{k+1} \leq \pers(\xx) < \frac{C}{k}
\right\}
\;\; , \;\; k \in \mathbb{N}
\]
 be a sequence of boxes   (as defined in Eq.~\ref{eq:defBox}) so that $D\subset \bigcup_k\mathbb{ B }_k$ for every $D \in \mathcal{S}$.
 Fix $D \in \mathcal{S}$.
 We will show that any open ball around $D$ contains a persistence diagram $D'$ whose underlying set
 is not in the union of these boxes.
 Indeed, given $\epsilon > 0$, there exists $k_0\in \mathbb{N}$ so that $\frac{C}{k_0} < \frac{\epsilon}{2}$,
 and if $D'$ is the persistence diagram obtained from $D$ by adding the point $ \left(C_{k_0} + 1, C_{k_0} + 1 + \frac{C}{k_0}\right)$
 with multiplicity one,  then   $d_B(D,D') < \epsilon$, but $D' \not\subset \bigcup_k\mathbb{ B }_k$.

\end{proof}

%\LM{Jose, can you decide how much extra proof is needed around these corollaries?}

Recall that a topological space is    \textit{locally compact} if every
point has an open neighborhood contained in a compact set;
said open set is called a compact neighborhood of the point.
The following corollary is a direct consequence of
Theorem \ref{thm:CompactEmptyInterior}.

\begin{corollary}\label{coro:CompactNowhereDense}
The space of persistence diagrams $(\mathcal{D}, d_B)$ is not locally compact. Moreover, no diagram $D\in \mathcal{D}$ has a compact neighborhood.
\end{corollary}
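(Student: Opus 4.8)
The plan is to derive both assertions from \cref{thm:CompactEmptyInterior} by a one-line contradiction argument, so no new machinery is needed. Suppose, toward a contradiction, that some diagram $D \in \mathcal{D}$ has a compact neighborhood; unwinding the definition recalled just above the corollary, this means there is an open set $U \subseteq \mathcal{D}$ with $D \in U$ and a compact set $K \subseteq \mathcal{D}$ with $U \subseteq K$.

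First I would upgrade this to a statement about $\overline{U}$: since $(\mathcal{D}, d_B)$ is a metric space, the compact set $K$ is closed, hence $\overline{U} \subseteq \overline{K} = K$, and a closed subset of a compact set is compact. Thus $\overline{U}$ is compact, i.e. $U$ is a \emph{relatively compact} subset of $\mathcal{D}$. On the other hand $U$ is open and nonempty (it contains $D$), so its interior is $U$ itself and is nonempty. This contradicts \cref{thm:CompactEmptyInterior}, which says every relatively compact subset of $(\mathcal{D}, d_B)$ has empty interior. Hence no diagram has a compact neighborhood, which is the ``moreover'' clause; and since local compactness would in particular demand such a neighborhood for some (indeed every) point, $(\mathcal{D}, d_B)$ is not locally compact.

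I do not expect a genuine obstacle here — the corollary is a formal consequence of the preceding theorem. The only step deserving explicit mention, rather than being left implicit, is the passage from ``$U$ is contained in a compact set'' to ``$\overline{U}$ is compact,'' which relies on compact subsets of a metric (Hausdorff) space being closed; I would state this rather than let it pass silently.
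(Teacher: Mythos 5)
Your proposal is correct and follows essentially the same route as the paper: both derive the corollary as a one-line contradiction from \cref{thm:CompactEmptyInterior}. The only cosmetic difference is that the paper applies the empty-interior theorem to the compact set $\mathcal{S}$ itself and uses monotonicity of interiors ($\mathcal{U} = \mathsf{int}(\mathcal{U}) \subseteq \mathsf{int}(\mathcal{S}) = \emptyset$), whereas you apply it to $U$ after noting $\overline{U}$ is compact; both steps are valid and equivalent in substance.
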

\begin{proof}
Let $D\in \mathcal{D}$ and suppose, by way of contradiction,
that $D$ has a compact neighborhood.
That is, that there exist an open set $\mathcal{U} \subset \mathcal{D}$ and a compact
set $\mathcal{S} \subset \mathcal{D}$ so that
$D \in \mathcal{U} \subset \mathcal{S}$.
Since taking interiors preserves the order of inclusions, then
\[
\mathcal{U} = \mathsf{int}(\mathcal{U})
\subset
\mathsf{int}(\mathcal{S})
= \emptyset
\]
which   contradicts $D\in \mathcal{U}$.

\end{proof}

As we will see next, the lack of enough compact regions in the space of persistence diagrams  also has global implications
for learning tasks.
Recall that a subset of a topological space is called \textit{nowhere dense}
if its closure has empty interior.
It follows from Corollary \ref{coro:CompactNowhereDense}
that in the space of persistence diagrams
all compact sets are nowhere dense.
Moreover, since $(\mathcal{D}, d_B)$ is complete \cite{blumberg2014robust},
then the Baire Category Theorem \cite{baire1899fonctions}---which contends that no complete metric space can be written
as the countable union of nowhere dense subsets---implies the following.

\begin{corollary}
$\mathcal{D}$ cannot be written as the countable union of compact
subsets.
\end{corollary}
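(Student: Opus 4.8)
The plan is to derive this corollary directly from the preceding two facts: (i) $(\DD, d_B)$ is a complete metric space, as established by \cite{blumberg2014robust}, and (ii) every compact subset of $\DD$ is nowhere dense, which follows from Corollary \ref{coro:CompactNowhereDense} (or more precisely from Theorem \ref{thm:CompactEmptyInterior}, since a compact set is its own closure and has empty interior). With these in hand, the statement is an immediate application of the Baire Category Theorem.

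First I would suppose, for contradiction, that $\DD = \bigcup_{n \in \N} \mathcal{K}_n$ where each $\mathcal{K}_n \subset \DD$ is compact. Each $\mathcal{K}_n$ is closed (compact subsets of a metric space, hence Hausdorff space, are closed), so $\overline{\mathcal{K}_n} = \mathcal{K}_n$, and by Theorem \ref{thm:CompactEmptyInterior} each $\mathcal{K}_n$ has empty interior. Thus $\DD$ would be a countable union of nowhere dense subsets, i.e.\ $\DD$ would be meager in itself. Since $(\DD, d_B)$ is a nonempty complete metric space, the Baire Category Theorem says $\DD$ cannot be meager in itself, a contradiction. Hence no such countable cover by compact sets exists.

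There is essentially no obstacle here — the work has already been done in Theorem \ref{thm:CompactEmptyInterior} and in the cited completeness result. The only point requiring a moment's care is that one should invoke relative compactness versus compactness consistently: the corollary asks about compact subsets, and a compact set equals its own closure, so Theorem \ref{thm:CompactEmptyInterior} (stated for relatively compact sets) applies verbatim. One might also remark, for completeness of exposition, that this sharpens the failure of local compactness (Corollary \ref{coro:CompactNowhereDense}) into a genuinely global obstruction: not only is $\DD$ not locally compact, it is not even $\sigma$-compact, so standard function-space constructions that rely on $\sigma$-compactness (for instance, certain metrizability arguments for the compact-open topology) are unavailable — which is consistent with the non-metrizability claim made in the introduction.
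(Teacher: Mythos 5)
Your proof is correct and follows exactly the paper's own route: compact subsets of $\DD$ are closed with empty interior (hence nowhere dense) by Theorem \ref{thm:CompactEmptyInterior}, and since $(\DD, d_B)$ is complete, the Baire Category Theorem rules out a countable cover by such sets. The only cosmetic difference is that you cite Theorem \ref{thm:CompactEmptyInterior} directly where the paper routes the ``nowhere dense'' observation through Corollary \ref{coro:CompactNowhereDense}; the argument is otherwise identical.
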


Many optimization tasks (e.g., gradient descent) leverage the compactness
of a space
to argue that solutions can be found as  limits to  sequential processes.
The fact that $(\DD,d_B)$ cannot be written as the countable union of compact subsets,
implies that non-convex global optimization problems in $\DD$ cannot be guaranteed
to be solvable via this type of local decompositions.
As we will see below, this also implies that it is not possible to find a metric
in $C(\DD,\mathbb{R})$
that characterizes  approximations with respect to the topology of uniform
convergence on compact sets.

\begin{definition}
Let $X,Y$ be topological spaces and let
$C(X,Y)$ denote the set of continuous functions from $X$ to $Y$.
Given  $K\subset X$ compact and $V\subset Y$ open,
let
\begin{equation*}
U(K,V) = \{f \in C(X,Y) \mid f(K) \subset V\}.
\end{equation*}
The collection
\begin{equation*}
\{U(K, V) \mid K\subset X \mbox{ compact},\;\; V\subset Y \mbox{ open}\}
\end{equation*}
forms a subbase for a topology on $C(X,Y)$, called the
\textit{compact-open topology}.
When $Y$ is a metric space,
a sequence of continuous functions $f_n : X \longrightarrow Y $, $n\in \N$,
converges to $f$ in the compact-open topology, if and only if $\{f_n\vert_K\}_{n\in \N}$
converges uniformly to $f\vert_K$ for each compact
set $K\subset X$.
\end{definition}

Since $\mathcal{D}$ cannot be written as a countable union of compact sets,
then we have the following.

\begin{corollary}\label{cor:NonMetrizable}
The compact-open topology on $C(\mathcal{D}, \mathbb{R})$ is not metrizable.
\end{corollary}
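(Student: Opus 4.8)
The plan is to leverage two standard facts: a metrizable space is first countable, and $(\DD,d_B)$ is a genuine metric space (hence completely regular), so it admits ``enough'' continuous real-valued functions. The strategy is to show that if the compact-open topology on $C(\DD,\R)$ were first countable, then $\DD$ could be written as a countable union of compact sets, which the preceding corollary forbids. To set up, I would first record that a neighborhood basis at the zero function $0\in C(\DD,\R)$ for the compact-open topology is given by the sets
\[
W(K,\e)\;:=\;U\big(K,(-\e,\e)\big)\;=\;\{\,f\in C(\DD,\R):|f(D)|<\e\ \text{for all}\ D\in K\,\},
\]
ranging over compact $K\subset\DD$ and $\e>0$; indeed, any finite intersection $\bigcap_{i=1}^{m}U(K_i,V_i)$ that contains $0$ has each $V_i\supseteq(-\e_i,\e_i)$ for some $\e_i>0$, and hence contains $W\big(\bigcup_i K_i,\ \min_i\e_i\big)$.

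Next I would assume, for contradiction, that the topology is metrizable, hence first countable at $0$, and choose a countable neighborhood basis at $0$; shrinking each member to a set of the form above yields a countable neighborhood basis $\{W(K_n,\e_n)\}_{n\in\N}$ at $0$ with every $K_n$ compact. The crux is to prove that \emph{every} compact $K\subset\DD$ is contained in some $K_n$. Given compact $K$, the set $W(K,1)$ is a neighborhood of $0$, so $W(K_n,\e_n)\subseteq W(K,1)$ for some $n$; if $K\not\subseteq K_n$, I would pick $D_0\in K\setminus K_n$ and, using that $K_n$ is closed in the metric space $\DD$ so $d_B(D_0,K_n)>0$, form the continuous function
\[
g(D)\;=\;\min\!\Big\{\,2,\ \tfrac{2\,d_B(D,K_n)}{d_B(D_0,K_n)}\,\Big\},
\]
which vanishes on $K_n$ and equals $2$ at $D_0$. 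Then $g\in W(K_n,\e_n)$ but $g\notin W(K,1)$, contradicting $W(K_n,\e_n)\subseteq W(K,1)$; so $K\subseteq K_n$.

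Finally, applying this with singletons $K=\{D\}$ gives $\DD=\bigcup_{n\in\N}K_n$ with each $K_n$ compact, contradicting the earlier corollary that $\DD$ is not a countable union of compact subsets; hence the compact-open topology on $C(\DD,\R)$ is not metrizable. I expect the only step with genuine content to be the construction of the separating function $g$: this is exactly where one uses that $d_B$ is an honest metric (equivalently, that $\DD$ is completely regular), and in a space lacking enough continuous functions the argument would fail; the rest---manipulating the compact-open subbasis and invoking the preceding corollary---is routine bookkeeping. (The statement is the $Y=\R$ instance of Arens' theorem that, for a nondegenerate metric space $Y$, $C(X,Y)$ with the compact-open topology is metrizable iff $X$ is hemicompact.)
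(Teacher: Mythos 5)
Your proof is correct, and it follows the same underlying route as the paper: metrizability of the compact-open topology would force $\DD$ to be hemicompact, hence a countable union of compact sets, contradicting the preceding corollary. The only difference is that the paper outsources the implication ``metrizable $\Rightarrow$ hemicompact'' to a citation (Example 2.2, Chapter IV of Conway), whereas you prove it from scratch via the neighborhood basis $W(K,\e)$ at $0$ and the Urysohn-type function $g$ built from $d_B(\cdot,K_n)$; this makes your argument self-contained and makes explicit exactly where the metric structure of $(\DD,d_B)$ is used. The one pedantic point worth a parenthetical is the degenerate case $K_n=\emptyset$ (where $W(K_n,\e_n)$ is all of $C(\DD,\R)$ and $d_B(D_0,K_n)$ is undefined), but there the containment $W(K_n,\e_n)\subseteq W(K,1)$ already forces $K=\emptyset$, so nothing breaks.
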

\begin{proof}
  See Example 2.2, Chapter IV, of \cite{conway2013course}.

\end{proof}

Sequential optimization tasks in machine learning
typically proceed by updating the current solution
until no further progress is made.
Solutions are often functions (e.g., a classifier) and progress
in the optimization is measured through loss functionals (e.g., like mean-squared error) and/or the distance between the current and prior state.
The latter being particularly useful in convex optimization.
Since compact-open
sequential convergence in $C(\mathcal{D}, \mathbb{R})$ cannot be measured
with a metric (Corollary \ref{cor:NonMetrizable}),
then other alternatives are needed.
We describe such methodologies in the next sections.

%--------------
%--------------
\section{Linearizing $\DD$}
\label{sec:LinearizingD}
%--------------
%--------------
%

The fact that the compact-open topology on $C(\mathcal{D},\mathbb{R})$ is not metrizable, implies that
  optimization with respect to compact convergence needs to be handled with care. The goal of this section is to provide methods for doing this.
We begin with a definition.

\begin{definition}\label{def:CoordinateSystem}
A \textbf{coordinate system} for $\mathcal{D}$ is a collection $\mathcal{F} \subset C(\mathcal{D}, \mathbb{R})$  which separates points.
That is, if $D,D'\in \mathcal{D}$ are distinct then there exists $F\in \mathcal{F}$ for which $F(D)\neq F(D')$.
\end{definition}

Of course one could take $\mathcal{F}$ to be the space of all  real-valued continuous functions on $\mathcal{D}$, but  this is an extreme case; the quality of a coordinate system is  determined by its size---the smaller the better.
The metaphor to keep in mind is Euclidean space, $\mathbb{R}^n$.
In this case, an oblique coordinate system (e.g.~Cartesian coordinates) is uniquely determined by a linear basis  for the space $\mathcal{L}(\mathbb{R}^n)$ of (continuous) linear functions from $\mathbb{R}^n$ to $\mathbb{R}$.

Our goal is to coordinatize the space of persistence diagrams by finding a continuous embedding of $\mathcal{D}$ into an appropriate topological vector space $V$,
and taking the restriction to $\mathcal{D}$ of elements from $\mathcal{L}(V)$.
The problem of  embedding persistence diagrams into topological vector spaces has received considerable
recent attention.
For instance, it is known that $(\DD_p, d_{W_p})$ does not admit an inner product
compatible with $d_{W_p}$ for any $p > 2$, and that
 $(\DD_{p},d_{W_p})$
does not admit a coarse embedding (this is weaker than being bi-Lipschitz) into any Hilbert space for any $ p>2$ \cite{wagner2019nonembeddability}.

In order to choose $V$ we will use two principles.
First, that persistence diagrams can be interpreted as Radon measures on $\mathbb{W}$ \cite{divol2021understanding,Chazal2016},
which suggests embedding $\mathcal{D}$ into the dual space of some set of continuous real-valued functions on $\W$.
Second,  that  $\mathcal{D}$ is a topological monoid:
the sum of two persistence diagrams $D,D' \in \DD$ is their disjoint union $D\sqcup D'$ as multisets,
the empty diagram $\varnothing$ is the identity, i.e.~$D \sqcup \varnothing = D$, and
the operation
$\sqcup: \DD\times \DD \longrightarrow \DD$
is associative and continuous (see also \cite{bubenik2019universality}).
In what follows we will construct an embedding $\nu:\DD\hookrightarrow V$
which recovers the measure-theoretic interpretation of persistence diagrams,
and preserves the monoidal structure of $\DD$ (see Theorem \ref{thm:Linearization}).
In addition, we will show that appropriate subsets of   $\mathcal{L}(V)$ will yield
coordinate systems for $\mathcal{D}$ (see Theorem \ref{thm:templates}), and these in turn will generate
dense subsets of $C(\mathcal{D}, \mathbb{R})$ with respect to the compact-open topology (see Theorem \ref{thm:approximation}).

%--------------
\subsection{Topological vector spaces, duals and their topologies}
%--------------
We will first review some basics of topological vector spaces, following \cite{conway2013course}.
Let $V$ be a topological vector space; that is, a vector space endowed with a topology so that addition and scalar multiplication are continuous
functions.
Its (topological) dual is the vector space
\[
V' = \{ T: V \longrightarrow \mathbb{R} \mbox{ , so that }  T \mbox{ is linear and continuous}\}
\]
In particular, if the topology on $V$
comes from a norm $\|\cdot\|_V$, then we write $V^*$
instead of $V'$.
If $V^*$ is endowed with the operator norm
\[
\|T\|_* = \sup_{ \|\vv\|_V =1}
\lvert T(\vv)\rvert,
\]
then $V^*$ is in fact a Banach space.
There are three standard topologies on $V^*$:
\begin{description}
  \item[Strong:] The strong topology is the one generated by the operator norm $\|\cdot\|_*$. A basis for open neighborhoods of a point $T\in V^*$ is given by sets of the form
    \[ B_\epsilon(T) =
    \left\{T' \in V^* \;\Big\vert\; \sup\limits_{\|\vv\|_V = 1}\lvert T(\vv) - T'(\vv)\rvert < \epsilon \right\}
    \] where $\epsilon > 0$.
    In particular, a sequence $\{T_n\}_{n\in \N} \subset V^*$ converges to $T\in V^*$ in the strong topology if and only if
$\{T_n(\vv)\}_{n\in \N}$ converges to $T(\vv)$  uniformly in $\vv \in V$.
  \item[Weak:] If $V^{**}$ denotes the dual of the normed space $(V^*, \|\cdot\|_*)$, then the weak topology on $V^*$ is the smallest topology so that every $\mathcal{T} \in V^{**}$ is continuous.
      A basis for open neighborhoods of a point $T\in V^*$ is given by sets of the form
    \[
    N(\mathcal{T}_1,\ldots, \mathcal{T}_I ; \epsilon) (T) =
\left\{
T' \in V^* \;\Big\vert\;
\max\limits_{1\leq i \leq I}
\lvert \mathcal{T}_i\left(T'\right) - \mathcal{T}_i\left(T\right)\rvert
< \epsilon
\right\}
    \]
     where $\mathcal{T}_1,\ldots, \mathcal{T}_I \in V^{**}$ and $\epsilon >0$.
In particular, $\{T_n\}_{n\in\N} \subset V^*$ converges to $T\in V^*$ in the weak topology
if and only if $\{\mathcal{T}(T_n)\}_{n\in \N}$
converges to $\mathcal{T}(T)$ for all $\mathcal{T} \in V^{**}$.
  \item[Weak-*:] The weak-* topology is the smallest topology so that for each $\vv \in V$, the resulting evaluation function
\[
\begin{array}{rccl}
e_\vv : & V^* & \longrightarrow & \mathbb{R} \\
& T& \mapsto & T(\vv)
\end{array}
\] is continuous.
A basis for open neighborhoods of $T\in V^*$ is given by sets of the form
\[
N(\vv_1,\ldots, \vv_I; \epsilon)(T) =
\left\{
T' \in V^* \;\Big\vert\; \max\limits_{1\leq i \leq I}
\lvert T'(\vv_i)-T(\vv_i) \rvert < \epsilon
\right\}
\] where $\vv_1,\ldots, \vv_I \in V$ and $\epsilon > 0$.
A sequence $\{T_n\}_{n\in \N} \subset V^*$
converges to $T\in V^*$ in the weak-* topology
if and only if $\{T_n(\vv)\}_{n\in \N}$ converges
to $T(\vv)$ for each $\vv \in V$. The convergence, however, need not  be uniform in $\vv$.
\end{description}
One can check that the weak-* topology is weaker than the weak topology, which in turn is weaker than the strong topology.

%--------------
\subsection{Linearizing the set of finite diagrams}
%--------------

It is useful to first illustrate some of the difficulties associated to finding embeddings for  the set of finite diagrams $\mathcal{D}_0$.
In what follows we will prove several negative results which will inform the choices in embedding $\DD$.
Indeed, the first thing to notice is that the set of compactly supported continuous  functions from $\W$ to $\mathbb{R}$, denoted $C_c(\W)$,
is a normed vector space if
endowed with the sup norm $\|\cdot \|_\infty $.

The Dirac mass centered at $\xx \in \W$ is the linear function
\[
\begin{array}{rccl}
  \delta_\xx :& C_c(\W) & \longrightarrow & \mathbb{R} \\
   & f  & \mapsto & f(\xx)
\end{array}
\]
and since $\lvert\delta_\xx(f)\rvert \leq \|f\|_\infty$
for each $f\in C_c(\W)$, it follows that
$\delta_\xx \in C_c(\W)^*$.
Let
\[
\begin{array}{rccc}
\nu_0 : &\mathcal{D}_0& \longrightarrow& C_c(\W)^* \\[.1cm]
& \varnothing & \mapsto & 0\\[.2cm]
& \varnothing \neq (S,\mu) & \mapsto & \sum\limits_{\xx \in S} \mu(\xx)\delta_\xx.
\end{array}
\]
It is not hard to see that
\begin{proposition}
$\nu_0 : \mathcal{D}_0 \longrightarrow C_c(\mathbb{W})^*$
is injective, and satisfies $\nu_0(D \sqcup D') = \nu_0(D ) + \nu_0(D') $ for every $D,D' \in \DD_0$.
\end{proposition}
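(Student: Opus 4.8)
The plan is to verify the two claims separately, both by direct computation from the definition of $\nu_0$. First I would handle additivity, which is the easy part. Given $D = (S,\mu)$ and $D' = (S',\mu')$ in $\DD_0$, their sum $D \sqcup D'$ is the disjoint union of multisets; concretely, its underlying set is $S \cup S'$ with multiplicity function $\xx \mapsto \mu(\xx)\,[\xx \in S] + \mu'(\xx)\,[\xx \in S']$ (interpreting $\mu$ as $0$ off $S$ and likewise $\mu'$). Then
\[
\nu_0(D \sqcup D') = \sum_{\xx \in S \cup S'}\big(\mu(\xx) + \mu'(\xx)\big)\delta_\xx = \sum_{\xx \in S}\mu(\xx)\delta_\xx + \sum_{\xx \in S'}\mu'(\xx)\delta_\xx = \nu_0(D) + \nu_0(D'),
\]
where the sums are finite because $S, S'$ are finite; the edge cases where $D$ or $D'$ equals $\varnothing$ hold since $\nu_0(\varnothing)=0$. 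This is just bookkeeping with finite sums, so it should be dispatched quickly.

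For injectivity, suppose $\nu_0(D) = \nu_0(D')$ with $D = (S,\mu)$, $D' = (S',\mu')$; I want $S = S'$ and $\mu = \mu'$. The key idea is that the Dirac masses $\{\delta_\xx : \xx \in \W\}$ are linearly independent in $C_c(\W)^*$, so that the representation of an element of their span is unique. To extract the coefficient at a given point $\xx_0 \in S \cup S'$, choose a bump function $f \in C_c(\W)$ with $f(\xx_0) = 1$ and $f \equiv 0$ on $(S \cup S') \setminus \{\xx_0\}$ — this is possible since $S \cup S'$ is a finite (hence closed, discrete) subset of the open set $\W$, so the points can be separated by disjoint open balls and a compactly supported continuous bump can be built on one of them. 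Evaluating $\nu_0(D) = \nu_0(D')$ on $f$ gives $\mu(\xx_0) = \mu'(\xx_0)$ (with the convention that a point absent from a diagram contributes $0$). Running this over all $\xx_0 \in S \cup S'$ and using that multiplicities are strictly positive on the respective underlying sets forces $S = S'$ and $\mu = \mu'$, i.e. $D = D'$.

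The one spot that needs a little care — and the closest thing to an obstacle here — is the construction of the separating bump function $f$: one must confirm it genuinely lies in $C_c(\W)$, meaning it is continuous with compact support contained in the \emph{open} wedge $\W$ (not touching the diagonal $\Delta$). Since each $\xx_0 \in \W$ has $\pers(\xx_0) > 0$, a sufficiently small closed ball around $\xx_0$ stays inside $\W$ and avoids the finitely many other points, so a standard radial bump supported on that ball works; this is routine but should be stated. Everything else is elementary, and I expect the whole proposition to be a short paragraph in the paper.
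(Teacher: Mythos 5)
Your proof is correct. The paper actually gives no argument for this proposition (it is dismissed with ``it is not hard to see that''), but your injectivity argument via a bump function $f \in C_c(\W)$ with $\supp(f) \cap (S \cup S') = \{\xx_0\}$ and $f(\xx_0) = 1$ is exactly the device the paper uses later to prove injectivity of the full map $\nu$ in Theorem~\ref{thm:Linearization}, and your additivity computation is the intended bookkeeping.
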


Whether   $\nu_0$ is continuous or not depends
on the topology with which $C_c(\W)^*$ is endowed.
We start with the coarser topologies, but immediately have the following negative results.

\begin{proposition}
    If $C_c(\W)^*$ is endowed with the weak topology, then $\nu_0$ is discontinuous at every point.
\end{proposition}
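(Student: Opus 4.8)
The plan exploits that $(\DD_0,d_B)$ is a metric space, so it suffices to produce, for each fixed $D_0=(S_0,\mu_0)\in\DD_0$, a sequence $D_n\to D_0$ in the bottleneck distance for which $\nu_0(D_n)$ does not converge to $\nu_0(D_0)$ in the weak topology on $C_c(\W)^*$. I would take the sequence obtained by adjoining to $D_0$ a single point of vanishing persistence. For each $n$ pick $\xx_n\in\W$ with $\pers(\xx_n)=1/n$ and $\xx_n\notin S_0$---possible since $\{\xx\in\W:\pers(\xx)=1/n\}$ is an infinite ray while $S_0$ is finite---and set $D_n:=D_0\sqcup(\{\xx_n\},1)$; the $\xx_n$ are then automatically distinct. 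The partial matching that is the identity on $D_0$ and leaves $\xx_n$ unmatched is a $\delta$-matching whenever $2\delta>\pers(\xx_n)$, so $d_B(D_n,D_0)\le\tfrac12\pers(\xx_n)\to 0$, while $\nu_0(D_n)=\nu_0(D_0)+\delta_{\xx_n}$. Everything thus reduces to showing $\delta_{\xx_n}\not\to 0$ in the weak topology of $C_c(\W)^*$.

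To see this, I would first record that the Dirac masses at distinct points span an isometric copy of $\ell^1$ inside $C_c(\W)^*$: for finitely supported scalars $(a_n)$ one has
\[
\Big\|\sum_n a_n\,\delta_{\xx_n}\Big\|_* \;=\; \sum_n|a_n| ,
\]
where ``$\le$'' is immediate and ``$\ge$'' follows by choosing, via a Urysohn-type bump function, some $f\in C_c(\W)$ with $\|f\|_\infty\le 1$ and $f(\xx_n)=\operatorname{sign}(a_n)$ at the finitely many relevant indices. In particular $\|\delta_{\xx_n}\|_*=1$, the $\delta_{\xx_n}$ are linearly independent, and the ``summing'' functional $\phi\big(\sum_n a_n\delta_{\xx_n}\big):=\sum_n a_n$ is bounded of norm $1$ on $\operatorname{span}\{\delta_{\xx_n}\}\subset C_c(\W)^*$. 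Extending $\phi$ by Hahn--Banach gives $\mathcal{T}\in C_c(\W)^{**}$ with $\mathcal{T}(\delta_{\xx_n})=1$ for all $n$, hence $\mathcal{T}\big(\nu_0(D_n)\big)=\mathcal{T}\big(\nu_0(D_0)\big)+1\not\to\mathcal{T}\big(\nu_0(D_0)\big)$. Thus $\nu_0(D_n)\not\to\nu_0(D_0)$ in the weak topology, so $\nu_0$ is discontinuous at $D_0$; since $D_0$ was arbitrary, it is discontinuous everywhere.

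The one genuine obstacle is precisely the non-convergence of $\delta_{\xx_n}$ to $0$ in the \emph{weak} topology, and the trap to avoid is testing only against evaluation functionals $e_f$, $f\in C_c(\W)$. Because the $\xx_n$ approach the diagonal $\Delta$, every $f\in C_c(\W)$ has $f(\xx_n)\to 0$, so $\delta_{\xx_n}\to 0$ already in the \emph{weak-$*$} topology (presumably the content of the next proposition); the weak topology is strictly finer, and the $\ell^1$-plus-Hahn--Banach argument is what exploits the gap, since the functional $\mathcal{T}$ lies in $C_c(\W)^{**}$ but outside the canonical image of $C_c(\W)$. The remaining items---choosing the $\xx_n$ distinct and off $S_0$, and the case $D_0=\varnothing$ (where $\nu_0(D_n)=\delta_{\xx_n}$ and the argument runs verbatim)---are routine bookkeeping.
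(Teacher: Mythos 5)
Your proof is correct, but it takes a genuinely different route from the paper's. The paper perturbs $D$ by adding the single point $(1,1+1/n)$ with multiplicity $n$ (so the added mass grows like $n$ while the persistence shrinks like $1/n$), and then produces the separating functional \emph{concretely}: it identifies $C_c(\W)^*$ with $C_0(\W)^*$ via the completion $C_c(\W)\hookrightarrow C_0(\W)$, and tests against evaluation at $f(x,y)=(y-x)\varphi(x,y)$, a function that lies in the completion but not in $C_c(\W)$ because its support meets the diagonal; the linear vanishing of $f$ at $\Delta$ is exactly what forces the multiplicity to grow like $1/\pers(\xx_n)$ to keep the pairing bounded away from zero. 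You instead add a single point of multiplicity one and compensate with a \emph{non-constructive} functional: the observation that $\{\delta_{\xx_n}\}$ spans an isometric copy of $(c_{00},\|\cdot\|_{\ell^1})$ inside $C_c(\W)^*$, so the summing functional is bounded on that span and extends by Hahn--Banach to some $\mathcal{T}\in C_c(\W)^{**}$ with $\mathcal{T}(\delta_{\xx_n})\equiv 1$. Both arguments are sound. Yours buys a slightly sharper conclusion --- even perturbations of total mass one drifting toward the diagonal fail to be weakly null, so no rescaling of multiplicities is needed --- and it avoids any discussion of the completion $C_0(\W)$ and the isometry $\iota^*$; the cost is that the witness $\mathcal{T}$ is invisible (it lives in $C_c(\W)^{**}$ outside both the canonical image of $C_c(\W)$ and the measures one can write down), whereas the paper's witness is an explicit evaluation and makes transparent \emph{which} test functions the sup-norm topology on $C_c(\W)$ is missing. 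Your closing remark correctly identifies the trap (weak-$*$ convergence does hold here) and all the bookkeeping steps you defer (distinctness of the $\xx_n$, well-definedness and boundedness of the summing functional via bump functions, the empty-diagram case) are indeed routine.
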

\begin{proof}
Fix $D \in \mathcal{D}_0$, and let $D_n\in \mathcal{D}_0$ be the diagram obtained from
$D$ by adding the point $(1, 1 + 1/n)$ with multiplicity $n$.
It follows that $\{D_n\}_n$ converges to $D$ with respect to the bottleneck distance.
We contend that $\nu_0(D_n)$ does not converge to $\nu_0(D)$ with respect to the weak topology;
in other words, we will show that there exist $\epsilon_0 > 0$ and  a linear operator
$\mathcal{T}: C_c(\mathbb{W})^* \longrightarrow \mathbb{R}$, continuous with respect to the strong topology,
 so that $\lvert\mathcal{T}(\nu_0(D_n)) - \mathcal{T}(\nu_0(D))\rvert \geq \epsilon_0$ for infinitely many values of $n$.

Indeed, the first thing to notice is that since $\mathbb{W}$ is not compact,
$C_c(\mathbb{W})$ is not complete.
Its completion is the space
$C_0(\mathbb{W})$ of continuous functions on $\mathbb{W}$ which vanish at the diagonal $\Delta$ and at infinity.
More explicitly, $f\in C_0(\mathbb{W})$ if and only if it  is continuous,
and $f(x,y)\to 0$ whenever $y-x\to 0$ or $ x + y \to \infty$.
Let  $\iota : C_c(\W) \hookrightarrow C_0(\W)$ be the inclusion and let $ \iota^* : C_0(\W)^* \longrightarrow C_c(\W)^*$ be the induced homomorphism.
Since $C_c(\mathbb{W})$ is dense in $C_0(\mathbb{W})$, then
$\iota^*$ is an isometric isomorphism; its inverse
$\jmath^* : C_c(\mathbb{W})^* \longrightarrow C_0 (\mathbb{W})^*$
sends a continuous linear map   $T: C_c(\mathbb{W})\longrightarrow \mathbb{R}$ to
 its unique continuous linear extension  $\jmath^*(T): C_0(\mathbb{W}) \longrightarrow \mathbb{R}$.

Now, let $\varphi: \mathbb{R}^2 \longrightarrow [0,1]$ be a continuous (bump) function
so that
\[
\varphi(x,y) = \left\{
                 \begin{array}{ll}
                   1 & \hbox{ if } \max\{x, y-x\} < 2 \\[.2cm]
                   0 & \hbox{ if } \max\{x, y -x\} \geq 3.
                 \end{array}
               \right.
\]
It follows that $f(x,y) = (y-x)\cdot\varphi(x,y) \in C_0 (\mathbb{W})$,
and hence the evaluation function
$e_f  : C_0(\mathbb{W})^{*} \longrightarrow \mathbb{R}$ is a bounded linear operator.
Let $\mathcal{T}  : C_c(\mathbb{W})^* \longrightarrow \mathbb{R}$
be  the composition $e_f \circ \jmath^*$.
Then, for each $n > 1$
we have that
\[
\mathcal{T}(\nu_0 (D_n)) = \mathcal{T}(\nu_0(D)) + 1,
\]
and letting $\epsilon_0 = 1$ completes the proof.

\end{proof}

\begin{corollary} If $C_c(\W)^*$ is endowed with the strong topology, then $\nu_0$ is discontinuous at every point.
\end{corollary}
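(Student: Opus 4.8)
The plan is to obtain this as an immediate formal consequence of the preceding proposition, using the comparison of topologies on $C_c(\W)^*$ recorded just above: the weak topology is coarser than the strong topology. First I would note that this makes the identity map $\mathrm{id}\colon (C_c(\W)^*,\text{strong}) \longrightarrow (C_c(\W)^*,\text{weak})$ continuous, since every weakly open set is strongly open. Then, if $\nu_0\colon (\DD_0,d_B)\longrightarrow (C_c(\W)^*,\text{strong})$ were continuous at some diagram $D$, the composition $\mathrm{id}\circ\nu_0 = \nu_0$ would be continuous at $D$ into $(C_c(\W)^*,\text{weak})$, contradicting the proposition just proved. Hence $\nu_0$ is discontinuous at every point of $\DD_0$ when $C_c(\W)^*$ carries the strong topology.

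An alternative self-contained route reuses the same family of diagrams. Fix $D\in\DD_0$ and let $D_n$ be obtained from $D$ by adjoining the point $(1,1+1/n)$ with multiplicity $n$, so that $D_n\to D$ in $(\DD_0,d_B)$. Then $\nu_0(D_n)-\nu_0(D) = n\,\delta_{(1,1+1/n)}$, and picking for each $n$ a bump function $f_n\in C_c(\W)$ with $\|f_n\|_\infty = 1$ and $f_n(1,1+1/n)=1$ shows $\|\nu_0(D_n)-\nu_0(D)\|_* \geq n$. Thus $\nu_0(D_n)$ does not converge to $\nu_0(D)$ in the operator-norm topology, so $\nu_0$ fails to be continuous at $D$; since $D$ was arbitrary, this proves the claim.

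There is essentially no obstacle here: the statement is purely a matter of ``strong is finer than weak'' (equivalently, strong convergence implies weak convergence), which was already established, together with the observation that the counterexample sequence used to defeat weak continuity is $d_B$-convergent, so no new construction is needed. The only point to phrase carefully is the direction of the implication between continuity with respect to comparable topologies, which is why I would lead with the identity-map formulation.
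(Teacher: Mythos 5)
Your primary argument is exactly the paper's proof: the corollary is deduced from the preceding proposition via the observation that the strong topology is finer than the weak topology, so strong continuity of $\nu_0$ would force weak continuity. Your alternative direct computation (showing $\|\nu_0(D_n)-\nu_0(D)\|_* \geq n$ via bump functions at $(1,1+1/n)$) is also correct, but the paper does not need it and neither do you.
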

\begin{proof}
Indeed, the strong topology contains the weak topology.

\end{proof}

It is not until we pass to the weakest of the three standard topologies that we approach a useful result.

\begin{proposition}\label{prop:weak_star_continuity_finite}
If $C_c(\mathbb{W})^*$ is endowed
with the weak-* topology, then
$\nu_0 : \DD_0 \longrightarrow C_c(\W)^*$ is continuous.
\end{proposition}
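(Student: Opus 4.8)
The plan is to show that $\nu_0$ is continuous by verifying sequential continuity, which suffices since $\DD_0$ is a metric space (it carries $d_B$) and since continuity of $\nu_0$ into the weak-$*$ topology is equivalent to continuity of $e_f \circ \nu_0$ for every fixed $f \in C_c(\W)$ (by definition of the weak-$*$ topology, the evaluation maps $e_f$ generate it). So the task reduces to: given $f \in C_c(\W)$ and a sequence $D_n \to D$ in $(\DD_0, d_B)$, show that $\nu_0(D_n)(f) \to \nu_0(D)(f)$, i.e.
\[
\sum_{\xx \in S_n} \mu_n(\xx) f(\xx) \;\longrightarrow\; \sum_{\xx \in S} \mu(\xx) f(\xx).
\]

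First I would fix $f \in C_c(\W)$ and set $K = \supp(f)$, a compact subset of $\W$. Since $K$ is bounded away from the diagonal $\Delta$, there is $\e > 0$ with $K \subseteq \overline{\W^\e}$; moreover $K$ is contained in a region with bounded birth. Then I would pick $\delta < \e/3$ (say) and take $n$ large enough that $d_B(D_n, D) < \delta$, so there is a $\delta$-matching $\gamma_n$ between $D_n$ and $D$. Because $f$ vanishes outside $K$, only points of $D_n$ and $D$ lying in (a slight enlargement of) $K$ contribute to the two sums, and there are only finitely many such points since both diagrams are finite. The key observation is that any point $\xx$ of $D_n$ with $\xx \in K$ has $\pers(\xx) \geq \e > 2\delta$, so it cannot be unmatched by $\gamma_n$; it is matched to some $\yy \in D$ with $\|\xx - \yy\|_\infty < \delta$, hence $\yy$ lies in a $\delta$-neighborhood of $K$, still bounded away from $\Delta$. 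Symmetrically, any point of $D$ near $K$ with persistence $\geq \e$ is matched. Thus $\gamma_n$ restricts to a bijection between the parts of $D_n$ and $D$ that are relevant to $f$, up to points of persistence less than $2\delta$ which lie outside $K$ once $\delta$ is small and hence contribute $0$.

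Using uniform continuity of $f$ on a compact neighborhood of $K$, I would then bound the difference of the two sums by pairing matched points: for each matched pair $(\xx, \yy)$, $|f(\xx) - f(\yy)|$ is small because $\|\xx - \yy\|_\infty < \delta$ and $f$ is uniformly continuous; multiplying by the common multiplicity (the $\delta$-matching respects the multiset structure via the labels in $S_\mu$) and summing over the finitely many matched pairs in $K$ gives a bound of the form (total relevant multiplicity) $\times$ (modulus of continuity of $f$ at scale $\delta$), which tends to $0$ as $n \to \infty$ since $\delta \to 0$. The one subtlety to handle carefully is that the total multiplicity appearing in $K$ could a priori grow with $n$; but matched multiplicities are transported faithfully, and $d_B(D_n, D) \to 0$ forces, for $n$ large, $\Mult(D_n, K') \leq \Mult(D, K'') + (\text{const})$ for suitable enlargements $K' \subseteq K''$ of $K$ still away from the diagonal — so the relevant total multiplicity stays bounded. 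That bookkeeping — controlling the multiplicities near $\supp(f)$ uniformly in $n$ and making the matching/relabeling argument precise — is the main (though routine) obstacle; everything else is a direct $\e$–$\delta$ estimate using compact support and uniform continuity of $f$.
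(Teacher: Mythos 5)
Your proposal is correct and follows essentially the same route as the paper: reduce weak-$*$ continuity of $\nu_0$ to continuity of $\nu_f = e_f\circ\nu_0$ for each fixed $f\in C_c(\W)$ (the paper isolates this as Lemma~\ref{lemma:nu_f_continuous}), then use a $\delta$-matching, compact support to discard unmatched points of persistence $<2\delta$, and uniform continuity of $f$ to bound the matched pairs. The multiplicity bookkeeping you flag is handled in the paper exactly as you suggest: every relevant matched pair has its endpoint in the fixed limit diagram inside $\W^\delta$, so the number of contributing terms is bounded by $\Mult(D,\W^\delta)$, independent of $D'$.
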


Before presenting the proof, we have the following useful lemma.

\begin{lemma}\label{lemma:nu_f_continuous}
For each $f\in C_c(\W)$, the function
\begin{equation}\label{eq:def_nu_f}
\begin{array}{rccl}
 \nu_f : & \DD & \longrightarrow  & \mathbb{R} \\
  &(S,\mu) & \mapsto & \sum\limits_{\xx \in S  } \mu(\xx)f(\xx)
\end{array}
\end{equation}
is continuous.
\end{lemma}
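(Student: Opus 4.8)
The plan is to use the compact support of $f$ to collapse the (a priori infinite) sum defining $\nu_f$ into a finite sum whose length is uniformly controlled, and then to estimate that sum using the uniform continuity of $f$. First I would extend $f$ by zero to a function $\tilde f$ on all of $\R^2$; since $\supp(f)$ is a compact subset of the open set $\W$, it has positive distance to $\R^2\setminus\W$, so $\tilde f$ is continuous, compactly supported, and hence uniformly continuous. Write $\omega(\delta) = \sup\{|\tilde f(\uu)-\tilde f(\vv)| : \uu,\vv\in\R^2,\ \|\uu-\vv\|_\infty<\delta\}$ for its modulus of continuity, so $\omega(\delta)\to 0$ as $\delta\to 0$. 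Since $\pers$ is continuous and positive on the compact set $\supp(f)$, it attains a positive minimum there; fix $\e_0>0$ with $\pers(\xx)\geq 2\e_0$ for every $\xx\in\supp(f)$. In particular $\supp(f)\subset\W^{\e_0}$, so for any $D=(S,\mu)\in\DD$ the defining property of $\DD$ makes $\{\xx\in S : f(\xx)\neq 0\}$ finite (with finite total multiplicity), and therefore $\nu_f$ is well defined.

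To prove continuity at a fixed $D=(S,\mu)\in\DD$, I would let $K^+$ be the set of points $\zz\in\W$ lying at $\ell^\infty$-distance at most $\tfrac{\e_0}{2}$ from $\supp(f)$. Every point of $K^+$ has persistence at least $\e_0$, so $K^+$ is a compact subset of $\W$ contained in $\overline{\W^{\e_0}}$, and hence $M_D := \Mult(D,K^+)$ is finite and depends only on $D$ and $f$. Now take any $D'=(S',\mu')$ with $d_B(D,D')<\delta$ for some $0<\delta<\tfrac{\e_0}{2}$ and fix a $\delta$-matching $\M$ between $D$ and $D'$. Any unmatched point has persistence $<2\delta<2\e_0$, hence lies outside $\supp(f)$ and contributes $0$ to both $\nu_f(D)$ and $\nu_f(D')$; since these sums have only finitely many nonzero terms, they may be regrouped along $\M$ to give
\[
\nu_f(D)-\nu_f(D') \;=\; \sum_{(\xx,k)\,\leftrightarrow\,(\yy,n)}\big(f(\xx)-f(\yy)\big),
\]
the sum running over the matched pairs of $\M$.

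It then remains to bound this finite sum. A matched pair $(\xx,k)\leftrightarrow(\yy,n)$ contributes $0$ unless $\xx\in\supp(f)$ or $\yy\in\supp(f)$; call such a pair \emph{relevant}. If $\xx\in\supp(f)$ then certainly $\xx\in K^+$; if instead only $\yy\in\supp(f)$, then $\pers(\yy)\geq 2\e_0>2\delta$ forces $(\yy,n)$ to be matched, and $\|\xx-\yy\|_\infty<\delta<\tfrac{\e_0}{2}$ places $\xx$ within $\tfrac{\e_0}{2}$ of $\supp(f)$, so (as $\xx\in S\subset\W$) again $\xx\in K^+$. Thus the first coordinate of every relevant pair lies in $K^+$, and since $\M$ is injective on $S_\mu$ there are at most $M_D$ relevant pairs. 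Each relevant pair satisfies $\|\xx-\yy\|_\infty<\delta$, so $|f(\xx)-f(\yy)|\leq\omega(\delta)$, and therefore $|\nu_f(D)-\nu_f(D')|\leq M_D\,\omega(\delta)$. Given $\eta>0$ it suffices to pick $\delta<\tfrac{\e_0}{2}$ small enough that $M_D\,\omega(\delta)<\eta$ (the statement being trivial if $M_D=0$); this proves continuity at $D$, and since $D$ was arbitrary, $\nu_f\in C(\DD,\R)$.

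The step I expect to require the most care is the counting in the third paragraph: one must bound the number of contributing matched pairs by a quantity depending only on $D$ (and $f$), \emph{not} on $D'$. The subtlety is the pair in which only the $D'$-side point lies in $\supp(f)$; a naive estimate would involve the multiplicity of $D'$ over $\supp(f)$, which is not uniform over diagrams near $D$. Enlarging $\supp(f)$ to the fixed compact set $K^+$, together with the facts that high-persistence points cannot be left unmatched and that matched partners are $\ell^\infty$-close, is exactly what converts this into a bound by the constant $M_D$.
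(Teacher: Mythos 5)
Your proof is correct and follows essentially the same route as the paper's: exploit that $\supp(f)$ is compact and bounded away from the diagonal, fix a $\delta$-matching, observe that unmatched points contribute nothing, and bound the resulting sum over matched pairs by uniform continuity times a multiplicity count taken on the fixed diagram $D$. Your only refinement is to fix the thickened region $K^+$ (and hence the count $M_D$) before choosing $\delta$, which sidesteps the mild circularity in the paper's choice of $\delta$ via the $\delta$-dependent denominator $\sum_{\zz\in S\cap\W^{\delta}}\mu(\zz)$; otherwise the arguments coincide.
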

\begin{proof}
The first observation is that the sum defining  $\nu_f$ is always finite, since the support of $f\in C_c(\W)$  intersects the underlying set of any persistence diagram at only finitely many points.
Let $D= (S,\mu)\in \DD$ and fix $\epsilon > 0$.
Since $\mathsf{supp}(f) \subset \W$ is compact, then $f$ is uniformly continuous.
Further, there exists
 $\delta > 0$ for which $\mathsf{supp}(f) \subset \W^{2\delta} $,
and
\begin{equation}\label{eq:continuity_epsilon}
\lvert f(\xx) - f(\yy)\rvert < \frac{ \epsilon} {\sum\limits_{\zz \in S \cap \W^{\delta}} \mu(\zz) }
\end{equation}
 whenever $\|\xx - \yy\|_\infty < \delta$.

Let $D' = (T,\alpha)\in \DD$ be given with $d_B(D,D') < \delta$.
We will show that  $\lvert \nu_f(D) - \nu_f(D')\rvert < \epsilon$.
Fix a $\delta$-matching
\[
\begin{array}{rccc}
\M : & A & \xrightarrow{\cong} & B\\
& \mathbin{\rotatebox[origin=c]{270}{$\subseteq$}} && \mathbin{\rotatebox[origin=c]{270}{$\subseteq$}}\\
& S_\mu && T_\alpha.
\end{array}
\]
This means that if $(\xx, k) \in A$ and $(\yy, n) = \mathsf{M}(\xx, k)$ are matched, then $\|\xx - \yy\|_\infty < \delta$;
and if $(\zz, m) \in \left(S_\mu \smallsetminus A \right) \cup\left( T_\alpha \smallsetminus B\right)$ is unmatched,
 then $\pers(\zz) < 2\delta$.
In this case   $\zz \notin \mathsf{supp}(f)$, which implies $f(\zz) = 0$.
Hence
\begin{eqnarray*}
% \nonumber % Remove numbering (before each equation)
  \nu_f(D) &=& \sum_{\xx \in S} \mu(\xx)f(\xx) \\
%   &=& \sum_{(\xx, k) \in S_\mu} f(\xx) \\
   &=& \sum_{(\xx, k) \in A} f(\xx)
\end{eqnarray*}
and similarly,
\begin{eqnarray*}
% \nonumber % Remove numbering (before each equation)
  \nu_f(D') &=&
  % \sum_{(\yy, n) = \mathsf{M}(\xx,k) \atop (\xx,k)\in S_\mu'} f(\yy).
  \sum_{(\yy, n) \in B} f(\yy).
\end{eqnarray*}
Therefore
\begin{eqnarray*}
% \nonumber % Remove numbering (before each equation)
  \lvert \nu_f(D) - \nu_f(D')\rvert  &=& \left\lvert\sum_{(\xx, k) \in A} f(\xx) -   \sum_{(\yy,n) \in B}
  % \sum_{(\yy, n) = \mathsf{M}(\xx,k) \atop (\xx,k)\in S_\mu'}
  f(\yy) \right\rvert
   \\
   &\leq&
   \sum_{(\yy, n) = \mathsf{M}(\xx,k) \atop (\xx,k)\in A}
    \lvert f(\xx) - f(\yy)\rvert
\end{eqnarray*}
where each term $\lvert f(\xx) - f(\yy)\rvert$ is potentially
nonzero only when $\xx $ or $\yy  $ are  in $\mathsf{supp}(f) \subset \W^{2\delta}$.
Since in this case $\|\xx - \yy\|_\infty < \delta$, we would get  $\xx,\yy \in \W^\delta$.
Combining this observation with equation (\ref{eq:continuity_epsilon}) completes the proof.

\end{proof}

\begin{proof}[Proposition \ref{prop:weak_star_continuity_finite}]
Let $D  \in \DD_0$, and    fix a weak-* basic neighborhood
$N(f_1, \ldots, f_I ; \epsilon)$  for $\nu_0(D)$.
Notice that for each $i=1,\ldots, I$ we have  $\nu_0(D)(f_i) = \nu_{f_i}(D)$.
Since $\nu_{f_i}$ is continuous at $D$, then given $\epsilon > 0$ there exists $\delta_i > 0$ so that
$d_B(D,D') < \delta_i$ implies $\lvert \nu_{f_i}(D) - \nu_{f_i}(D')\rvert < \epsilon$.
If we let $\delta = \min\{\delta_1 , \ldots, \delta_I\}$, it follows that
whenever $d_B(D,D') < \delta$ then for all $i=1,\ldots, I$
\begin{equation*}
\lvert \nu_0(D)(f_i) - \nu_0(D')(f_i)\rvert  =   \lvert \nu_{f_i}(D) - \nu_{f_i}(D')\rvert < \epsilon.
\end{equation*}
This shows that $\nu_0(D') \in N(f_1,\ldots, f_I; \epsilon)$ and hence $\nu_0$ is continuous.

\end{proof}

These results imply that out of the three standard topologies on $C_c(\mathbb{W})^*$,
the weak-* topology is the only one for which $\nu_0$ yields a continuous embedding of $\mathcal{D}_0$
into $C_c(\W)^*$.
The question now is whether this embedding can be extended to $\mathcal{D}$.
The answer, as it turns out, is negative.
\begin{proposition}
If $C_c(\W)^*$ is endowed with the weak-* topology, then
$\nu_0: \DD_0 \longrightarrow C_c(\W)^*$ cannot be   continuously extended to any $D\in \DD\smallsetminus\DD_0$.
\end{proposition}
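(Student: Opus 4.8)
The plan is to argue by contradiction. Suppose there were a diagram $D_\infty = (S,\mu) \in \DD \smallsetminus \DD_0$ together with a continuous map $\nu \colon \DD \longrightarrow C_c(\W)^*$ (with $C_c(\W)^*$ carrying the weak-* topology) extending $\nu_0$. First I would pin down exactly what the value $\nu(D_\infty)$ is forced to be, and then show that this forced value is not a bounded linear functional on $C_c(\W)$, contradicting $\nu(D_\infty) \in C_c(\W)^*$.

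For the first step, fix $f \in C_c(\W)$. By definition of the weak-* topology the evaluation functional $e_f \colon C_c(\W)^* \to \R$ is continuous, hence so is $e_f \circ \nu \colon \DD \to \R$; and on the subset $\DD_0 \subset \DD$ it agrees with $D \mapsto \nu_0(D)(f) = \nu_f(D)$. On the other hand, by Lemma~\ref{lemma:nu_f_continuous} the map $\nu_f \colon \DD \to \R$ is \emph{already} continuous on all of $\DD$ and restricts to the same function on $\DD_0$. Since $\DD_0$ is dense in $\DD$ (recall that $\DD$ is the metric completion of $\DD_0$) and $\R$ is Hausdorff, two continuous real-valued functions on $\DD$ that agree on $\DD_0$ must coincide; hence $\nu(D)(f) = \nu_f(D)$ for every $D \in \DD$ and every $f \in C_c(\W)$. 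In particular $\nu(D_\infty)(f) = \sum_{\xx \in S}\mu(\xx)f(\xx)$, so $\nu(D_\infty)$ is forced to be the formal sum of Dirac masses $\sum_{\xx \in S}\mu(\xx)\delta_\xx$, with no freedom left.

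Finally I would derive the contradiction from unboundedness. Since $D_\infty \notin \DD_0$, the underlying set $S$ is infinite, so I may choose $N$ distinct points $\xx_1,\dots,\xx_N \in S$. As $\W$ is open in $\R^2$, one can build $f \in C_c(\W)$ with $0 \le f \le 1$ and $f(\xx_i) = 1$ for all $i$ (take small pairwise-disjoint closed balls around the $\xx_i$ contained in $\W$ and apply Urysohn's lemma). Then $\|f\|_\infty \le 1$ while $\nu(D_\infty)(f) = \sum_{\xx \in S}\mu(\xx)f(\xx) \ge \sum_{i=1}^{N}\mu(\xx_i) \ge N$, so $\|\nu(D_\infty)\|_* \ge N$; letting $N \to \infty$ shows $\nu(D_\infty)$ is unbounded. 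I expect the crux to be the uniqueness step rather than this estimate: a priori a continuous extension of $\nu_0$ need not be the naive Dirac-mass sum, but weak-* continuity is exactly the statement that every coordinate map $D \mapsto \nu(D)(f)$ is continuous, and such maps are pinned down on the dense set $\DD_0$, which together with Lemma~\ref{lemma:nu_f_continuous} eliminates all ambiguity. The unboundedness calculation is then elementary.
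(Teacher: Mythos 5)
Your argument is correct and follows essentially the same two-step strategy as the paper: weak-$*$ continuity forces the value at $D_\infty$ to be $f \mapsto \sum_{\xx \in S}\mu(\xx)f(\xx)$ (you via density of $\DD_0$ and Lemma~\ref{lemma:nu_f_continuous}, the paper via the explicit truncations $D\cap \W^{1/n}$), and that functional is then shown to be unbounded for the sup norm---your direct exhibition of norm-one test functions with arbitrarily large image is, if anything, a bit cleaner than the paper's perturbation sequence $f_n = f + (\phi_1+\cdots+\phi_n)/n$. Two small repairs: the hypothesis to contradict should be continuity of the extension \emph{at the single point} $D_\infty$ rather than a globally continuous extension (your density argument localizes immediately to sequences $\DD_0 \ni D_n \to D_\infty$, so nothing is lost), and $\W$ is not actually open in $\R^2$ (points with $\birth=0$ lie on its topological boundary), so for such $\xx_i$ you should build the bump on the compact set $\overline{B_{r_i}(\xx_i)}\cap\W$ with $r_i$ less than half the persistence, rather than on a whole closed ball.
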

\begin{proof}
Assume, by way of contradiction, that $\nu_0$ extends continuously to some  $D = (S,\mu)\in \DD\smallsetminus \DD_0$.
If for each $n\in \N$ we let $D_n$ be the restriction of $D$ to $\W^{1/n}$, then
$D_n \in \DD_0$ for all $n\in \N$, and  the sequence  $\{D_n\}_{n\in \N} $ converges to $D$ with respect to the bottleneck distance.
By the continuity assumption of $\nu_0$ at $D$, we have that
\[
\nu_0(D) = \lim_{n\to \infty} \nu_0(D_n)
\]
where convergence is with respect to the weak-* topology.
In other words,
\[
\nu_0(D)(f) = \lim_{n\to \infty} \nu_0(D_n)(f)
\]
for every $f\in C_c(\W)$.
It follows that, given $f\in C_c(\W)$,  there exists $N_f\in \N$ so that
$\mathsf{supp}(f) \subset \W^{1/n}$ for all $n\geq N_f$,
and therefore the sequence $\nu_0(D_n)(f)$ becomes constant
and equal to
\begin{equation}\label{eq:Extension_nu_0}
\sum\limits_{\xx \in S} \mu(\xx) f(\xx).
\end{equation}
We claim that if $C_c(\W)$ is endowed with the sup norm
$\|\cdot\|_\infty$, then the linear function
\[
\begin{array}{rccl}
 \nu_0(D): & C_c(\W) & \longrightarrow  & \mathbb{R} \\
   & f & \mapsto & \sum\limits_{\xx \in S} \mu(\xx) f(\xx)
\end{array}
\]
is discontinuous at every point.
To this end, we will show that given $f\in C_c(\W)$ there exists
a sequence $\{f_n\}_{n\in \N} \subset C_c(\W)$ which converges to $f$
with respect to $\|\cdot\|_\infty$, but for which $\{\nu_0(D)(f_n)\}_{n\in \N}$ does not
converge to $\nu_0(D)(f)$.
This would contradict $\nu_0(D) \in C_c(\W)^*$.

Indeed, since $(S,\mu) = D \notin \DD_0$, then there exists a sequence $\{\xx_n\}_{n\in \N} \subset S \smallsetminus \mathsf{supp}(f)$ so that
$\pers(\xx_n) $ is strictly decreasing as $n$ goes to infinity.
Therefore, it is possible to construct a sequence
$\{r_n\}_{n\in \N}$ of positive real numbers,
so that the balls $B_{r_n}(\xx_n)\subset \W$ are all disjoint
 and disjoint with the support of $f$.
Let $\phi_n : \mathbb{R}^2 \longrightarrow [0,\infty)$ be the bump function
\[
\phi_n(\xx) = \frac{\max\{ 0 \; ,\;r_n - 2\|\xx - \xx_n\| \}}{r_n}
\]
supported on the closure of $B_{\frac{r_n}{2}}(\xx_n)$, and let
\[
f_n = f + \frac{\phi_1 + \cdots + \phi_n}{n}.
\]
It follows that $\{f_n\}_{n\in \N}$ is a sequence
of continuous and compactly supported functions on
$\W$, so that
$\|f_n - f\|_\infty < \frac{1}{n} $ for all $n\in \N$,
and for which
\begin{eqnarray*}
% \nonumber % Remove numbering (before each equation)
  \nu_0(D)(f_n)-  \nu_0(D)(f) &=&    \frac{1}{n}\nu_0(D)(\phi_1 + \cdots + \phi_n)\\[.3cm]
    &\geq &
    \frac{1}{n}\big(\mu(\xx_1)\phi_1(\xx_1) + \cdots + \mu( \xx_n)\phi_n(\xx_n) \big) \\[.4cm]
    &\geq&   1.
\end{eqnarray*}
Hence   $\{\nu_0(D)(f_n)\}_{n\in \N}$
does not converge to $\nu_0(D)(f)$, and  $\nu_0(D)$ is discontinuous at $f$.

\end{proof}

%-------------------
\subsection{Linearizing infinite diagrams}
\label{sec:LinInfDiag}
%-------------------

There are two main lessons to draw from the previous results:
First, that even though there is a candidate for extending $\nu_0$ to infinite diagrams,
namely Eq.~\ref{eq:Extension_nu_0}, the topology on $C_c(\W)$ induced by the sup norm is inadequate as it does not have enough open sets.
The second lesson is that a  weak-* topology on the dual of $C_c(\W)$ is the most likely to ensure continuity when embedding $\DD$.
In what follows we will describe a (locally convex) topology on $C_c(\W)$, and a corresponding weak-* topology on the topological dual $C_c(\W)'$ with the required properties.
We will utilize the theory of \emph{locally convex topological vector spaces}, which generalize  Banach spaces, and provide a rich framework in which to study weak topologies.
For a more detailed account we direct the interested reader to Chapters IV  and V of \cite{conway2013course}.

Let $\{K_{n}\}_{n\in \N}$ be a sequence of compact subsets of $\mathbb{W}$ so that $K_n \subset K_{n+1}$ for all $n\in \N$, and for which
\begin{equation*}
\mathbb{W} = \bigcup_{n\in \N} K_n.
\end{equation*}
It follows that each vector space
\[
C_c(K_n) = \{ f \in C(\mathbb{W}) \mid \mathsf{supp}(f) \subset K_n\}
\]
is a Banach space if endowed with the sup norm $\|\cdot\|_\infty$; in particular it is a \emph{locally convex space}.

\begin{definition}
A locally convex space is a topological vector space
$V$, whose topology is generated
by a family $\mathscr{P}$ of seminorms on $V$ which separate points.
More specifically,  $\mathscr{P} $ is a collection $ \{\rho_\alpha\}_{\alpha \in \Gamma}$ of continuous functions
$\rho_\alpha : V \longrightarrow [0,\infty)$
so that
\begin{enumerate}
  \item $\rho_\alpha(\uu + \vv) \leq \rho_\alpha(\uu) + \rho_\alpha(\vv)$ for all $\uu,\vv \in V$,
  \item $\rho_\alpha(\lambda \uu) = \lvert \lambda\rvert \rho_\alpha(\uu)$ for all scalars $\lambda$,
  \item $\rho_\alpha(\uu) = 0$ for all $\alpha \in \Gamma$ implies $\uu = \mathbf{0}$
\end{enumerate}
and   the topology of $V$ is the weakest for which
all the $\rho_\alpha$'s are continuous.
\end{definition}

In particular, all normed spaces are locally convex: any norm is a seminorm, and the norm topology is the smallest for which the norm is a continuous function.
Notice also that each inclusion
\begin{equation*}
C_c(K_n) \subset  C_c(K_{n+1}) \qquad n\in \N
\end{equation*}
is continuous and that
\begin{equation*}
C_c(\mathbb{W}) = \bigcup_{n\in \N} C_c(K_n).
\end{equation*}
The \textbf{strict inductive limit topology} on $C_c(\mathbb{W})$ is the finest locally convex topology so that each inclusion $C_c(K_n) \hookrightarrow C_c(\mathbb{W})$ is continuous.
In this topology,   a linear map $T: C_c(\mathbb{W}) \longrightarrow Y$
to a locally convex space $Y$ is continuous  if and only if
the restriction of $T$ to each $C_c(K_n)$ is continuous.
Moreover, this topology is independent of the choice of compact sets $\{K_n\}_{n\in\N}$ filtering $\W$.
%\reviewer{    Say what the family of seminorms on $C_c(W)$ is. }

Let $C_c(\mathbb{W})'$ denote the topological dual
of $C_c(\mathbb{W})$ with
respect to the strict inductive limit topology, and endow $C_c(\mathbb{W})'$
with the weakest topology so that for each $f\in C_c(\mathbb{W})$ the resulting evaluation function
\[
\begin{array}{rccl}
e_f : &C_c(\mathbb{W})'& \longrightarrow &\mathbb{R} \\
&T&\mapsto& T(f)
\end{array}
\]
is continuous. This is the corresponding weak-* topology.
It follows that a basis for neighborhoods of a point $T \in C_c(\mathbb{W})'$ is given by sets of the form
\[
N(f_1,\ldots, f_I;\epsilon)(T) =
\left\{
\tilde{T} \in C_c(\mathbb{W})' \;:\;
\lvert(T - \tilde{T})(f_i)\rvert < \epsilon \; , \; i= 1,\ldots, I
\right\}
\]
where $f_1,\ldots, f_I \in C_c(\mathbb{W})$ and $\varepsilon>0$.
Here is the main theorem of this section.

\begin{theorem}\label{thm:Linearization}
Given a persistence diagram $D =(S,\mu) \in \DD$ and a  function $f \in C_c(\W)$, define
\begin{equation}\label{eq:defNu}
\nu_D(f) := \sum_{\xx \in S} \mu(\xx) f(\xx).
\end{equation}
If $C_c(\W)$ is endowed with the
strict inductive limit topology,
and $C_c(\W)'$ is its topological dual endowed with the corresponding weak-* topology,
then
\[
\begin{array}{rccl}
\nu :& \mathcal{D} &\longrightarrow & C_c(\mathbb{W})' \\
& D& \mapsto &\nu_D
\end{array}
\]
is continuous, injective and satisfies
$\nu(D\sqcup D') = \nu(D) + \nu(D')$ for all $D,D'\in \DD$.
\end{theorem}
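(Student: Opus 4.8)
The plan is to establish the four pieces of the statement in turn: that each $\nu_D$ really lies in $C_c(\W)'$, that $\nu$ is a monoid homomorphism, that $\nu$ is injective, and—the substantive point—that $\nu$ is continuous. The workhorse throughout is the elementary observation already used to open the proof of \cref{lemma:nu_f_continuous}: the support of any $f\in C_c(\W)$ is a compact subset of $\W$, hence contained in $\overline{\W^\e}$ for some $\e>0$, so it meets the underlying set of any $D\in\DD$ in finitely many points and $\Mult(D,\mathsf{supp}(f))\le\Mult(D,\overline{\W^\e})<\infty$ by the defining condition on $\DD$. In particular the sum in \cref{eq:defNu} is finite, so $\nu_D$ is a well-defined linear functional. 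To see it is continuous for the strict inductive limit topology, I would invoke the characterization recalled just before the theorem: continuity of a linear map out of $C_c(\W)$ is equivalent to continuity of its restriction to each $C_c(K_n)$; and for $f$ supported in the compact set $K_n$ we have $|\nu_D(f)|\le\Mult(D,K_n)\,\|f\|_\infty$ with $\Mult(D,K_n)<\infty$, so each restriction is bounded, hence continuous. Thus $\nu_D\in C_c(\W)'$.

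The homomorphism identity is a one-line computation: the disjoint union $D\sqcup D'$ has underlying set $S\cup T$ with a point $\xx$ carrying multiplicity $\mu(\xx)+\alpha(\xx)$ (reading an absent multiplicity as $0$), so $\nu_{D\sqcup D'}(f)=\sum_\xx(\mu(\xx)+\alpha(\xx))f(\xx)=\nu_D(f)+\nu_{D'}(f)$ for every $f$, and $\nu_\varnothing=0$. For injectivity, suppose $D=(S,\mu)\ne D'=(T,\alpha)$; then there is an off-diagonal point $\xx$, say with $\pers(\xx)=p>0$, whose multiplicities in $D$ and $D'$ differ. Only finitely many points of $S\cup T$ have persistence exceeding $p/2$, and every point of $S\cup T$ in a small ball about $\xx$ has persistence exceeding $p/2$ by continuity of $\pers$; hence one can choose $f\in C_c(\W)$ with $f(\xx)=1$ whose support is a neighborhood of $\xx$ in $\W$ containing no other point of $S\cup T$. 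Then $\nu_D(f)=\mu(\xx)\ne\alpha(\xx)=\nu_{D'}(f)$, so $\nu$ separates $D$ from $D'$.

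Continuity is where the full-strength \cref{lemma:nu_f_continuous}—stated for all of $\DD$, not merely $\DD_0$—does the work, and the argument mirrors \cref{prop:weak_star_continuity_finite}. By construction the weak-* topology on $C_c(\W)'$ is the coarsest topology making every evaluation $e_f\colon T\mapsto T(f)$ continuous, so a map into it is continuous precisely when its composite with each $e_f$ is. But $e_f\circ\nu=\nu_f$, which is continuous on $(\DD,d_B)$ by \cref{lemma:nu_f_continuous}; hence $\nu$ is continuous. Spelled out on a basic neighborhood $N(f_1,\dots,f_I;\e)(\nu_D)$: picking $\delta_i>0$ with $d_B(D,D')<\delta_i\Rightarrow|\nu_{f_i}(D)-\nu_{f_i}(D')|<\e$ and setting $\delta=\min_i\delta_i$ gives $\nu_{D'}\in N(f_1,\dots,f_I;\e)(\nu_D)$ whenever $d_B(D,D')<\delta$.

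I expect the first step—membership $\nu_D\in C_c(\W)'$—to be the one requiring the most conceptual care, precisely because it is where the choice of topology on $C_c(\W)$ matters. A functional continuous for the sup norm would have to satisfy a single uniform bound $\Mult(D,K)\le M$ valid over all compact $K$, which fails for infinite diagrams; this is exactly the obstruction isolated in the last proposition of the preceding subsection. The strict inductive limit topology asks only for continuity on each $C_c(K_n)$ separately, where the finiteness $\Mult(D,K_n)<\infty$ is automatic, and that is what lets $\nu$ land in the dual for all of $\DD$. Everything else reduces to bookkeeping or a direct appeal to \cref{lemma:nu_f_continuous}.
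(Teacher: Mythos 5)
Your proposal is correct and follows essentially the same route as the paper's proof: well-definedness and membership in $C_c(\W)'$ via finiteness of the multiplicity on each compact $K_n$ together with the characterization of continuity for the strict inductive limit topology, weak-* continuity of $\nu$ by reducing to the continuity of each $\nu_f$ from \cref{lemma:nu_f_continuous} exactly as in \cref{prop:weak_star_continuity_finite}, and injectivity via a bump function supported near a single distinguishing point. The only differences are cosmetic: you spell out the additivity computation and the existence of the separating bump function, which the paper leaves implicit.
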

\begin{proof}
First, we ensure that $\nu_D$ is well defined.
Fix $D = (S,\mu)\in \mathcal{D}$ and $n\in \N$. Then $S\cap K_n$ is a finite
set and hence
for each
$f\in C_c(K_n)$ it follows that
\begin{equation*}
\nu_D(f) = \sum_{\xx \in S} \mu(\xx) f(\xx) < \infty.
\end{equation*}
As for   continuity of $\nu_D$,  fix $f_0 \in C_c(K_n)$, let $\varepsilon> 0$, and
let
\begin{equation*}
\delta < \frac{\varepsilon}{\sum\limits_{\zz\in S\cap K_n } \mu(\zz)}.
\end{equation*}
If $f\in C_c(K_n)$ is so that $\|f - f_0\|_\infty < \delta$,
then
\begin{eqnarray*}
% \nonumber % Remove numbering (before each equation)
\lvert\nu_D(f) - \nu_D(f_0)\rvert & =  &\left\lvert \sum_{\xx \in S} \mu(\xx)\big(f(\xx) - f_0(\xx)\big)\right\rvert  \\
   &\leq& \sum_{\zz \in S \cap K_n} \mu(\zz)\|f - f_0\|_\infty \\ \\
&<& \varepsilon.
\end{eqnarray*}
Therefore $\nu_D$ is a real-valued continuous linear
function on $C_c(K_n)$ for each $n$.
This shows that $\nu(D) = \nu_D \in C_c(\W)'$ for all $D\in \DD$.

To see that $\nu : \DD \longrightarrow C_c(\W)'$ is continuous,
we proceed exactly as in the proof of Proposition \ref{prop:weak_star_continuity_finite}.
Indeed,
let $D\in \DD$,
and fix a basic neighborhood $N(f_1,\ldots, f_I ; \epsilon)$ for $\nu(D)$.
For each $i =1,\ldots, I$ the function $\nu_{f_i} : \DD \longrightarrow \mathbb{R}$,
$\nu_{f_i}(D) = \nu_{D}(f_i)$,
is continuous by Lemma \ref{lemma:nu_f_continuous}, and hence
there exists $\delta > 0$ such that
\[
\lvert\nu(D)(f_i) - \nu(D')(f_i)\rvert = \lvert\nu_{f_i}(D) - \nu_{f_i}(D')\rvert < \epsilon
\]
for all $i=1,\ldots, I$,
whenever $d_B(D,D') < \delta$.

Injectivity of $\nu$ is deduced from the following observation.
If $(S,\mu), (T,\alpha) \in \DD$ are distinct,
then we can assume without loss of generality that there exists  $\xx \in S$
such that either: $\xx \notin T$,
or $\xx \in T$ and $\mu(\xx) \neq \alpha(\xx)$.
Let $f\in C_c(\W)$ be such that $\mathsf{supp}(f)\cap (S \cup T) = \{\xx\}$,
and for which $f(\xx) =1$.
If $\xx \notin T$, then
\begin{equation*}
\nu(S,\mu)(f) = \mu(\xx) \neq 0 = \nu(T,\alpha)(f).
\end{equation*}
Similarly, for the case where $\xx \in T$ we have
\begin{equation*}
\nu(S,\mu)(f) = \mu(\xx) \neq \alpha(\xx) = \nu(T,\alpha)(f),
\end{equation*}
which completes the proof.

\end{proof}

The Riesz-Markov representation theorem---see for instance Theorem 2.14 and Theorem 2.17 of \cite{rudin2006real}---contends that if $T: C_c(\W) \longrightarrow \mathbb{R}$ is linear and satisfies $T(f) \geq 0$ whenever $f(\xx)\geq 0$ for all $\xx\in \W$,
then there exists a unique positive Radon   measure $\eta$ on $\W$ so that
\[
\int_\W f d\eta = T(f)
\]
for all $f\in C_c(\W)$.
Specifically, $\eta$ is Borel regular and  $\eta(K) < \infty$ for every compact set $K \subset \W$.
Applying this theorem to elements in the image of $\nu: \DD \longrightarrow C_c(\W)'$
 implies that $ \nu(D)$ is a Radon measure on $\W$ for each $D\in \DD$.
This, of course, can be derived  directly from the definition
of $\nu(D)$ by writing it  (see Eq. \ref{eq:defNu}) as a sum of Dirac delta masses
\begin{equation}\label{eq:def_nu_delta}
\nu(S,\mu) = \sum_{\xx \in S} \mu(\xx)\delta_\xx
\end{equation}
The representation-theoretic view, however,  has the following advantages.
The first is that it recovers the interpretation
of persistence diagrams as rectangular measures
introduced by Chazal et. al. in \cite{Chazal2016}. Indeed,  Eq. \ref{eq:def_nu_delta}
yields exactly the counting measure
(Theorem 3.19, Eq. 3.6) from \cite{Chazal2016}.
The second advantage is that it provides
a natural framework in which to generalize persistence
diagrams: from those
in Eq. \ref{eq:def_nu_delta}, to general Radon measures on $\W$.
This viewpoint has been recently studied by
Divol and Lacombe in \cite{divol2021understanding}.
As they show, it
allows one to apply the mature
theoretical and computational tools from (partial) optimal transport,
to problems in the space of persistence diagrams
endowed with the Wasserstein distance.
It would be interesting to see---though outside  the scope of this work---if the approximation methods presented here, in particularly those in the next section, apply in the greater generality of persistence diagrams as Radon measures.

\section{Approximating Continuous Functions on Persistence Diagrams}
\label{sec:Approximating}
As we saw in Theorem \ref{thm:Linearization}, the function $\nu : \DD \longrightarrow C_c(\W)'$ provides a continuous
embedding so that  $\nu(D\sqcup  D')= \nu(D) + \nu(D')$
for all $D,D' \in \DD$.
We can now proceed to the task of finding coordinate systems  for $\DD$ (see Definition \ref{def:CoordinateSystem}).
The first thing to note is  that composing $\nu$ with elements from $C_c(\W)''$, the topological dual of $C_c(\W)'$, yields continuous real-valued  functions on $\DD$.
By construction, these functions also respect the monoidal structure $\sqcup$ of $\DD$.
The elements of $C_c(\W)''$ can be characterized as follows.
\begin{theorem}\label{thm:V_iso_V2dual}
Let $V$ be a locally convex space, and endow its topological dual $V'$  with the associated weak-* topology.
That is, the smallest topology such that all the evaluations
\[
\begin{array}{cccc}
  e_\vv & V' & \longrightarrow  & \mathbb{R} \\
   & T & \mapsto & T(\vv)
\end{array}
\]
for $\vv\in V$, are continuous.
Then  the function
\[
\begin{array}{rccl}
  e : & V & \longrightarrow  & V'' \\
   & \vv & \mapsto & e_\vv
\end{array}
\] is an isomorphism of locally convex spaces.
\end{theorem}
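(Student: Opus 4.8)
The plan is to verify, in sequence, that $e$ is well defined, injective, surjective, and bicontinuous; all but one of these are formal, so I would organize the write-up around the single substantive point. \textbf{Well-definedness} is immediate: for every $\vv \in V$ the evaluation $e_\vv : V' \to \R$ is linear, and it is continuous for the weak-* topology on $V'$ precisely because that topology is, by definition, the weakest one making all such evaluations continuous; hence $e_\vv \in V''$ and $\vv \mapsto e_\vv$ is linear. \textbf{Injectivity} uses that a locally convex space is Hausdorff (condition (3) of the definition: the generating seminorms separate points). If $\vv \neq 0$, pick a continuous seminorm $\rho$ with $\rho(\vv) > 0$; defining $\alpha\vv \mapsto \alpha\rho(\vv)$ on $\mathrm{span}(\vv)$ and extending by Hahn--Banach (the seminorm-dominated form) gives $T \in V'$ with $T(\vv) = \rho(\vv) \neq 0$, so $e_\vv(T) \neq 0$ and $\ker e = \{0\}$.

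The heart of the proof is \textbf{surjectivity}: every weak-* continuous linear functional on $V'$ is an evaluation. Let $\phi \in V''$. By definition of the weak-* topology there are $\vv_1,\dots,\vv_n \in V$ and $\e > 0$ with $|\phi(T)| < 1$ whenever $\max_{1\le i\le n} |T(\vv_i)| < \e$. I first show $\bigcap_{i=1}^n \ker e_{\vv_i} \subseteq \ker \phi$: if $T(\vv_i) = 0$ for all $i$, then $\lambda T$ lies in that basic neighborhood of $0$ for every scalar $\lambda$, so $|\lambda|\,|\phi(T)| < 1$ for all $\lambda$, forcing $\phi(T) = 0$. Then I invoke the elementary fact that a linear functional vanishing on $\bigcap_i \ker e_{\vv_i}$ is a linear combination of $e_{\vv_1},\dots,e_{\vv_n}$ --- obtained by factoring $\phi$ through $T \mapsto (T(\vv_1),\dots,T(\vv_n)) \in \R^n$ and extending the resulting functional on the image to all of $\R^n$ --- to get scalars $c_1,\dots,c_n$ with $\phi = \sum_i c_i e_{\vv_i}$. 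Since $e$ is linear, $\phi = e_\vv$ with $\vv = \sum_i c_i \vv_i \in V$, so $e$ is onto.

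Finally, \textbf{bicontinuity}. The weak-* topology on $V'' = (V')'$ is the initial topology for the evaluations $\phi \mapsto \phi(T)$, $T \in V'$; composing with $e$ gives $\vv \mapsto e_\vv(T) = T(\vv)$, i.e. the functional $T$ viewed on $V$. Hence $e$ transports the weak-* topology of $V''$ exactly to the weak topology $\sigma(V,V')$ on $V$, so $e$ and $e^{-1}$ are both continuous once $V$ and $V''$ are equipped with their associated weak-* (equivalently weak) topologies; I would make this choice of topology on $V''$ explicit in the statement so the identification is literally an isomorphism of locally convex spaces. I expect the only step needing genuine care to be surjectivity --- establishing that weak-* continuity of a functional forces it to be an evaluation; the remaining items are direct unwindings of the definition of the weak-* topology together with Hahn--Banach.
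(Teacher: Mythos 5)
Your proof is correct. For this theorem the paper offers no argument of its own---it simply cites Theorem 1.3 in Chapter V of Conway's \emph{A Course in Functional Analysis}---so there is nothing to compare step by step; what you have written is essentially the standard textbook proof of that cited result. The substantive core, as you identify, is surjectivity: weak-* continuity of $\phi \in V''$ at $0$ yields finitely many $\vv_1,\dots,\vv_n$ with $\bigcap_i \ker e_{\vv_i} \subseteq \ker\phi$ via the scaling trick, and the finite-dimensional factorization lemma then exhibits $\phi$ as a linear combination of evaluations. Injectivity correctly relies on the separation axiom in the paper's definition of a locally convex space together with the seminorm form of Hahn--Banach. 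Your closing remark is also a legitimate observation about the statement itself: as written, the theorem does not specify the topology on $V''$, and $e$ is a homeomorphism only between $V$ with its weak topology $\sigma(V,V')$ and $V''$ with the weak-* topology induced from $V'$ (not, in general, with the original topology on $V$). This looseness is harmless for the paper's use of the theorem, which only needs the algebraic bijection---every element of $C_c(\W)''$ is an evaluation $e_f$ for a uniquely determined $f$---but you are right to flag it.
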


\begin{proof}
See Theorem 1.3 in Chapter V of \cite{conway2013course}.

\end{proof}
Applying this theorem to the locally convex space $C_c(\W)$, topologized  with the strict inductive limit topology,
implies that the elements of $C_c(\W)''$ are evaluations $e_f$, with   $f\in C_c(\W)$  uniquely determined.
Composing $e_f$ with $\nu$ yields a continuous function
$e_f \circ \nu : \DD\longrightarrow \mathbb{R}$ which preserves the monoidal structure $\sqcup$ of $\DD$.
Moreover,
given $D \in \DD$ we have that
\[
e_f\circ \nu(D) =
 \nu_D(f) = \nu_f (D)
\]
where $\nu_f : \DD \longrightarrow \mathbb{R}$ is defined by Eq.~\ref{eq:def_nu_f}.
We saw in Lem.~\ref{lemma:nu_f_continuous} that these types of functions
are indeed  continuous, but now we have the full picture:
they arise exactly as the continuous linear functions on a linearization of $\DD$.
The goal now is to construct coordinate systems for $\DD$ by selecting appropriate subsets
of $C_c(\W)$.
The   sets of interest are defined next.

\begin{definition}
A \textbf{template system} for $\DD$ is a collection $\mathcal{T}\subset C_c(\W)$ so that
\[
\mathcal{F}_{\mathcal{T}} = \{\nu_f \mid f\in \mathcal{T}\}
\]
 is a coordinate system (see Defn.~\ref{def:CoordinateSystem}) for $\DD$.
The elements of $\mathcal{T}$ are called template functions.
\end{definition}

The point of working with these template systems is that they can be used to approximate continuous functions on persistence diagrams, as given by the following theorem.

\begin{theorem}\label{thm:approximation}
Let $\mathcal{T}\subset C_c(\W)$ be a template system for $\DD$,
let $\mathcal{C} \subset \mathcal{D}$ be   compact,   and let   $F : \mathcal{C} \longrightarrow \mathbb{R}$ be continuous.
Then for every $\varepsilon > 0 $ there exist $N\in \N$, a polynomial
$p \in \mathbb{R}[x_1,\ldots, x_N]$
and template functions
$f_1,\ldots, f_N \in \mathcal{T}$   so that
\[
\left\lvert
 p\big( \nu_{f_1}(D) ,\ldots,  \nu_{f_N}(D)\big)
-
F(D)
\right\rvert < \varepsilon
\]
for every $D\in \mathcal{C}$.
That is,
the collection of functions of the form
\begin{equation}\label{eq:DenseSystem}
D \mapsto p\big( \nu_{f_1}(D) ,\ldots,  \nu_{f_N}(D)\big)
\end{equation} is dense in $C(\DD, \mathbb{R})$ with respect to the compact-open topology.
\end{theorem}

\begin{proof}
Let $\mathcal{T} \subset C_c(\W)$ be a template system for $\DD$ and let
\[
\mathcal{F} = \{\nu_f \mid f\in \mathcal{T}\} \subset C(\DD,\mathbb{R})
\] be the corresponding
coordinate system.
Let $\mathcal{A} \subset C(\DD, \mathbb{R})$ denote the algebra generated by $\mathcal{F}\cup \{1\}$.
In other words, $\mathcal{A}$ is the set of finite linear combinations of finite products of elements from $\mathcal{F}\cup \{1\}$.
It follows that every element of $\mathcal{A}$ can be written as
\[
p(\nu_{f_1},\ldots, \nu_{f_N})
\]
for some collection of templates $f_1 ,\ldots, f_N \in \mathcal{T}$ and some polynomial $p \in \mathbb{R}[x_1,\ldots, x_N]$.
Let $\iota : \mathcal{C} \hookrightarrow \DD$ be the inclusion and  $\iota^*: C(\DD,\mathbb{R}) \longrightarrow C(\mathcal{C}, \mathbb{R})$
 the corresponding restriction homomorphism.
Now, since $\mathcal{F}$ separates points in $\DD$ and $\mathcal{F} \subset \mathcal{A}$,
then  $\iota^*(\mathcal{A})$ is an algebra which separates points in $C(\mathcal{C}, \mathbb{R})$
and contains the nonzero constant functions.
The result follows from the Stone-Weierstrass theorem, which contends that any such algebra is dense with respect to  the sup norm.

\end{proof}

The main question now is how to go about constructing template systems in practice.
The next theorem elucidates a method for  producing countable template systems for $\DD$ by translating and re-scaling the support of any nonzero $f\in C_c(\W)$.
This shows, in particular,  that there are plenty of coordinate systems for the space of persistence diagrams, and helps explain why we refer to
nonzero elements in $C_c(\W)$  as templates.

\begin{theorem}\label{thm:templates}
Let $f\in C_c(\W)$, $n \in \N, \mm \in \mathbb{Z}^2$
and define the re-scales and translates
\begin{equation*}
% \label{MultiResolution}
f_{n,\mm} (\xx) =
%\frac{1}{2^{\|\mathbf{n}\|_\infty}}
f\left( n\xx + \frac{\mm}{n} \right).
\end{equation*}
If $f$ is nonzero, then
\begin{equation*}
\TT = \left\{f_{n,\mm} \mid n \in \N,\mm\in \mathbb{Z}^2\right\} \cap C_c(\W)
\end{equation*}
is a template system for $\DD$.
Moreover, if
$f$ is Lipschitz, then   the elements of the associated coordinate system
\[
\left\{\nu_{f_{n,\mm}}= f_{n,\mm} \circ \nu \mid f_{n,\mm} \in \TT\right\}
= \mathcal{F}_\TT
\]
are Lipschitz on any relatively  compact set
 $\mathcal{S} \subset \DD$.
That is,  the coordinate system  associated to a nonzero Lipschitz template  function is stable  on relatively compact subsets  of $\DD$.
\end{theorem}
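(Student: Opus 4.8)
The plan is to prove the two assertions of Theorem~\ref{thm:templates} separately. First I would show that $\TT = \{f_{n,\mm} \mid n\in\N,\ \mm\in\Z^2\}\cap C_c(\W)$ is a template system, i.e.\ that $\mathcal{F}_\TT = \{\nu_g : g\in\TT\}$ separates points of $\DD$. So let $D=(S,\mu)$ and $D'=(T,\alpha)$ be distinct diagrams. As in the injectivity argument of Theorem~\ref{thm:Linearization}, there is a point $\xx\in S$ (say) which is either not in $T$, or lies in $T$ with $\mu(\xx)\neq\alpha(\xx)$. The key geometric point is that the supports of the $f_{n,\mm}$ form a basis of arbitrarily small neighborhoods in $\W$: since $f$ is nonzero and continuous, $\supp(f)$ contains some ball $B_r(\xx_0)\subset\W$, and the dilation/translation $\xx\mapsto n\xx + \mm/n$ shows that $\supp(f_{n,\mm})$ contains a ball of radius $r/n$ centered near $(\xx_0 - \mm/n^2)/1$; more precisely, by choosing $n$ large and $\mm$ so that $n^{-2}\mm$ approximates $\xx_0-\xx$ (density of $\Z^2/n^2$ in $\R^2$), one can arrange that $f_{n,\mm}$ is supported inside an arbitrarily small neighborhood of $\xx$, is supported inside $\W$ (so $f_{n,\mm}\in\TT$), and has $f_{n,\mm}(\xx)\neq 0$. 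Shrinking the support enough so that it meets $S\cup T$ only at $\xx$, and rescaling so $f_{n,\mm}(\xx)=1$ (replace $f$ by a scalar multiple, which does not change $\TT$ up to scaling and certainly does not affect point-separation), we get $\nu_{f_{n,\mm}}(D) = \mu(\xx)$ and $\nu_{f_{n,\mm}}(D') \in \{0,\alpha(\xx)\}$, which differ. This separates $D$ from $D'$.

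For the second assertion I would fix a relatively compact $\Scal\subset\DD$ and a Lipschitz $f$ with Lipschitz constant $L$, and show each $\nu_g$, $g=f_{n,\mm}\in\TT$, is Lipschitz on $\Scal$. Note $g$ is itself Lipschitz with constant $nL$. By Theorem~\ref{thm:Compactness}, $\Scal$ is bounded, ODBB and UODF; the crucial consequence is that there is a single relatively compact box-region $R\subset\W$ (a finite union of the boxes $\B_k$ from \cref{eq:defBox}) with $\supp(g)\subset R$ and $\Mult(D, R)\leq M$ for a uniform constant $M$, for all $D\in\Scal$ — here one uses that $\supp(g)\subset\W^\e$ for some $\e>0$, so only the first few $\B_k$ are relevant and UODF bounds the total multiplicity in each. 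Then for $D,D'\in\Scal$ with $d_B(D,D')<\delta$ small, take a $\delta$-matching $\M$: unmatched points have persistence $<2\delta$, hence (for $\delta$ small relative to $\e$) lie outside $\supp(g)$ and contribute $0$; matched pairs $(\xx,\yy)$ with $\|\xx-\yy\|_\infty<\delta$ contribute $|g(\xx)-g(\yy)|\leq nL\,\delta$, and only the at most $M$ matched pairs meeting $\supp(g)$ contribute. Hence $|\nu_g(D)-\nu_g(D')| \leq M\cdot nL\cdot d_B(D,D')$, giving the Lipschitz bound with constant $MnL$ on $\Scal$.

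The main obstacle I expect is the bookkeeping in the point-separation step: making the density argument in $\Z^2/n^2$ precise while simultaneously (i) keeping $\supp(f_{n,\mm})$ inside the open wedge $\W$ — which requires the center of the shrinking support to stay uniformly bounded away from the diagonal $\Delta$, true because $\xx\in\W$ has $\pers(\xx)>0$ — and (ii) keeping the support small enough to avoid all other points of the (possibly infinite) sets $S,T$ — valid because $D,D'\in\DD$ have only finitely many points of persistence $\geq\pers(\xx)/2$, so there is a genuinely empty neighborhood of $\xx$ missing $(S\cup T)\setminus\{\xx\}$. Everything else is routine $\e$–$\delta$ estimation, with the one real idea being the reduction, via Theorem~\ref{thm:Compactness}, to a uniform multiplicity bound $M$ on the relevant box so that the sum defining $\nu_g$ has boundedly many terms across all of $\Scal$.
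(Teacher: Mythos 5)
Your proposal follows essentially the same route as the paper's proof: point separation via the density of the grid of dilated and translated supports (the paper chooses $\mm$ so that $n\yy+\tfrac{\mm}{n}$ lands within $\tfrac{1}{n}<r$ of a point $\zz$ where $f\neq 0$, with $n\geq\max\{\tfrac 1r,\tfrac{2s}{\e}\}$ forcing $\supp(f_{n,\mm})\subset B^\infty_\e(\yy)\subset\W$), and the Lipschitz claim via the uniform multiplicity bound on the support coming from the UODF property of relatively compact sets. The only discrepancies are minor: since $\supp(f_{n,\mm})=n^{-1}\supp(f)-n^{-2}\mm$, your approximation target should be $n^{-2}\mm\approx n^{-1}\xx_0-\xx$ rather than $\xx_0-\xx$, and your Lipschitz constant $M\,n\,L$ correctly carries the dilation factor $n$ that the paper's displayed estimate omits.
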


\begin{proof}
In order to show that $\mathcal{F}_{\mathcal{T}}$ separates points in $\DD$,  let $ (S,\mu),  (T,\alpha)\in \DD$ be distinct diagrams,  and assume without loss of generality that there exists $\yy  =(y_1,y_2)\in S$  so that either:
$\yy \notin T$; or $\yy \in T$ and
$\mu(\yy) \neq \alpha(\yy)$.
Our strategy will be to find an element $f_{n,\mm} \in \TT$ so that $f_{n,\mm}(\yy) \neq 0$, and $f_{n,\mm}(\xx) = 0$ for all other $\xx \in S \cup T$.

To begin, let  $\zz = (z_1,z_2)\in \W$ be so that $f(\zz) \neq 0$.
By continuity of $f$ with respect to the Euclidean norm $\|\cdot\|$, which is equivalent to
the sup norm $\|\cdot\|_\infty$, there exists $r > 0$ so that
\begin{equation*}
B^\infty_r(\zz) :=
\left\{\xx \in \mathbb{R}^2 \mid \|\xx - \zz\|_\infty < r \right\}
\subset
\supp(f).
\end{equation*}
Moreover, since $\mathsf{supp}(f)\subset \W$ is compact,
then there exists $s > r$ so that $\supp(f) \subset B^\infty_{s}(\zz)$.
Putting this together, we have $r < s$ so that
\begin{equation*}
  B_r^\infty(\zz) \subset \supp(f) \subset B_s^\infty(\zz).
\end{equation*}

Fix $\varepsilon>0$ small enough so that $B^\infty_\varepsilon(\yy) \subset \W$ and  $B^\infty_\varepsilon(\yy)\cap (T \cup S) = \{\yy\}$.
Also, let   $n\in \N$  be large  enough so that
$n  \geq \max\left\{\tfrac{1}{r}, \tfrac{2s}{\epsilon}\right\}$.
What we will show now is that it is possible  to find
 $\mm \in \mathbb{Z}^2$ for which
$n\yy + \frac{\mm}{n} \in B^\infty_r(\zz)$,
and so that  $\xx \notin B_\epsilon^\infty(\yy)$
implies $n\xx + \frac{\mm}{n} \notin B_{s}^\infty(\zz)$.
Indeed, define
\[
L_j(t) = nt + (z_j - ny_j)
\]
for $j = 1,2$ with $\zz = (z_1,z_2)$ and $\yy = (y_1,y_2)$.
This function has the property that $L_j(y_j) = z_j$.
Further, if $|t-y_j| > \varepsilon$, then
\[
\lvert L_j(t) - z_j\rvert
\;=\; \lvert (nt + z_j) - ( ny_j + z_j) \rvert
\;=\; n\lvert t-y_j\rvert
\;>\; \frac{2s}{\varepsilon} \varepsilon
\;=\; 2s.
\]
Let $k_j \in \mathbb{Z}$ be the unique integer so that
\begin{equation*}
k_j \leq z_j -ny_j < k_j +1.
\end{equation*}
By dividing the interval $[k_j , k_j + 1)$ into $n$ subintervals of length $\frac{1}{n}$, we have that there exists a unique integer $0\leq \ell_j < n $ so that
\begin{equation*}
k_j + \frac{\ell_j}{n}
\leq
z_j - n  y_j
<
k_j + \frac{\ell_j + 1}{n}.
\end{equation*}
Let $m_j = nk_j + \ell_j$,
and $\mm = (m_1,m_2)$.
It follows
that
\[
\left\|\zz -  \left(n\yy + \frac{\mm}{n}\right)\right\|_\infty
< \frac{1}{n} \leq  r
\]
and therefore
$ f_{n,\mm}(\yy) = f\left(n\yy + \frac{\mm}{n}\right)  \neq 0$.
Moreover, if $\xx \notin B^\infty_\epsilon(\yy)$ and $j\in \{1,2\}$ is so that $\lvert x_j  - y_j \rvert \geq \epsilon$,
then
\[
\lvert nx_j - n y_j\rvert =  \lvert L_j(x_j) - z_j\rvert >
2s,
\]
and therefore
\begin{eqnarray*}
% \nonumber % Remove numbering (before each equation)
  \left\|\left(n \xx  + \frac{\mm}{n}\right) - \zz\right\|_\infty &\geq & \left\lvert  \left( nx_j + \frac{m_j}{n}\right) - z_j \right\rvert \\
  & =  & \left\lvert  nx_j - ny_j - \left(z_j  - \left(ny_j + \frac{m_j}{n}\right) \right)\right\rvert \\
  &\geq& \lvert nx_j - ny_j \rvert - \left\lvert z_j  - \left(ny_j + \frac{m_j}{n}\right)\right\rvert \\
  &>& 2s - r \\
  &\geq & s,
\end{eqnarray*}
showing that $n\xx + \frac{\mm}{n} \notin B_{s}^\infty(\zz)$,
which in turn implies $f_{n,\mm}(\xx) = 0$.

Let us see  that the support of $f_{n,\mm}$ is a bounded subset of $\W$.
To this end, let  $\xx \in \supp(f_{n,\mm})$.
Hence $n\xx + \tfrac{\mm}{n} \in \supp(f)$, and  $\| (n\xx + \tfrac{\mm}{n} ) - \zz\|_\infty < s$.
Then for $j=1,2$,
\begin{eqnarray*}
\lvert nx_j-ny_j
      &\leq& \left\lvert (nx_j - ny_j) + z_j-\left(ny_j + \frac{m_j}{n}\right) \right\rvert
              + \left\lvert z_j - \left(ny_j + \frac{m_j}{n}\right)\right\rvert\\
      &=& \left\lvert nx_j  + z_j- \frac{m_j}{n} \right\rvert
              + \left\lvert z_j - \left(ny_j + \frac{m_j}{n}\right)\right\rvert\\
      && \leq s + r < 2s,
\end{eqnarray*}
and thus
$ \lvert x_j - y_j \rvert < \frac{2s}{n} < \varepsilon$.
Therefore $\xx \in B_\varepsilon^\infty(\yy) \subset \W$, and so $f_{n,\mm} \in C_c(\W) $.

Thus far we have that $f_{n,\mm}(\yy) \neq 0 $,
and that if  $\xx \notin B_\epsilon^\infty(\yy) $
then $f_{n,\mm}(\xx) = 0$.
This observation, paired with $B_\epsilon^\infty(\yy)\cap (S \cup T) = \{\yy\}$, implies that
\[
\nu_{f_{n,\mm}} (S,\mu) = \mu(\yy)f_{n,\mm}(\yy)  \neq 0.
\]
If $\yy \notin T$ then we have that $\nu_{f_{n,\mm}}(T,\alpha)  = 0 $;
and if $\yy \in T$ then
\[\nu_{f_{n,\mm}}(T,\alpha) = \alpha(\yy)f_{n,\mm}(\yy) \neq \mu(\yy)f_{n,\mm}(\yy)
=
\nu_{f_{n,\mm}}(S,\mu),
\]
showing that $\mathcal{F}_{\mathcal{T}}$ separates points in  $\DD$.

Let us now show that if $f$ is Lipschitz and $\mathcal{S} \subset \DD$ is relatively compact,
then the elements of $\mathcal{F}_{\mathcal{T}}$ are Lipschitz on $\mathcal{S}$.
Indeed, let $D,D' \in \mathcal{S}$, and let $\delta > 0$ be so that $d_B(D,D') < \delta$.
Moreover, fix a $\delta$-matching   between $D $ and $D' $.
Recall that this means that if $\xx, \xx'$ are matched,
then $\|\xx - \xx'\|_\infty < \delta$,
and that
if for $\zz = (z_1,z_2) \in \W$ unmatched we let
\[
\bar{\zz} = \left(\frac{z_1 + z_2}{2}, \frac{z_1 + z_2}{2}\right)
\]
then $\|\zz - \bar{\zz}\|_\infty  < \delta$.

Since $\mathcal{S}$ is relatively compact, then it is uniformly off-diagonally finite (see Def. \ref{def:UODF}),
and hence there exists a uniform upper bound $\beta > 0$ for the multiplicity in $\mathsf{supp}(f)$ of any diagram in $\mathcal{S}$---see Definition \ref{defn:MultiplicityOfDgm}.
Now, if $L > 0$ is the Lipschitz constant of $f$ with respect to the sup norm $\|\cdot\|_\infty$ on $\W$,
and  $n\in \N$, $\mm \in \mathbb{Z}^2$ are so that $f_{n,\mm} \in \mathcal{T}$, then $f_{n,\mm} \in C_c(\W)$ is also Lipschitz with constant
$nL$.
Moreover, if $\zz \in \W$ is unmatched, then
$f_{n,\mathbf{m}}(\bar{\zz})= 0$ and
\begin{eqnarray*}
% \nonumber % Remove numbering (before each equation)
  \lvert \nu_{f_{n,\mm}}(D) - \nu_{f_{n,\mm}}(D')\rvert \;\;&\leq &
\sum_{\xx, \xx'  \atop \mbox{\tiny matched} } \lvert f_{n,\mm}(\xx) - f_{n,\mm}(\xx')\rvert +
\sum_{\zz \atop \mbox{\tiny unmatched}}  \lvert f_{n,\mm}(\zz)\rvert   \\[.2cm]
    &\leq &   \sum_{\xx \tiny{\mbox{ or }} \xx'  \in\, \mathsf{supp}(f)\atop \mbox{\tiny matched} } nL\delta \;\;+
\sum_{\zz \,\in \,\mathsf{supp}(f) \atop \mbox{\tiny unmatched}}  \lvert f_{n,\mm}(\zz) - f_{n,\mm}(\bar{\zz})\rvert\\[.2cm]
   &\leq & 2n\beta L  \delta.
\end{eqnarray*}
Since this inequality holds for any $\delta > d_B(D,D')$, it readily follows that
\[
\lvert \nu_{f_{n,\mm}}(D) - \nu_{f_{n,\mm}}(D')\rvert \leq 2n\beta  L  d_B(D,D'),
\]
and hence $\nu_{f_{n,\mm}}$ is Lipschitz on $\mathcal{S}$.

\end{proof}

\begin{remark} Recall that if $\mathcal{S} \subset \DD$ is relatively-compact, then there exist boxes $\mathbb{B}_k \subset \W$, $ k\in \N$,
so that $D\subset \bigcup_{k\in \N} \mathbb{B}_k$ for every $D\in \overline{\mathcal{S}} $ (see Eq. \ref{eq:defBox} and Fig. \ref{fig:BoxNotationExample}).
This implies that in order to find approximations to a continuous function $F: \overline{\mathcal{S}} \longrightarrow \mathbb{R}$, it suffices to start with a nonzero $f\in C_c(\W)$
and take only the re-scaled translates $f_{n,\mathbf{m}}$ for which \[
\mathsf{supp}(f_{n,\mathbf{m}}) \cap \bigcup_{k\in \N} \mathbb{B}_k \neq \emptyset.\]
\end{remark}

\begin{remark}
Let us say a few words on aligning the approximation methodologies outlined in
Theorems \ref{thm:approximation} and \ref{thm:templates}---where one
has access to infinitely many template functions---with practical algorithmic
implementations that are inherently finite.
A useful point of reference is the implementation  of generalized linear models of a real variable $x$, in order to approximate a continuous function
$F: I \subset \mathbb{R} \longrightarrow \mathbb{R}$  on a compact interval $I$.
The Stone-Weierstrass theorem implies that $F$ can be uniformly approximated in $I$ via polynomials, or equivalently,
via linear combinations of the monomials
$1,x,x^2,\ldots ,x^n$ for $n\in \N$ arbitrary.
In theory, one would need infinitely many monomials for
arbitrary approximations, but in reality
with finite training data, only finitely many monomials
are relevant and even advisable.
Indeed, this is the common bias-variance tradeoff where
tools like cross validation can be used.
The same view applies to learning with template functions
on persistence diagrams.
Given finite training data $D_1,\ldots, D_N \in \mathcal{S} \subset \mathcal{D}$
and $F(D_1),\ldots, F(D_N)$, then only finitely many  re-scaled
translates $f_{n,\mm}$ of $f\in C_c(\W)$ are relevant
to the problem at hand.
These can be interpreted as a user-provided hyperparameter
for the model---like the maximum degree of monomials in linear regression---or they
 can be derived from adaptive methods
as described in \cite{tymochko2019adaptive} or \cite{polanco2019adaptive}.
The main idea being that one can identify
those compact regions in $\W$ which are most relevant to the learning task at hand.
\end{remark}

\subsection{The Need for Compactness}
In Section \ref{sec:CounterExamples} we provided examples
of non-relatively-compact  sets $\mathcal{S} \subset \mathcal{D}$ which   satisfied only two out of the three conditions from
 Theorem   \ref{thm:Compactness}.
We will revisit these examples  next to see how the approximation strategy described in Theorem \ref{thm:approximation} can fail in the absence of compactness.
Specifically, we will construct continuous functions
$F: \overline{\mathcal{S}} \longrightarrow \mathbb{R}$
on the closure of $\mathcal{S} \subset \mathcal{D}$, which cannot be uniformly approximated by functions
of the form (\ref{eq:DenseSystem}).
Indeed, let $\mathcal{S} = \left\{D_n = (S_n , \mu_n) \mid n \in \N\right\}$ be:
\begin{enumerate}
  \item Not Uniformly Off-Diagonally Finite:  $S_n = \{(0,1)\}$ with $\mu_{n}(0,1) = n$. Let $f\in C_c(\W)$ be so that
      $f(0,1) =1$, and let
       $F: \mathcal{D} \longrightarrow \mathbb{R}$ be defined
      as $F(D) =  e^{\nu_f(D)}$.
      Note that  $F$ is continuous on $\mathcal{D}$,   and that $F(D_n) = e^n$ for every $n\in \N$.

      Let $f_1,\ldots, f_N \in C_c(\W)$ and let $p \in \mathbb{R}[x_1,\ldots, x_N]$ be a polynomial of degree $k\geq 0$.
      Since
      \[
      \lvert \nu_{f_j}(D_n)\rvert  = n\lvert f_j(0,1)\rvert \leq n\|f_j\|_\infty
      \]
      for every $1\leq j \leq N$ and every $n\in \N$, then
      \[
      \lim_{n \to \infty}
      \frac{\left\lvert
      p\left(\nu_{f_1}(D_n), \ldots, \nu_{f_N}(D_n) \right)
      \right\rvert}{n^{k+1}} = 0.
      \]
      If we had the uniform bound
      \[
      \left\lvert
      F(D_n) -
       p\left(\nu_{f_1}(D_n), \ldots, \nu_{f_N}(D_n) \right)
      \right\rvert < \epsilon
      \]
      for some $0 < \epsilon < \infty$ and every $n\in \N$, then
      dividing both sides of the inequality by $n^{k+1}$ and taking the limit as $n\to \infty$
      would imply that
      \[
      \lim_{n\to\infty} \frac{e^n}{n^{k+1}} = 0.
      \]
      This is a contradiction since the above limit is $\infty$, and
      therefore the restriction of $F$ to $\overline{\mathcal{S}}$ cannot
      be uniformly approximated by functions of the form (\ref{eq:DenseSystem}).
   \\
  \item Not Off-Diagonally Birth Bounded:  $S_n = \{(n, n+1)\}$ with $\mu_{n}(n,n+1) = 1$.
      The first thing to note is that $d_B(D_n , D_m) = \frac{1}{2}$ for every
      $n\neq m \in \N$, and thus
      $\mathcal{S}$ has no accumulation
      points.
      This implies that  $\overline{\mathcal{S}} = \mathcal{S}$, and that
      any real-valued function on $\overline{\mathcal{S}}$ is continuous. Let
      $F: \overline{\mathcal{S}}\longrightarrow \mathbb{R}$ be defined as
      \[
      F(D_n) = n \;\;\; ,\;\; n\in \N.
      \]
      If $f_1,\ldots, f_N \in C_c(\W)$ and $p \in \mathbb{R}[x_1,\ldots, x_N]$, then
      \[
      \nu_{f_1}(D_n) = \cdots = \nu_{f_N}(D_n) = 0
      \]
      for all $n$ large enough, and thus there exists $N_0\in \N$ so that
      \[
      p\left(\nu_{f_1}(D_n), \ldots,  \nu_{f_N}(D_n)\right) = p(0,\ldots, 0)
      \]
     for every $n \geq N_0$.
      It follows that
      \[
      \lim_{n\to \infty}
      \left\lvert
      F(D_n)
      -
      p\left(\nu_{f_1}(D_n), \ldots,  \nu_{f_N}(D_n)\right)
      \right\rvert
      =\infty
      \]
      and thus $F$ cannot be uniformly approximated in $\overline{\mathcal{S}}$
      by functions of the form (\ref{eq:DenseSystem}).
       \\
  \item Not Bounded:   $S_n = \{(0,n)\}$ with $\mu_{n}(0,n)= 1$. Let
  $F : \mathcal{D} \longrightarrow \mathbb{R}$ be
      \[
      F(D) = 2d_B(D, \varnothing).
      \]
      It follows that $F$ is continuous and that  $F(D_n) = n$ for every $n\in \N$.
      The argument now proceeds exactly as in (2) above.
\end{enumerate}

\subsection{Compact Approximations with the Wasserstein Distance}
Thus far we have established that  $\nu : \mathcal{D} \longrightarrow C_c(\W)'$
is a linearization of  $(\mathcal{D}, d_B)$ (see Theorem \ref{thm:Linearization}),
and that the rescales/translates of any nonzero $f\in C_c(\W)$ can be
used to construct compact-open dense subsets of $C(\mathcal{D}, \mathbb{R})$ (see Theorems \ref{thm:templates} and \ref{thm:approximation}).
A natural question is whether similar results hold for $(\mathcal{D}_p, d_{W_p})$---i.e., for the $p$-Wasserstein distance, $p \geq 1$.
We will see next that this is indeed the case.

Recall that $\mathcal{D}_p \subset \mathcal{D}$ for every $p \in \N$, and that
this inclusion is (uniformly) continuous. It follows that,

\begin{corollary} $\nu$ restricts to a $d_{W_p}$-continuous function
$\nu: \mathcal{D}_p  \longrightarrow C_c(\W)'$ with the same properties
as in Theorem \ref{thm:Linearization}
\end{corollary}

This implies that $\nu_f : \mathcal{D}_p \longrightarrow \mathbb{R}$ is continuous
for every $f\in C_c(\W)$, and thus any template system $\mathcal{T} \subset C_c(\W)$ for $\mathcal{D}$ is also a template system for $\mathcal{D}_p$.
Using the first half of Theorem \ref{thm:templates}, and following the same proof as in Theorem \ref{thm:approximation}, we have that

\begin{corollary} Let $f\in C_c(\W)$ be nonzero, and let
\[
\TT = \left\{f_{n,\mm} \mid n \in \N,\mm\in \mathbb{Z}^2\right\} \cap C_c(\W).
\]
Then the set of functions of the form
\[
D \mapsto P\big(\nu_{f_1}(D), \ldots, \nu_{f_N}(D)\big)
\]
for $N\in \N$, $P\in \mathbb{R}[x_1,\ldots, x_N]$ and $f_1,\ldots, f_N \in \mathcal{T}$,
is compact-open dense in $C(\mathcal{D}_p, \mathbb{R})$.
\end{corollary}

\section{Example template functions}
\label{sec:TemplateFunctionExamples}

At this point in the story, we shift our view from theory to practice, as the mathematical framework built to this point leaves open the choice of template system.
In our experiments, we use two collections of functions, but we have no reason to suspect that these are the only or even the best available options.
The first, which we call tent functions, are described in Sec.~\ref{ssec:Tents}.
The second are interpolating polynomials, traditionally used for approximating functions, which are described in Sec.~\ref{ssec:InterpPoly2}.

For the entirety of this section, we will define functions on the birth-lifetime plane as this simplifies notation substantially.
We use the tilde to denote the portions that are defined in this plane to emphasize the change from the birth-lifetime plane.
So, let $\widetilde \W= \{(x,y) \mid x \geq 0, y > 0\}$; that is, the conversion of $\W$ to the birth-lifetime plane.
Likewise, let $\widetilde{\W^\varepsilon} = \{(x,y) \in \widetilde \W \mid y > \varepsilon\}$ so that it is the conversion of $\W^\varepsilon$ to the birth-lifetime plane.
Given   $\xx = (a,b) \in \W$, we write $\tilde {\xx }= (a,b-a) \in \widetilde \W$ for the converted point.
Given a diagram $D = (S,\mu)$, we write $\widetilde D = (\widetilde S,\widetilde \mu)$ where $\widetilde S = \{ \tilde \xx \mid \xx \in S\}$ and $\tilde \mu(\widetilde \xx) = \mu(\xx)$.

%-------------------------------
\subsection{Tent functions }
\label{ssec:Tents}
%-------------------------------

\begin{figure}[tb]
	\centering
	\includegraphics[width = \textwidth]{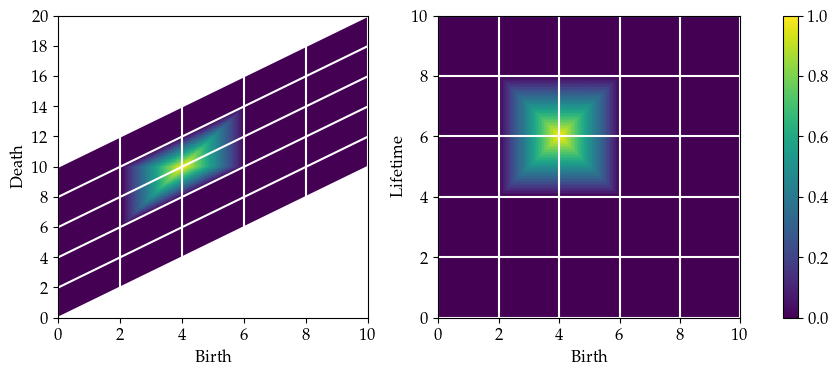}
	\caption{Tent function $g_{(4,6),2}$ drawn in the birth-death plane (left) and in the birth-lifetime plane (right) for the grid determined by $d=5$, $\delta = 2$, and $\varepsilon = 0$. Note that the template system for tent functions require $\varepsilon > 0$, which would shift this function up by $\varepsilon$. }
	\label{fig:TentFuncExample}
\end{figure}

We first define a template system in the birth-lifetime plane which we call tent functions.
Given $\aa = (a,b) \in \widetilde \W$ and a radius   $0 <\delta < b$,  define the tent function on $\widetilde \W$ to be
\begin{equation*}
	g_{\aa, \delta}(x, y) = \left\lvert 1-\frac{1}{\delta} \max\{\lvert x-a\rvert, \lvert y- b\rvert\}\right\rvert_+
\end{equation*}
where $\lvert r\rvert_+ = \max\{r ,0 \}$ for $r\in \mathbb{R}$.
As $\delta < b$, this function has support in the compact box $[a-\delta, a+\delta] \times [ b- \delta, b+\delta] \subset \widetilde \W$.

Given a persistence diagram $D = (S,\mu)$, the tent function is defined to be the sum over the evaluation on the points in the diagram, namely
\begin{equation*}
G_{\aa,\delta}(D) = \widetilde G_{\aa,\delta}(\tilde D) = \sum_{\tilde \xx \in \tilde S} \widetilde\mu(\xx) \cdot g_{\aa, \delta}(\tilde \xx).
\end{equation*}
We use $G$ or $\widetilde G$ depending on whether we want our input to be a diagram in the birth-death or birth-lifetime plane, respectively, but all subscript notation is written in the birth-lifetime plane for ease of notation.

We then have the following theorem.
\begin{theorem}
The collection of tent functions
\begin{equation*}
	\left \{ G_{\aa,\delta} \mid \aa = (a,b) \in \widetilde \W, 0 <\delta < b\right \}
\end{equation*}
separates points in $\DD$.
\end{theorem}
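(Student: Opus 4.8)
The plan is to adapt the point-separation argument from the proof of Theorem \ref{thm:templates}, in an easier form: since a tent function may be centered \emph{exactly} at a prescribed point, I would not need the translate-and-rescale construction used there, only a clean choice of radius. Given two distinct diagrams $D=(S,\mu)$ and $D'=(T,\alpha)$, I would first fix, after possibly swapping them, a point $\yy\in S$ witnessing the difference: either $\yy\notin T$, or $\yy\in T$ with $\mu(\yy)\neq\alpha(\yy)$. Writing $\tilde\yy=(a,b)$ for its image in the birth-lifetime plane (so $b=\pers(\yy)>0$), the goal is to produce $\delta$ with $0<\delta<b$ such that $g_{\tilde\yy,\delta}$ takes the value $\ppart{1}=1$ at $\tilde\yy$ but vanishes at every other point of $\widetilde S\cup\widetilde T$; then $G_{\tilde\yy,\delta}(D)=\mu(\yy)$ while $G_{\tilde\yy,\delta}(D')$ is either $0$ or $\alpha(\yy)$, so in either case the two diagrams are separated.

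The one substantive step — and the place where the defining axiom of $\DD$ is indispensable — is producing such a $\delta$. I would set $\e_0=b/2$; by definition of $\DD$ the sets $S\cap\overline{\W^{\e_0}}$ and $T\cap\overline{\W^{\e_0}}$ are finite, so the image $\widetilde A$ of their union in the birth-lifetime plane is a finite set containing $\tilde\yy$, all of whose (pairwise distinct) points have lifetime at least $b/2$. Letting
\[
d_0=\min\bigl\{\|\tilde\yy-\tilde\xx\|_\infty : \tilde\xx\in\widetilde A,\ \tilde\xx\neq\tilde\yy\bigr\}>0
\]
(with the convention $d_0=+\infty$ when $\widetilde A=\{\tilde\yy\}$), any $\delta$ with $0<\delta<\min\{d_0,b/2\}$ works. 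Indeed $0<\delta<b$, so $g_{\tilde\yy,\delta}$ is a genuine tent function; and if some $\tilde\xx\in\widetilde S\cup\widetilde T$ had $g_{\tilde\yy,\delta}(\tilde\xx)\neq 0$ then $\|\tilde\xx-\tilde\yy\|_\infty<\delta<b/2$, forcing the lifetime of $\tilde\xx$ to exceed $b/2$, hence $\tilde\xx\in\widetilde A$, hence (if $\tilde\xx\neq\tilde\yy$) $\|\tilde\xx-\tilde\yy\|_\infty\geq d_0>\delta$ — a contradiction. This is a short finiteness argument; the only thing to watch is balancing the constraint $\delta<b$, needed for $g_{\tilde\yy,\delta}$ to be defined, against the isolation requirement $\delta<d_0$, which is why I cap $\delta$ below $b/2$ rather than below $b$.

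Once the vanishing is established the conclusion is immediate: evaluating $G_{\tilde\yy,\delta}$ on $D$ and on $D'$ retains only the contribution at $\tilde\yy$, giving $G_{\tilde\yy,\delta}(D)=\mu(\yy)\neq 0$ and $G_{\tilde\yy,\delta}(D')\in\{0,\alpha(\yy)\}$, which differs from $\mu(\yy)$ in each case. I do not expect any genuine obstacle beyond the bookkeeping in the second paragraph; in particular the argument is self-contained and does not invoke Theorem \ref{thm:templates} itself, only the finiteness property that underlies it.
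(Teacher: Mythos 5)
Your argument is correct and follows essentially the same route as the paper's proof: isolate a witnessing point $\yy$, use the finiteness of off-diagonal multiplicity to shrink $\delta < b/2$ until the support of the tent centered at $\tilde\yy$ meets $\widetilde S\cup\widetilde T$ only at $\tilde\yy$, and compare $G_{\tilde\yy,\delta}(D)=\mu(\yy)$ with $G_{\tilde\yy,\delta}(D')\in\{0,\alpha(\yy)\}$. The only difference is cosmetic: you make the choice of $\delta$ explicit via the minimum $\ell^\infty$-gap $d_0$, where the paper simply asserts existence from finiteness.
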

\begin{proof}
Let
$D_1 = (S_1,\mu_1)$ and $D_2 = (S_2,\mu_2) \in \DD$ be distinct.
WLOG there is an $\xx  \in S_1$ for which either (i) $\xx \not \in S_2$ or (ii) $\xx \in S_2$ but $\mu_1(\xx) > \mu_2(\xx)$.
For ease of notation, assume in case (ii) that $\xx \in S_2$ and $\mu_2(\xx) = 0$.
Then we always have $\xx \in S_2$ and  $\mu_1(\xx) > \mu_2(\xx)$.

Let $\widetilde \xx = (a,b)$.
For any $\delta$, define $\widetilde{B_\delta} =[a-\delta, a+\delta] \times [b - \delta, b + \delta]$ and note that this is the support of $g_{\xx,\delta}$.
As $D_1$ and $D_2$ are in $\DD$, both diagrams have finite multiplicity in $\widetilde{\W^{b/2}}$.
So, there exists a $\delta <b/2$ so that  $\widetilde {S_1} \cap \widetilde{B_\delta} = \{\xx\} = \widetilde {S_2} \cap B$.
As $\xx$ is the only point in either diagram in the support of $g_{\xx,\delta}$,
\begin{equation*}
	G_{\xx,\delta}(D_1) = \mu_1(\xx) > \mu_2(\xx) = G_{\aa,\delta}(D_2).
\end{equation*}
Thus, the collection of tent functions separates points.

\end{proof}

For practical purposes, we pick a subset of these tent functions.
Let $\delta > 0$ be the partition scale, let $d$ the number of subdivisions along the diagonal (resp. $y$ axis), and let $\varepsilon>0$ be the upward shift.
In our experiments described in Sec.~\ref{sec:Experiments}, we use the collection of tent functions given by
\begin{equation}
  \label{eq:TentFunctions}
  \left\{  G_{(\delta i,\delta j + \varepsilon), \delta } \mid 0 \leq i \leq d, 1 \leq j \leq d \right\}.
\end{equation}
That is, these are the tent functions centered at a regular grid shifted up by $\varepsilon$ to ensure that $ g$ is supported on a compact set in $\widetilde \W$.
See Fig.~\ref{fig:TentFuncExample} for an example.

%-------------------------------
\subsection{Interpolating  polynomials}
\label{ssec:InterpPoly2}
%-------------------------------
Say we are given a nonempty, finite set of distinct mesh values $\AA = \{a_i\}_{i=0}^m \subset \mathbb{R}$ and a collection of evaluation values $\{c_i \in \mathbb{R}\}$, the first goal is to build a polynomial such that $f(a_i) = c_i$ for all $i$.
The Lagrange polynomial $\ell_j^\AA(x)$ corresponding to node $a_j$ is defined as
\begin{equation}
\ell^\AA_j(x) = \prod\limits_{i\neq j}\frac{x-a_i}{a_j-a_i}.
\end{equation}
Note that this function satisfies
%--------------------------------
\begin{equation*}
	\ell^\AA_j(a_k) =	\begin{cases}
						1, & j= k, \\
						0, & {\rm otherwise},
					\end{cases}
          \qquad \text{ and } \qquad
  \sum\limits_{j=0}^{m}{\ell^\AA_j(x)} = 1.
\end{equation*}
The Lagrange interpolation polynomial is then simply $f(x) = \sum_{j=0}^m c_j\ell^\AA_j(x) $.
Note that for numerical stability, one must work with the barycentric form of Lagrange interpolation formula described by \cite{Berrut2004} and shown in Appendix \ref{sec:interpAlg}.

We will now  use these polynomials to create functions on $\widetilde\W$.
Assume we have two collections of mesh points
$\AA = \{a_i\}_{i=0}^m \subset \mathbb{R}$ and
$\BB = \{b_i\}_{i=0}^n \subset \mathbb{R}_{>0}$
so that $(a_i,b_j) \in \widetilde \W$ for all $i,j$.
Then, given a collection of evaluation points $\CC = \{c_{i,j}\}_{i,j} \subset \mathbb{R}$, we want to build a polynomial for which $f(a_i,b_j) =c_{i,j}$.
Note that in general the evaluation points $\CC$ are not part of the persistence diagrams to be evaluated; nevertheless, these values are not needed in our construction but we do keep track of their coefficients.
We define the 2D interpolating polynomial for the collection $\AA, \BB, \CC$ to be
%--------------------------------
\begin{equation}
  \label{eq:interpolatingPolynomial}
f(x, y) = \sum_{i=0}^{m}\sum_{j=0}^{n} {c_{i,j} \cdot g\left(\ell^\AA_i(x) \cdot \ell^\BB_j(y)\right)}.
\end{equation}
%--------------------------------
where $g(\cdot)$ is either the identity function or $g(\cdot) = \lvert \cdot\rvert$; in our experiments we used the latter for the simple reason that it seemed to give better results.
We now evaluate $f$ at each of the $N$ query points which are the points of a persistence diagram in $\widetilde \W$ to obtain $N$ equations that we can write as
%--------------------------------
\begin{equation}
\label{eq:featurization}
\mathbf{f} = \mathbf{L} \, \mathbf{c},
\end{equation}
%--------------------------------
where $\mathbf{L}$ is an $((m+1) \times (n+1))\times N$ matrix, and $\mathbf{f}$ is an $(m+1) \times (n+1)$ vector obtained by concatenating a 2D mesh, similar to the one shown in Fig.~\ref{fig:2dMesh}, row-wise.
Note that this representation implies an ordering of the points of the persistence diagram, but after the next step, the order will not matter.

Each column of matrix $\mathbf{L}$ in Eq.~\eqref{eq:featurization} represents a vector that describes the contributions of all the query points to the corresponding entry in $\mathbf{f}$.
Renumbering the entries in vector $\mathbf{f}$ according to $(i,j)\mapsto i(n+1) + j + 1=r$ where $r \in \{0, \ldots, (m+1)(n+1) \}$, we can now assign a score for each point in the mesh using the map $S_r: \mathbb{R}^N \to \mathbb{R}$, i.e., by operating on the rows of matrix $\mathbf{L}$ according to
\begin{equation}
S_r = \sum\limits_{j=0}^{N-1}{L_{j,r}}.
\end{equation}
Choosing a larger base mesh implies using a higher degree polynomial in the interpolation.
Therefore, the role of increasing the degree of the polynomial is similar to the role of increasing the number of tent functions.
A larger mesh leads to more features which gives a tool for either increasing or reducing the number of features.
The former improves the fit to the training set, while the latter reduces the number of features which allows mitigating overfitting effects.
While any class of interpolating polynomials can be used, in this study we chose Chebyshev interpolating polynomials due to their excellent approximation properties, see \cite{Trefethen2012}.
Appendix \ref{sec:interpAlg} describes how to use the interpolation matrices separately obtained for each of the birth times and lifetimes of a given persistence diagram to construct $L_{j,r}$.

\begin{figure}
  \centering
  \includegraphics[width = .5\textwidth]{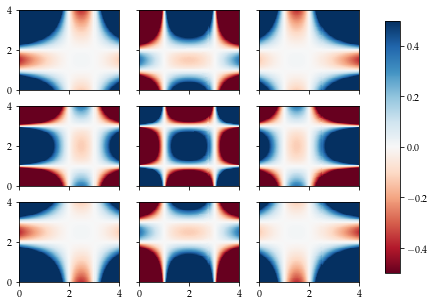}
  \caption{An example of interpolating polynomials for the mesh $\AA = \BB = \{1,2,3\}$ for evaluation values $\CC_{i,j} = \{ c_{i,j} = 1, c = 0 \text{ else}\}$.
	Viewing the grid of images, $f^{\AA,\BB}_{i,j}$  is drawn at location $(i,j)$ in the figure, with $(1,1)$ at the bottom left.}
  \label{fig:InterpPolyEx}
\end{figure}

Fix a compact region $K \subset \widetilde \W$, an $\varepsilon > 0$ such that
\[
\overline{K^\varepsilon} = \{\xx \in \mathbb{R}^2 \mid  d_K(\xx) \leq \varepsilon\} \subset \widetilde{\W}
\;\;\;\;\;\;\;\; \mbox{where } \;\;\;\;\;\;\;\;
d_K(\xx) = \inf_{\yy \in K} \|\xx - \yy\|
\]
and  a collection of mesh points $\{(a_i,b_j)\}\subset K$ given by $\AA$ and $\BB$ as above.
Define $h_{K,\varepsilon}$ to be a continuous function on $\widetilde \W$ such that
\begin{equation*}
	h_{K,\varepsilon}(\xx) =
	\begin{cases}
	1 & \xx \in K\\
	0 & d_K(\xx) \geq \varepsilon
	\end{cases}
\end{equation*}
For instance, one can let $h_{K,\varepsilon}(\xx) = \max \left\{ 0 \; , \; 1 - \frac{1}{\epsilon}d_K(\xx) \right\} $.
Note that the support of this function is contained in the compact set $\overline{K^\varepsilon}\subset \widetilde \W$.

Let $\CC_{i,j}$ be the collection of evaluation values which are entirely 0 except for $c_{i,j} = 1$.
Define $f^{\AA,\BB}_{i,j} = f_{i,j}$ to be the interpolating polynomial (Eq.~\ref{eq:interpolatingPolynomial}) for this setup.
Then the function on diagrams is defined to be
\begin{equation*}
	F^{\AA,\BB,K,\varepsilon}_{i,j}(D)  =
\widetilde F^{\AA,\BB,K,\varepsilon}_{i,j}(\widetilde D)
:=
\sum_{\tilde \xx \in \tilde S}
	\mu(\tilde \xx) \cdot
	f^{\AA,\BB}_{i,j}(\tilde \xx) \cdot
	h_{K,\varepsilon}(\tilde \xx).
\end{equation*}
See Fig.~\ref{fig:InterpPolyEx} for examples of these functions for different parameter choices.
We have the following theorem to show that these interpolating polynomials can be used as template functions.

\begin{theorem}
The collection of interpolating polynomials
\begin{equation*}
	\left \{F^{\AA,\BB,K,\varepsilon}_{i,j} \right\}
\end{equation*}
separates points, where the collection varies over all choices of compact $K \subset \widetilde \W$, $\delta \in \mathbb{R}_{>0}$, and of mesh $\AA$, $\BB$ as specified above.
\end{theorem}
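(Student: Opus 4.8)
The plan is to mimic the separation argument already used for tent functions and for the abstract template system in Theorem~\ref{thm:templates}: given two distinct diagrams, isolate a single point where they differ inside the support of a suitable function, and arrange for all other points of both diagrams to be killed. First I would take distinct $D_1 = (S_1,\mu_1)$ and $D_2 = (S_2,\mu_2)$ in $\DD$, and pass to the birth-lifetime plane, so that (after the usual WLOG reduction, and with the case-(ii) convention $\mu_2(\xx)=0$) there is a point $\widetilde\xx = (a,b) \in \widetilde S_1$ with $\mu_1(\xx) > \mu_2(\xx)$. Since $b > 0$ and both $\widetilde D_1, \widetilde D_2 \in \DD$ have finite multiplicity in $\widetilde{\W^{b/2}}$, I can choose a small closed box $Q$ around $\widetilde\xx$, contained in $\widetilde{\W^{b/2}}$, so small that $\widetilde S_1 \cap Q = \{\widetilde\xx\} = \widetilde S_2 \cap Q$ (identifying supports as before).

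Next I would build the data $\AA, \BB, K, \e$ so that $F^{\AA,\BB,K,\e}_{i,j}$ detects exactly $\widetilde\xx$. Take $K = Q$, and choose $\e \in \R_{<0}$ with $|\e|$ small enough that $\overline{K^\e} \subset \widetilde\W$ and $\overline{K^\e}$ still meets $\widetilde S_1 \cup \widetilde S_2$ only in $\{\widetilde\xx\}$; this is possible again by the finite-multiplicity condition in $\widetilde{\W^{b/2}}$. Then $h_{K,\e}$ vanishes on every point of either diagram except $\widetilde\xx$, where $h_{K,\e}(\widetilde\xx) = 1$. Now pick mesh points $\AA = \{a_i\} \ni a$ and $\BB = \{b_j\} \ni b$ inside $K$, say $a = a_{i_0}$ and $b = b_{j_0}$, and set $\CC = \CC_{i_0,j_0}$ (all zero except $c_{i_0,j_0}=1$). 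By the Lagrange interpolation identity $\ell^\AA_{i}(a_k) = \delta_{ik}$, the polynomial $f^{\AA,\BB}_{i_0,j_0}$ satisfies $f^{\AA,\BB}_{i_0,j_0}(a,b) = g(\ell^\AA_{i_0}(a)\cdot \ell^\BB_{j_0}(b)) = g(1) = 1$ for either choice of $g$. Hence
\[
F^{\AA,\BB,K,\e}_{i_0,j_0}(D_1)
= \mu_1(\widetilde\xx)\cdot f^{\AA,\BB}_{i_0,j_0}(\widetilde\xx)\cdot h_{K,\e}(\widetilde\xx)
= \mu_1(\xx),
\]
and likewise $F^{\AA,\BB,K,\e}_{i_0,j_0}(D_2) = \mu_2(\xx)$; since $\mu_1(\xx) > \mu_2(\xx)$ these differ, so the collection separates points.

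The one subtle point — the main obstacle to watch — is that $f^{\AA,\BB}_{i_0,j_0}$ is a polynomial and therefore does \emph{not} vanish off $K$; the cutoff must be carried entirely by $h_{K,\e}$, so I must make sure $h_{K,\e}$ genuinely annihilates every other point of $S_1 \cup S_2$. That is exactly why the box $Q$ and the neighborhood $\overline{K^\e}$ are shrunk using the finiteness of $\Mult(D_i, \widetilde{\W^{b/2}})$: only finitely many points of either diagram have lifetime $\geq b/2$, so for small enough $Q$ and $\e$ none of them other than $\widetilde\xx$ lies in $\overline{K^\e}$, and points with lifetime $< b/2$ are automatically outside $\overline{K^\e}$ as long as $\overline{K^\e} \subset \widetilde{\W^{b/2}}$, which holds once $|\e|$ is small. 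A final routine check is that $\overline{K^\e} \subset \widetilde\W$ so that $F^{\AA,\BB,K,\e}_{i,j}$ is genuinely a function on $\DD$ (equivalently, that the corresponding template function lies in $C_c(\W)$ after converting back), which is guaranteed by the hypothesis imposed on $\e$ in the construction of $h_{K,\e}$.
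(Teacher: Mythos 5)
Your proposal is correct and follows essentially the same route as the paper's proof: reduce to a single point $\widetilde\xx=(a,b)$ where the multiplicities differ, shrink $K$ and $\e$ using finite off-diagonal multiplicity so that $\overline{K^\e}$ isolates $\widetilde\xx$ and $h_{K,\e}$ annihilates every other point, and observe that the interpolating polynomial evaluates to $1$ at the mesh point so that $F(D_i)=\mu_i(\xx)$. The only cosmetic difference is that the paper takes the degenerate singleton mesh $\AA=\{a\}$, $\BB=\{b\}$, whereas you allow a general mesh containing $(a,b)$ and invoke $\ell^{\AA}_i(a_k)=\delta_{ik}$; both yield the same conclusion.
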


\begin{proof}
We are given two persistence diagrams
$D_1 = (S_1,\mu_1)$ and $D_2 = (S_2,\mu_2) \in \DD$, with $D_1 \neq D_2$.
WLOG there is an $\xx =(a,b) \in S_1$ for which either (i) $\xx \not \in S_2$ or (ii) $\xx \in S_2$ but $\mu_1(\xx) > \mu_2(\xx)$.
To avoid case checking, we assume as before  that $\xx \in S_2$ with $\mu_2(\xx) = 0$ in the later case.
This way, in both cases have $\xx \in S_2$ and  $\mu_1(\xx) > \mu_2(\xx)$.

Choose a compact set $K\ni \xx$ and $\varepsilon$ both small enough so that $\overline{K^\varepsilon} \cap S_1 = \xx$ and   $\overline{K^\varepsilon} \cap S_2 = \xx$.
If $\widetilde \xx = (a,b)$, set $\AA = \{a \}$ and $\BB = \{b\}$.
Note that in this overly simplistic setup, $\ell^\AA(x) = x/a$ and $\ell^\BB(y) = y/b$, so the only interpolating polynomial is $f(x,y) = g((xy)/(ab))$.
Whether $g$ is the identity or the absolute value function, $f$ evaluates to 1 at $\widetilde \xx$.
Because the only point in either diagram inside $\overline{K^\varepsilon}$ is $\xx$, $h_{K,\varepsilon}(\tilde \xx) = 1$ and $h_{K,\varepsilon}(\tilde \yy) = 0$ for every other $\yy \in S_1 \cup S_2$, so
\begin{equation*}
	F(D_i) =
	\sum_{\tilde \xx \in \tilde S}
	\mu(\tilde \xx) \cdot
	f(\tilde \xx) \cdot
	h_{K,\varepsilon}(\tilde \xx)
	=   \mu_i(\xx)
\end{equation*}
 for $i = 1,2$.
 Thus $F$ separates the two diagrams.

\end{proof}

In our experiments, we set $K$ to be a box $[A,A'] \times [B,B']$ with $B>0$, $\varepsilon$ to be either machine precision or $B/2$, and use the non-uniform Chebyshev mesh as seen in Fig.~\ref{fig:2dMesh} (\cite{Trefethen2012}).
 % \LM{Definition of Chebyshev points?}
Our naieve implementation of this featurzation was quite slow; see Sec.~\ref{sec:interpAlg} for an explanation of the vectorization used to implement and speed up the code.

%--------------------------------
\begin{figure}
\centering
\includegraphics[width=0.5\textwidth]{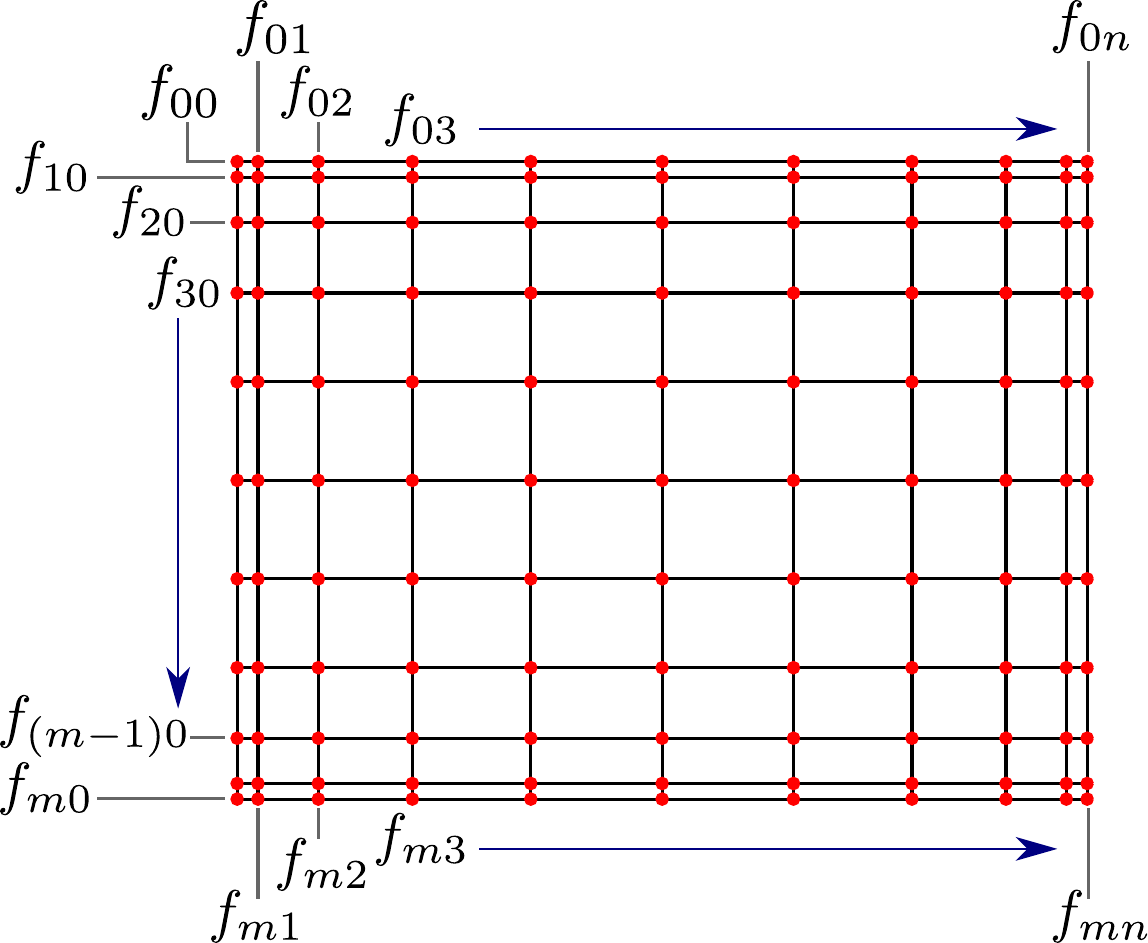}
\caption{An example $11 \times 11$ 2D grid with $m=n=10$ defined using the Chebyshev points of the second kind. }
\label{fig:2dMesh}
\end{figure}
%--------------------------------

% Content for the Regularized Regression section
% Search for \input{sec-Regression} in the main document

\section{Implementing regularized regression/classification with templates}
\label{sec:RidgeRegression}
The results thus far imply that template systems on $\mathcal{D}$ can be used
to featurize persistence diagrams;
these vectorizations, in turn, can be used as inputs to machine learning algorithms
for classification and regression tasks.
We describe next one avenue for implementing these ideas  in practice.
Indeed, given a finite collection of labeled persistence diagrams
\[\{ (D_m, \ell_m)\}_{m = 1}^M \subset \mathcal{D} \times L\]
with $L\subset \mathbb{R}$,
and a template system $\mathcal{T} \subset C_c(\W)$,
the goal is to find $N\in \N$,  template
functions $f_1,\ldots, f_N \in \mathcal{T}$,
and a polynomial $p\in \mathbb{R}[x_1,\ldots, x_N]$,
such that the function
\[
\begin{array}{rccl}
  P : & \mathcal{D} & \longrightarrow & \mathbb{R} \\
   &  D & \mapsto & p\big(\nu_{f_1}(D), \ldots, \nu_{f_N}(D)\big)
\end{array}
\]
satisfies $P(D_m) \approx \ell_m$ for $m=1,\ldots, M$.
It follows from Thm.~\ref{thm:approximation} that this process results in arbitrarily accurate approximations on compact subsets of  $\mathcal{D}$, provided the labels $\ell_m$ vary continuously.
In practice the template functions $f_1,\ldots, f_N$ can be either provided by the user from
a specific class, like tents or interpolating polynomials (sec. \ref{sec:TemplateFunctionExamples}),
or can be derived from adaptive and data-driven strategies  as described in related work of the authors;
please refer to \cite{tymochko2019adaptive} or \cite{polanco2019adaptive}
% \footnote{Code available at \url{https://github.com/lucho8908/adaptive_template_systems} }
for several adaptive strategies and comparisons.
Given    template functions $f_1,\ldots, f_N \in C_c(\W)$, the optimal polynomial $p\in \mathbb{R}[x_1,\ldots, x_N]$ is uniquely determined by its vector of coefficients, $\aa \in \mathbb{R}^k$.
 We will make this explicit with the notations $p_\aa$ and $P_\aa$, and an optimization will be set up in order to determine $\aa\in \mathbb{R}^k$ from the available labeled data.

The error of fit $P_\aa(D_m) \approx \ell_m$ is measured in the usual way via  a loss function
\[
\mathcal{E} : \mathbb{R} \times L \longrightarrow \mathbb{R}
\]
where common choices include:
\begin{description}
  \item[Square] is given by $\mathcal{E}_{sq}(t,\ell) = (t - \ell)^2$, $L = \mathbb{R}$,
      and yields a least-squares regression.
      Can handle multi-class classification.
  \item[Hinge] is given by $\mathcal{E}_{hg}(t,\ell) = \max\{0,1 - \ell\cdot t\}$, with $L = \{-1,1\}$,  and appears in the soft-margin classifier of support vector machine.
  \item[Logistic] is given by the log-loss
  $\mathcal{E}_{log}(t,\ell)= \ln\left(1 + e^{-\ell\cdot t}\right)$, with $L = \{-1,1\}$,  and yields logistic regression.
\end{description}
Meanwhile, the complexity of the model can be measured, for instance, via a regularization function
\begin{equation*}
\Omega : \mathbb{R}^k \longrightarrow [0,\infty).
\end{equation*}
The regularized optimization scheme looking to minimize the regularized mean loss is
\begin{equation*}
  % \label{eq:Minimization1} %From liz: I commented this out because we never referenced the equation.
\aa = \argmin_{\vv \in \mathbb{R}^k}
\frac{1}{M}
\sum_{m=1}^M
\mathcal{E}
\left(
P_\vv\big(D_m\big), \ell_m
\right) + \alpha \Omega(\vv)
\end{equation*}
where $\alpha > 0$ is the regularization parameter, often chosen from the set $\{10^n\}_{n \in \mathbb{Z}}$.

\paragraph{Visualization of Coefficients.}
Our collections of template functions have a uniquely 2d geometric flavor.
In particular, for both tent functions and Lagrange polynomials on our formulation, we have a function for each $a_i,b_j$ location on a mesh.
This means that we can pull the coefficients $\vv$ determined in the optimization back to the grid which built them for visualization by drawing a heat map with $v_{i,j}$ drawn at $(i,j)$ to more fully understand the model.
Examples of this are shown in Figs.~\ref{fig:RegressionCoefficients} and \ref{fig:ManifoldCoefficients}.
We can use these heat maps can be used to help  localize the important features in the learning task.
However, since we did not put a sparsity penalization term (e.g.,  $L_1$ regularization or lasso) in the regression problem, the highlighted pixels may be more than what is needed to localize the problem.

% Content for the Experiments section
% Search for \input{sec-Experiments} in the main document

\section{Experiments}
\label{sec:Experiments}
\subsection{Code}

Code for doing classification and regression using tent and interpolating polynomial functions is available in the python \texttt{teaspoon} package\footnote{\url{https://github.com/lizliz/teaspoon}}.
Classification and regression were done using ridge regression, where the loss and regularization functions are both square.
Computation is done using \texttt{RidgeClassifierCV} and \texttt{RidgeCV} functions from the  \texttt{sklearn} package.
These functions are the counterparts of \texttt{Ridge} and \texttt{RidgeCV} functions with built-in cross-validation of the regularization parameter, $\alpha$.
Unless otherwise noted, we have used the default in sklearn, which chooses  $\alpha$ from the set $\{ 10^n \mid n \in \{-1, 0, 1 \}\}$.

The main hyperparameters to be chosen prior to running experiments are those controlling the number of template functions used.
The number of tent functions is controlled by $d$ in Eq.~\ref{eq:TentFunctions}; the default is $d=10$ unless otherwise noted.
We then choose $\delta$ and $\varepsilon$ which control the support of the functions.
If they are not specified in advance, we use the following procedure.
First, determine a bounding box is for the points in all diagrams in the test set.
This is done by setting $\varepsilon$ to be half the minimium persistence of all points in the diagrams, and then setting $\delta$  to be the smallest value so that the box $[0,\delta d] \times [0 \delta d + \varepsilon]$ contains all points in the birth-lifetime plane.
Then in the notation of Eq.~\ref{eq:TentFunctions}, the support of all tent functions is  contained in the box $[-\delta, (d+1) \delta] \times [\varepsilon, (d+1) \delta + \varepsilon]$.

For interpolating polynomials, the number and the type of the mesh points must be specified.
We used the roots of the Chebyshev polynomials of the first kind.
The number of the used functions is controlled by the length of the meshes $\mathcal{A}$ and $\mathcal{B}$ in Eq.~\ref{eq:interpolatingPolynomial}, i.e., the number of the chosen Chebyshev points in each direction.
The default length is $m=n = 10$ unless otherwise noted.

All experiments were run using \texttt{seed = 48824} and 33\% of the data reserved for testing.
Scores for classification experiments are reported using the percent that were correctly classified.
Scores for regression experiments are reported using the coefficient of determination, $R^2$.
Note that this latter score can potentially take negative values; perfect regression would score 1, and a method which returns the constant prediction of the expected value is given a score of 0.

\subsection{Off-diagonal, normally distributed points}
\label{ssec:NormalDistPoints}

We generate diagrams from the following procedure.
Given $\mu$ and $\sigma$, draw $n$ points from the gaussian $N(\mu,\sigma)$ on $\mathbb{R}^2$.
Retain all points which are are in $\W$.
For our simulations, we fixed $\sigma = 1$ and varied $\mu$.
Examples of two overlaid example diagrams are shown in Fig.~\ref{fig:RandomDgms}.

\begin{figure}[!htb]
	\centering
	\includegraphics[width = .49\textwidth]{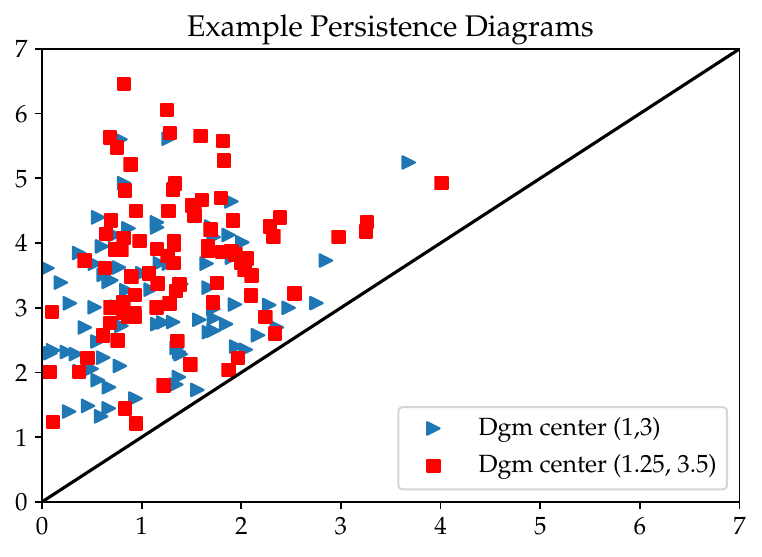}
	\caption{
	Example diagrams randomly generated by the procedure described in Sec.~\protect\ref{ssec:NormalDistPoints}.
	Two diagrams drawn from a distributions with different choices of $\mu$ are shown.
	 }
	\label{fig:RandomDgms}
\end{figure}

\textit{Classification.}
We tested our classification accuracy with the following experiment.
We chose two collections, $A$ and $B$, of 500 persistence diagrams each generated by drawing $n=20$ points (note that this means there are at most $n$ points).
The means $\mu$ were different: $\mu_A = (1,3)$ and $\mu_B$ was varied along the line $(1,3)t + (2,5)(1-t)$ for $t \in [0,1]$.
% See Table \ref{tab:OffDiagNormal_Classification} for further chosen parameters for the study.
Classification accuracy using tent and interpolating polynomial functions is shown in Fig.~\ref{fig:NormalPts_Classification}.
As expected, the correct classification percentage for the test set is around 50\% when $\mu_A \sim \mu_B$, and improves as they move farther apart.
In particular, by the time the means are at distance apart equal to the standard deviation used for the normal distribution ($\sigma = 1$), classification is well above 90\%.
For this particular experiment, we do not see any difference between the choice of template function used.

\begin{figure}[tb]
	\centering
	\includegraphics[width = .7\textwidth]{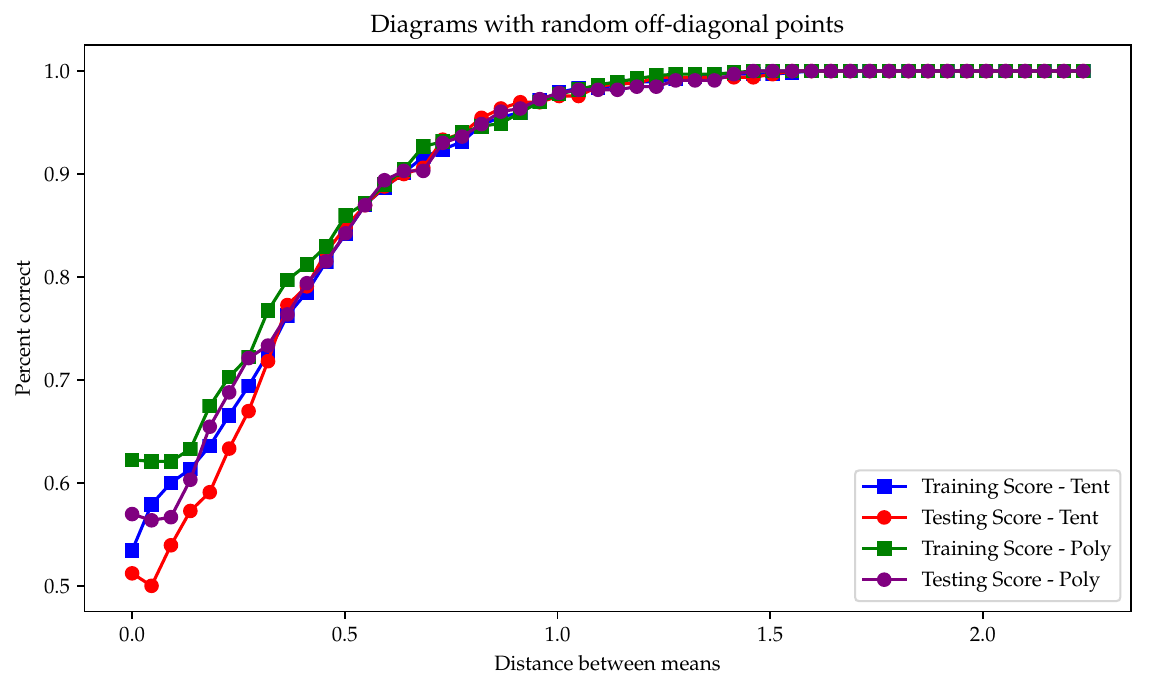}
	\caption{Results from  classification test for pairs of choices of $\mu$ with $\mu_{A} = (1,3)$ and $\mu_{B}$ chosen on the line from $(1,3)$ to $(2,5)$.}
	\label{fig:NormalPts_Classification}
\end{figure}

\textit{Regression.}
We further ran two regression versions of the experiment as follows.
In the first test, we drew 500 diagrams from the above procedure with a  choice of center $\mu$ drawn uniformly on the line segment $t(1,3) + (1-t)(6,8)$, $t \in [0,1]$.
Then, we predicted the distance from $\mu$ to $(1,3)$.
We call this the ``line'' experiment.
Second, we drew $\mu$ from the normal distribution $N((1,3),1)$ and again predicted the distance of $\mu$ from the point $(1,3)$.
We call this the ``ball'' experiment.

Each of these experiments was run 10 times, and the results can be seen in Table \ref{tab:RegressionResults}.
Example predictions for single runs can be seen in Fig.~\ref{fig:RegressionExperiment}, and the coefficients for these examples are in Fig.~\ref{fig:RegressionCoefficients}.
Note that the coefficients are drawn at the location of their index.
In particular, the interpolating polynomials are determined using a non-uniform mesh, so the heatmap for these coefficients does not align with the location of the associated point.

\begin{table}[tb]

  \centering
	\begin{tabular}{r||cc|cc}
		& Tents - Train & Tents - Test & Polynomials - Train & Polynomials - Test\\ \hline
	Line & $0.977 \pm 0.002$ &	$0.970  \pm 0.004$ & $0.979 \pm 0.001$ & 	$0.970 \pm  0.002$\\
	Ball & $0.823 \pm  0.023$ & $0.782 \pm  0.023$ &$0.832 \pm 0.021$ & $0.786 \pm 0.021$
	\end{tabular}
	\caption{The $R^2$ results of the regression tests described in Sec.~\protect\ref{ssec:NormalDistPoints}.}
	\label{tab:RegressionResults}

\end{table}

\begin{figure}[tb]
	\centering
	\includegraphics[width = .39\textwidth]{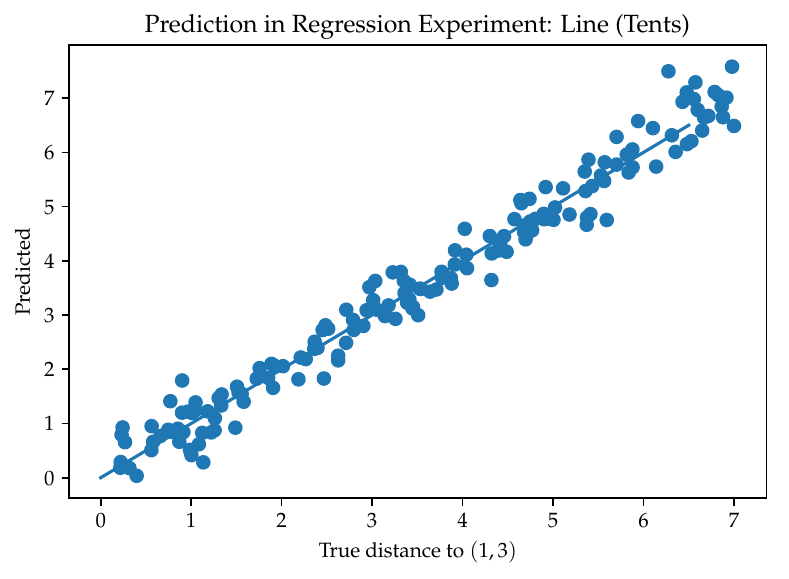}
	\includegraphics[width = .39\textwidth]{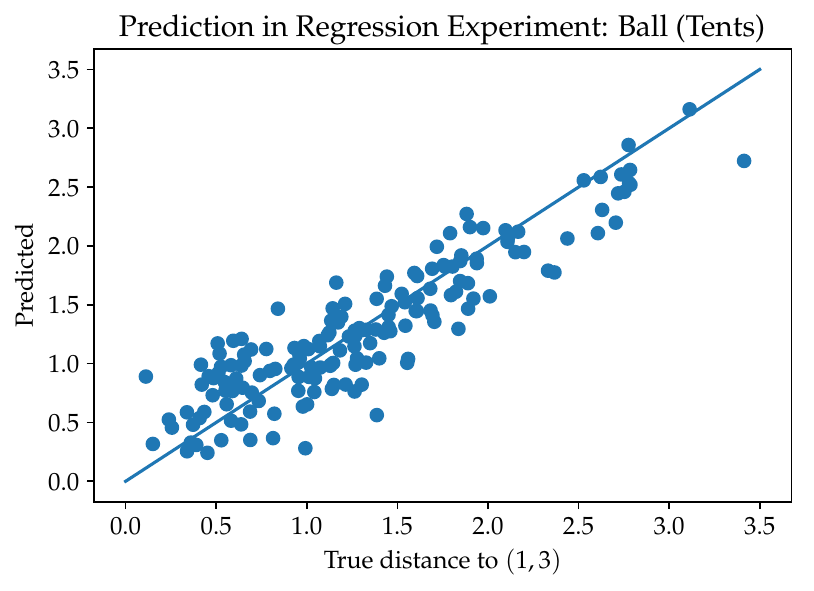}

	\includegraphics[width = .39\textwidth]{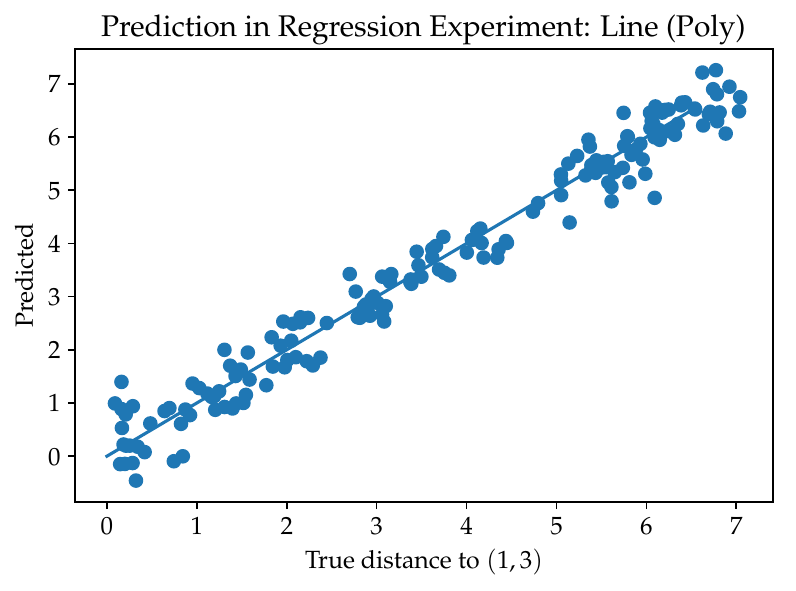}
	\includegraphics[width = .39\textwidth]{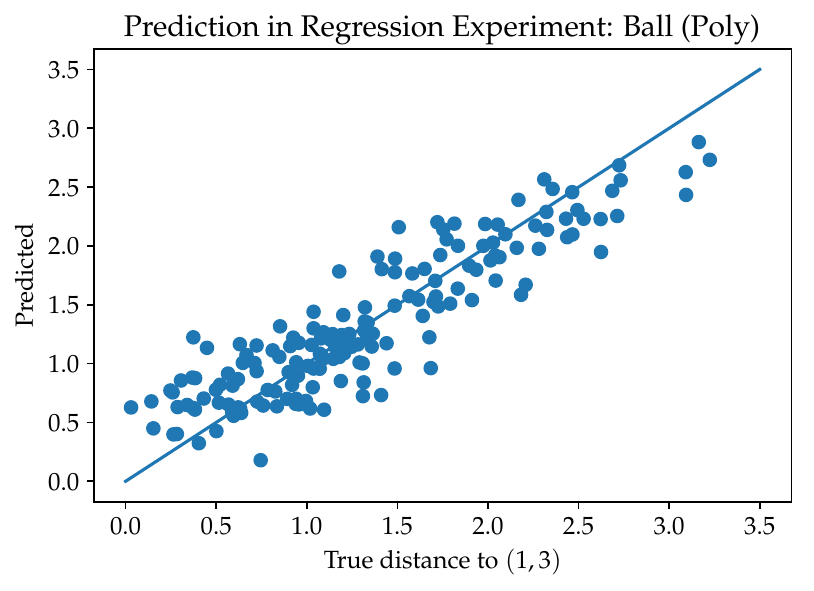}

	\caption{True vs predicted of distance to starting mean for the ball and line experiments described in Sec.~\protect\ref{ssec:NormalDistPoints}.}
	\label{fig:RegressionExperiment}
\end{figure}

\begin{figure}[tb]
	\centering
	\includegraphics[width = .39\textwidth]{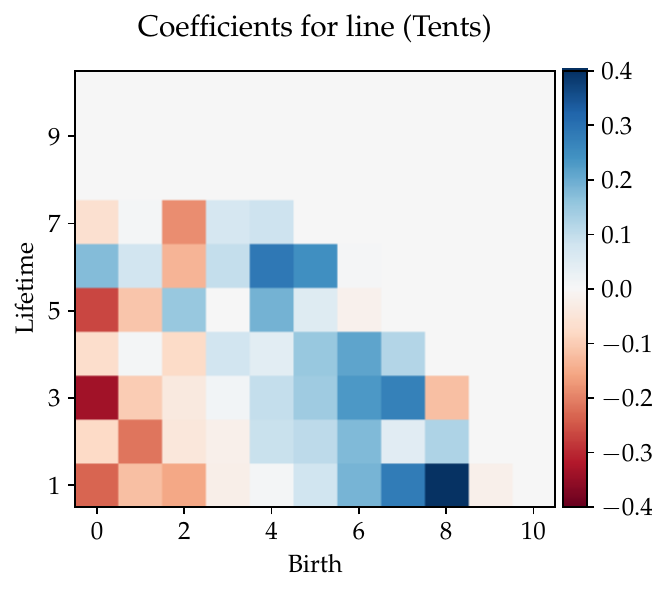}
	\includegraphics[width = .39\textwidth]{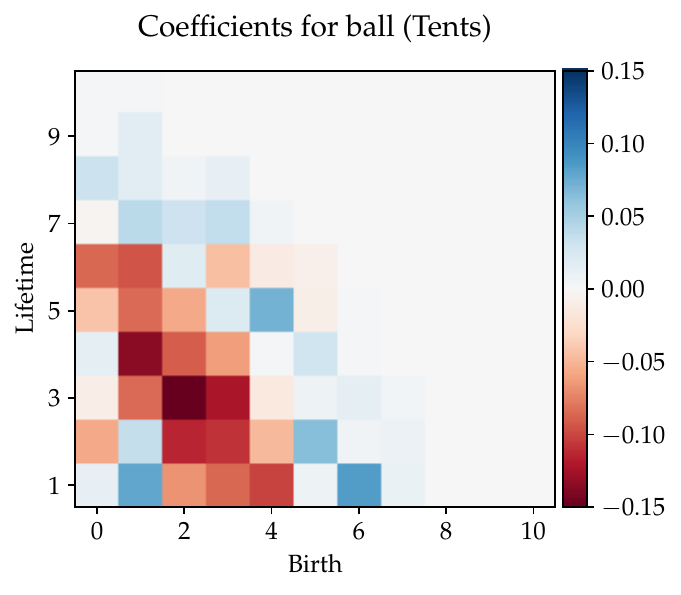}

	\includegraphics[width = .39\textwidth]{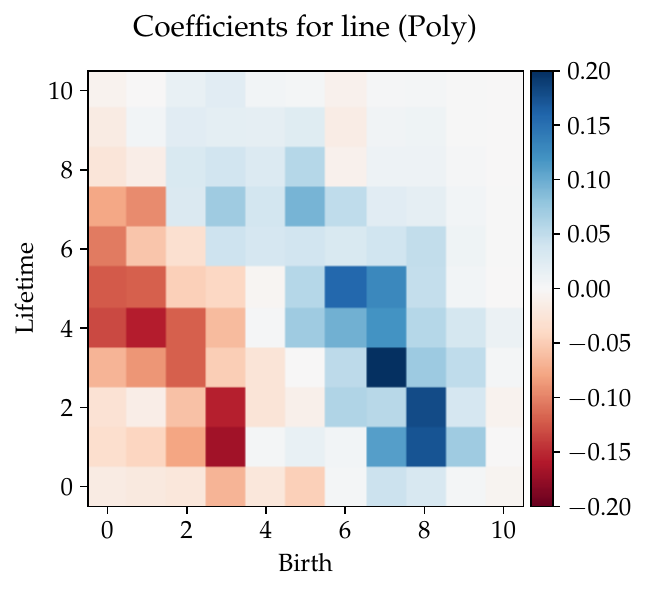}
	\includegraphics[width = .39\textwidth]{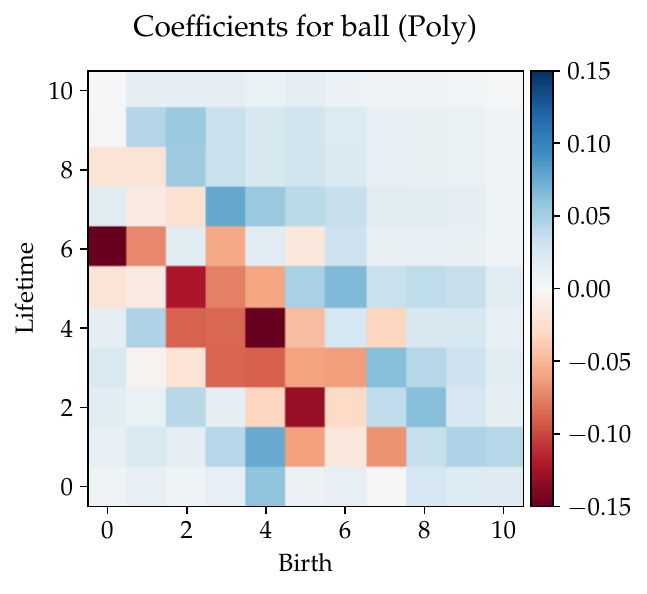}

	\caption{Coefficients for regression experiment with mean drawn from a line (left) and from a ball (right).  The top row uses tent functions, the bottom uses interpolating polynomials. The $x$ and $y$ coordinates correspond to the index of the test function used.  }
	\label{fig:RegressionCoefficients}
\end{figure}

%------------------------

\subsection{Manifold experiment}
\label{ssec:ManifoldExperiment}

Following an experiment run in \cite{Adams2017}, we generated collections of point clouds drawn from different manifolds embedded in $\mathbb{R}^2$ or $\mathbb{R}^3$.
Each point cloud has $N=200$ points.
The categories are as follows:
\begin{description}
	\item [Annulus.] Points drawn uniformly from an annulus with inner radius 1 and outer radius 2.
	\item [{3 clusters.}] The 200 points are drawn from one of three different normal distributions, with means $(0,0)$, $(0,2)$ and $(2,0)$ respectively, and all with standard deviation 0.05.
	\item [{3 clusters of 3 clusters.}]
	The points are drawn from normal distributions with standard deviation 0.05 centered at the points $(0,0)$, $(0,1.5)$, $(1.5,0)$, $(0,4)$, $(1,3)$, $(1,5)$, $(3,4)$, $(3,5.5)$, $(4.5,4)$.
	\item [{Cube.}] Points drawn uniformly from $[0,1]^2 \subseteq \mathbb{R}^2$.
	\item [{Torus.}] Points drawn uniformly from a torus thought of as rotating a circle of radius 1 in the $xz$-plane centered at (2,0) around the $z$-axis.  The generation of the points is done using the method from \cite{Diaconis2013}.
	\item [{Sphere.}] Points drawn from a sphere of radius 1 in $\mathbb{R}^3$ with uniform noise in $[-0.05,0.05]$ added to the radius.
\end{description}
Examples of each of these can be seen in Fig.~\ref{fig:Manifolds}.
Code for generation of these point clouds as well as the full dataset can be found in the \texttt{teaspoon} package at \texttt{teaspoon.MakeData.} \texttt{PointCloud.testSetManifolds}.

The choice of tent function parameters was done as follows.
We determined the bounding box necessary to enclose the training set diagrams in the (birth, lifetime) plane and added padding of $0.05$.
We fixed $d=10$; $\epsilon$ was chosen to be half the minimum lifetime over all training set diagrams; then $\delta$ was chosen to ensure coverage of the bounding box.

We reserved 33\% of the data for testing and trained a regression model on the remaining data.
% The experiment was run for four different numbers of diagrams, each choice of which was run once using only degree one monomials, and once using up to degree two.
The results of this experiment averaged over 10 runs can be seen in Table \ref{tab:ManifoldResults}.
In this experiment, particularly when we have 50 or more diagrams per class, we see excellent ($\geq99\%$) classification.

\begin{figure}[tb]
	\centering
	\includegraphics[width = \textwidth]{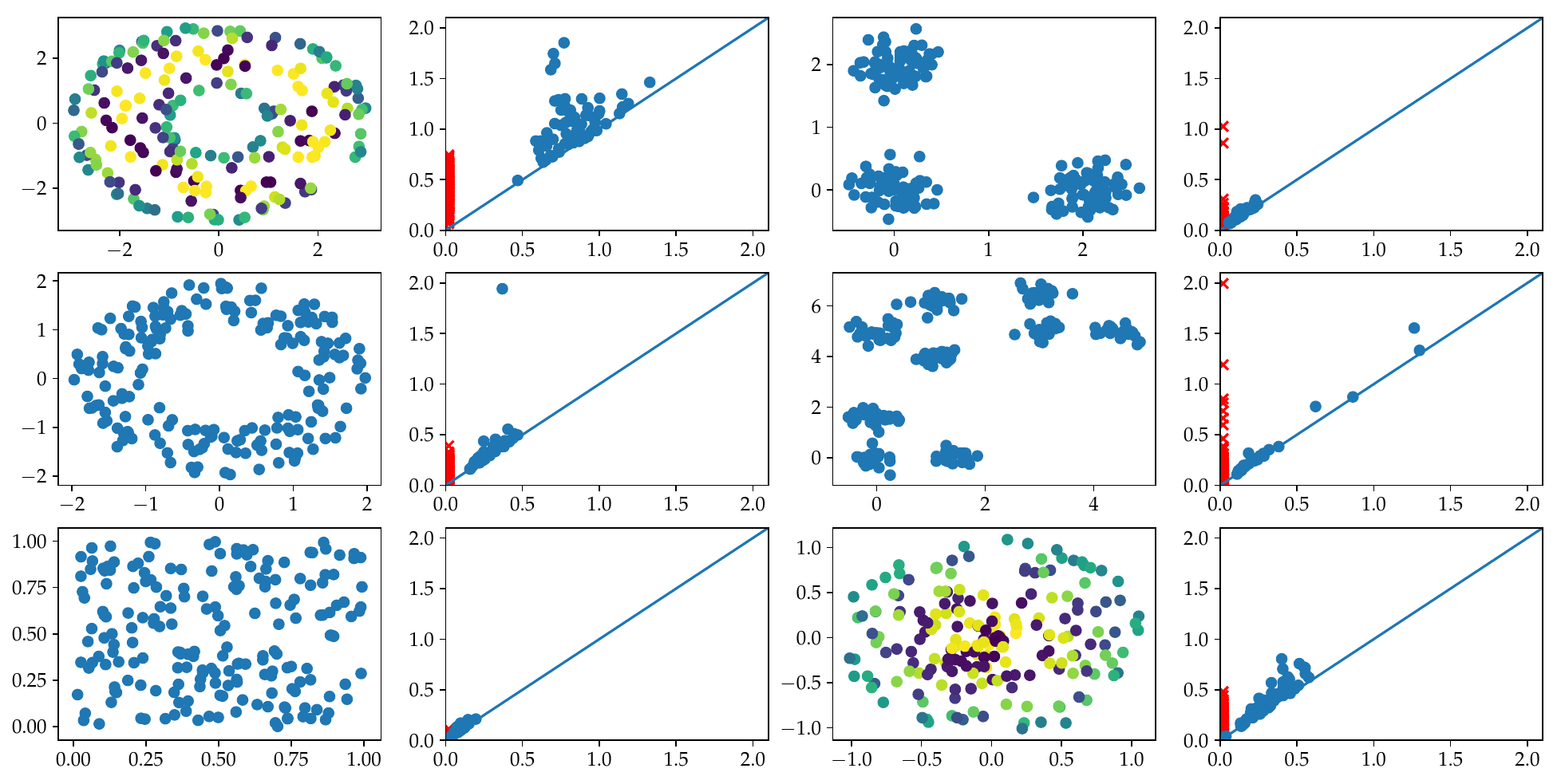}
	\caption{
	Example point clouds from the experiment described in Sec.~\protect\ref{ssec:ManifoldExperiment}.
	From top left reading across rows, the point clouds are a torus (in $\mathbb{R}^3$ but drawn with color as the third coordinate), 3 clusters, annulus, three clusters of three clusters, uniform box, and a sphere (again drawn with third coordinate as color).
	The associated persistence diagrams are shown to the right of the point clouds; the 1-dimensional diagram is given by blue dots and the 0-dimensional diagram is shown by red $x$'s.
	Diagrams are drawn with the same axis values.
	}
	\label{fig:Manifolds}
\end{figure}

\begin{table}[tb]
	\centering
	\begin{tabular}{|r|cccc|}
	\toprule
	  & \multicolumn{2}{c}{Tents} & \multicolumn{2}{c}{Polynomials} \\
	   No.~Dgms & Train & Test & Train & Test \\
	% numDgms & trainScore - tents & train - tents - sd & testScore - tents & test - tents - sd & trainScore - poly & train - poly - sd & testScore - poly & train - poly - sd \\
	\midrule
10 & $ 99.8 \% \pm 0.9$ & $96.5\% \pm 3.2$ & $99.8 \%\pm 0.9$ & $ 95.0 \%\pm 3.9$ \\
25 & $ 99.9 \% \pm  0.3$ & $ 99.0\% \pm 1.0$ & $99.7 \%\pm 0.5$ & $ 97.6 \%\pm 1.5$ \\
50 & $ 99.9 \% \pm  0.2$ & $ 99.9\% \pm 0.3$ & $100 \%\pm 0$ & $ 99.2 \%\pm 0.9$ \\
100 & $99.8 \% \pm 0.1$ & $99.7\% \pm 0.4$ & $99.6 \%\pm 0.2$ & $ 99.3 \%\pm 0.5$ \\
200 & $99.5 \% \pm 0.1$ & $99.5\% \pm 0.3$ & $99.2 \%\pm 0.2$ & $98.9 \%\pm  0.5$ \\
%  25 &            $   99.9\% \pm 0.3 $ &   $99\%     \pm 1    $   & $100\%    \pm 0   $ & $99\%     \pm 1 $   \\
%  10 &            $  99.75\%  \pm 0.75$  &  $96.5\%   \pm 3.2  $   & $100\%    \pm 0   $ & $94\%     \pm 5.39$ \\
%  50 &            $   99.9\%  \pm 0.2 $ &   $99.9\%   \pm 0.3  $   & $99.95\%  \pm 0.15$ & $99.6\%   \pm 0.49$ \\
% 100 &            $  99.83\%  \pm 0.16$  &  $99.7\%   \pm 0.25 $   & $99.65\%  \pm 0.2 $ & $99.29\%  \pm 0.52$ \\
% 200 &            $  99.53\%  \pm 0.17$  &  $99.47\%  \pm 0.29 $   & $99.29\%  \pm 0.17$ & $98.96\%  \pm 0.67$ \\
	\bottomrule
	\end{tabular}

\caption{Results from the manifold test described in Sec.~\protect\ref{ssec:ManifoldExperiment} for different numbers of examples drawn for each type of manifold. The reported results are averaged over 10 experiments each.}
	\label{tab:ManifoldResults}
\end{table}

\begin{figure}[tb]
	\centering
	\includegraphics[width = \textwidth]{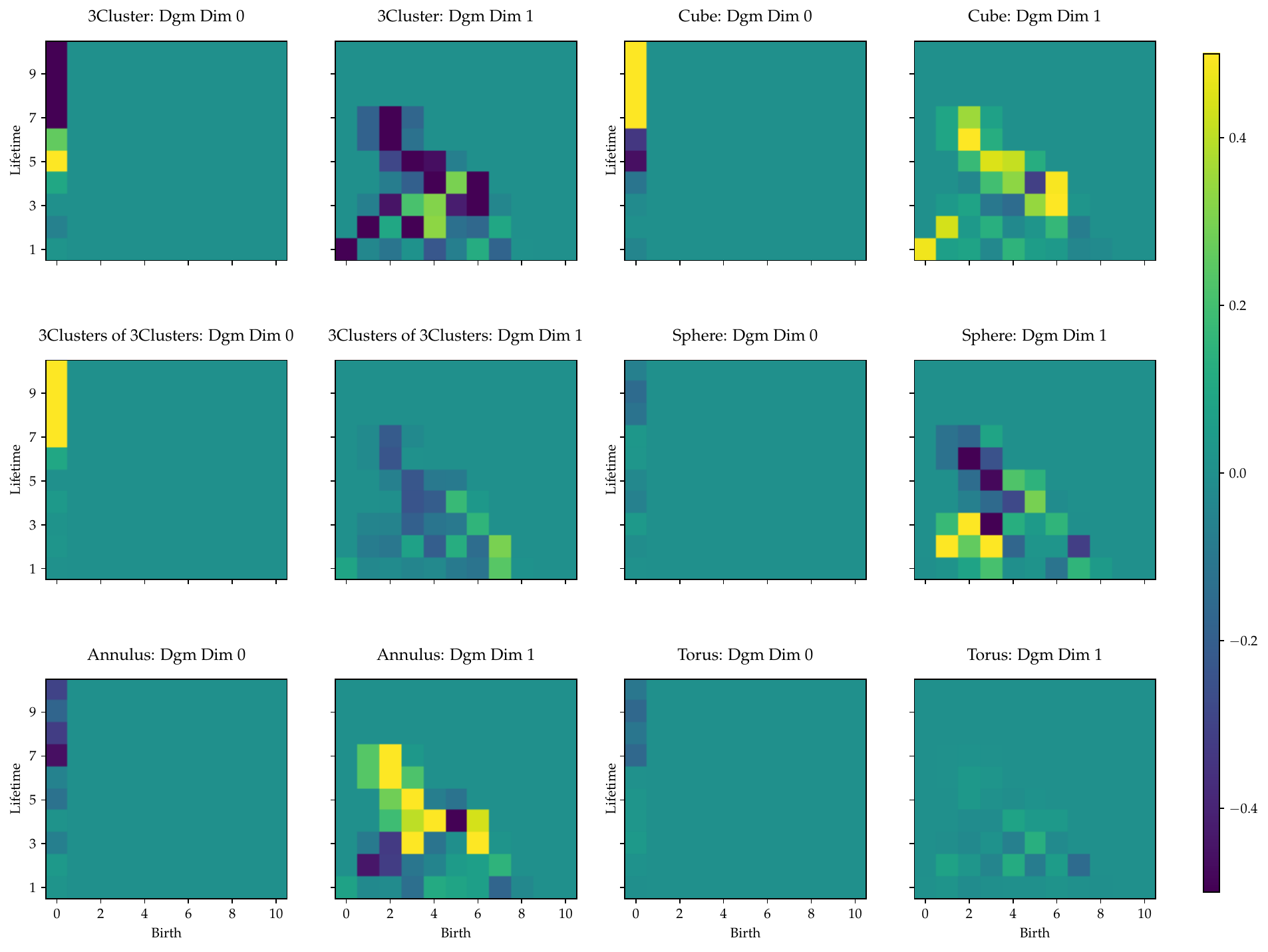}
	\caption{Coefficients for the manifold experiment using tent functions run with 100 diagrams each.}
	\label{fig:ManifoldCoefficients}
\end{figure}

\subsection{Shape Data}
\label{sec:UliShapeData}

We compared our results to the kernel method results reported in \cite{Reininghaus2015} by applying feature functions to the same data set from that paper.
In particular, the synthetic SHREC 2014 data set (\cite{Pickup2014}) consists of 3D meshes of humans in different poses.
The people are labeled as male, female, and child  (five each); and each person assumes one of 20 poses.
Reininghaus et al.~defined a function on each mesh using the heat kernel signature (\cite{Sun2009}) for 10 parameters and computed the 0- and 1-dimensional diagrams of each.

We start with this data set of 300 pairs of persistence diagrams (0- and 1-dimensional) for each of the 10 parameter values, and predicted the human model; i.e.~which of the 15 people were represented by each mesh.
A comparison of the results reported in \cite{Reininghaus2015} with our method using polynomial functions is shown in Table \ref{tab:UliShapeResults}.
Additional results using tent functions are provided in the appendix (Table \ref{tab:UliShapeResultsTents}).

For this experiment, polynomial features (Table \ref{tab:UliShapeResults}) were considerably more successful than the tent functions.
Further, using the 0- and 1-dimensional persistence diagrams together was largely better than the 1-dimensional diagram alone, and considerably better than the 0-dimensional diagram alone.
The average classification rates were improved in four out of the ten parameter choices; results with intersecting confidence intervals occurred in an additional four out of ten parameter choices.

\begin{table}
\centering

\scalebox{.8}{
\begin{tabular}{|l|c|cc|cc|cc|}
\toprule
&& 	  \multicolumn{2}{c}{Dim 0} & \multicolumn{2}{c}{Dim 1} & \multicolumn{2}{c}{Dim 0 \& Dim 1} \\
freq &  MSK      & Train & Test &   Train & Test     &   Train & Test    \\
1 & \cellcolor{blue!25}$94.7\% \pm 5.1$ &
    $94.3\% \pm 0.5$ & $67.1\% \pm 4.7$ & % False & False &
    $99.1\% \pm 0.3$ & $85.4\% \pm 3.0$ & % False & False &
    $99.8\% \pm 0.3$ & \cellcolor{orange!25}$90.4\% \pm 5.3$ % & False & True \\
    \\
2 & \cellcolor{blue!25}$99.3\% \pm 0.9$ &
    $92.1\% \pm 1.4$ & $60.8\% \pm 6.3$ & % False & False &
    $99.9\% \pm 0.3$ & $89.9\% \pm 1.5$ & % False & False &
    $100.\% \pm 0.0$ & $95.1\% \pm 2.4$ % & False & False \\
    \\
3 & \cellcolor{blue!25}$96.3\% \pm 2.2$ &
    $83.4\% \pm 2.4$ & $45.1\% \pm 2.9$ & % False & False &
    $99.6\% \pm 0.5$ & $88.9\% \pm 3.0$ & % False & False &
    $99.7\% \pm 0.5$ & $90.0\% \pm 2.0$ % & False & False \\
    \\
4 & \cellcolor{blue!25}$97.3\% \pm 1.9$ &
    $74.7\% \pm 2.0$ & $37.4\% \pm 4.7$ & % False & False &
    $99.1\% \pm 0.7$ & $85.2\% \pm 2.5$ & % False & False &
    $98.6\% \pm 0.9$ & $84.8\% \pm 3.9$ % & False & False \\
    \\
5 & \cellcolor{blue!25}$96.3\% \pm 2.5$ &
    $65.3\% \pm 2.9$ & $27.8\% \pm 5.0$ & % False & False &
    $99.2\% \pm 0.7$ & \cellcolor{orange!25}$93.0\% \pm 2.2$ & % False & True &
    $99.7\% \pm 0.4$ & \cellcolor{orange!25}$93.3\% \pm 2.2$ % & False & True \\
    \\
6 & \cellcolor{blue!25}$93.7\% \pm 3.2$ &
    $67.2\% \pm 2.5$ & $36.5\% \pm 3.6$ & % False & False &
    $99.2\% \pm 0.5$ & \cellcolor{orange!25}$93.4\% \pm 2.8$ & % False & True &
    $98.8\% \pm 0.5$ & \cellcolor{orange!25}$92.9\% \pm 1.8$ % & False & True \\
    \\
7 & $88.0\% \pm 4.5$ &
    $71.5\% \pm 2.8$ & $40.9\% \pm 4.1$ & % False & False &
    $98.3\% \pm 0.7$ & \cellcolor{blue!25}$96.6\% \pm 0.7$ & % True &  True &
    $99.0\% \pm 0.4$ & \cellcolor{orange!25}$95.6\% \pm 1.4$ % & True &  True \\
    \\
8 & \cellcolor{orange!25}$88.3\% \pm 6.0$ &
    $84.2\% \pm 3.3$ & $63.0\% \pm 4.5$ & % False & False &
    $99.0\% \pm 0.5$ & \cellcolor{orange!25}$93.0\% \pm 1.8$ & % True &  True &
    $99.6\% \pm 0.4$ & \cellcolor{blue!25}$94.0\% \pm 2.2$ % & True &  True \\
    \\
9 & \cellcolor{orange!25}$88.0\% \pm 5.8$ &
    $83.5\% \pm 2.7$ & $62.4\% \pm 5.0$ & % False & False &
    $98.4\% \pm 1.2$ & \cellcolor{blue!25}$92.9\% \pm 1.5$ & % True &  True &
    $98.5\% \pm 1.3$ & \cellcolor{orange!25}$92.6\% \pm 2.1$ % & True &  True \\
    \\
10 & \cellcolor{orange!25}$91.0\% \pm 4.0$ &
    $79.8\% \pm 2.7$ & $59.0\% \pm 4.6$ & % False & False &
    $96.9\% \pm 0.6$ & \cellcolor{blue!25}$92.1\% \pm 1.7$ & % True &  True &
    $97.7\% \pm 1.1$ & \cellcolor{orange!25}$89.5\% \pm 4.6$ % & False & True \\
    \\
\midrule
\bottomrule
\end{tabular}
}
\caption{
Results of classification of shape data discussed in Sec.~\protect\ref{sec:UliShapeData}. The function used was the Chebyshev polynomial of the second kind.  The MSK column gives the original results from \protect\cite{Reininghaus2015}; the subsequent columns use the 0-dimensional diagrams only, the 1-dimensional diagrams only, and both, respectively.
Scores highlighted in blue give best average score MSK vs.~template functions; scores highlighted in orange have overlapping intervals with the best score. Compare this to the results with tent functions, Table \protect\ref{tab:UliShapeResultsTents}.}
\label{tab:UliShapeResults}
\end{table}

% \subsection{Digits a la Adcock}
%**********************************
\subsection{Rossler Periodicity}
%**********************************
\begin{figure}[!htb]
	\centering
	\includegraphics[width=0.7\textwidth]{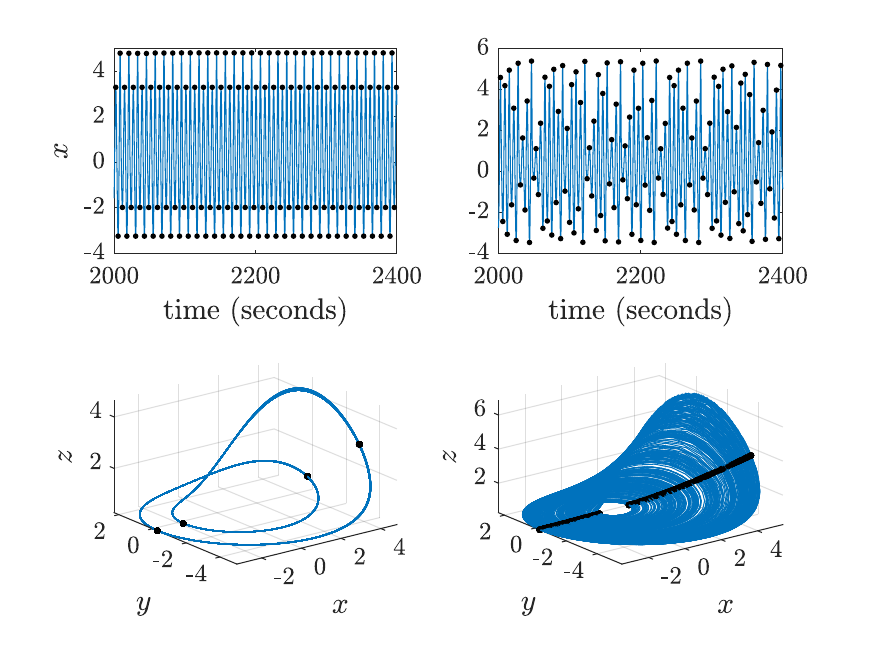}
	\caption{Time series for the $x$ value in a Rossler system (top row) and the corresponding phase portrait in the $(x, y, z)$ space (bottom row) for a periodic case with $\alpha=0.37$ (left column), and a chaotic case with $\alpha=0.42$ (right column).
	 The superimposed black dots correspond to the extrema of $x$.}
	\label{fig:sampleTimeSeries}
\end{figure}
We tested our machine learning approach on time series simulated from the Rossler system (\cite{McCullough2015})
%---------------------------
\begin{align}
\begin{split}
  \dot{x} &= -y - z, \\
  \dot{y} &= x + \alpha y, \\
  \dot{z} &= \beta + z, (x - \gamma),
\end{split}
\end{align}
%---------------------------
where the overdot denotes a derivative with respect to time.
We used an explicit Runge-Kutta (4,5) formula to solve the Rossler system  for $\beta=2$, $\gamma=4$, and $1201$ evenly spaced values of the bifurcation parameter $\alpha$ where $0.37 \leq \alpha \leq 0.43$.
For each value of $\alpha$ a set of initial conditions was sampled from uniformly distributed values in $[0, 1]$.
We simulated $2 \times 10^4$ points using a time step of $0.2$ seconds.
Half of the simulated points were discarded, and only the second half of the $x$ variable data was used in the current analysis.
The left and right columns of Fig.~\ref{fig:sampleTimeSeries} show two examples of the resulting time series: one periodic with $\alpha=0.37$, and one chaotic with $\alpha=0.42$, respectively.
The first row of the figure shows the time series after dropping the first half of the simulated data, and the bottom row shows the corresponding phase space.
The black dots in Fig.~\ref{fig:sampleTimeSeries} represent the extrema of $x$ which were accurately computed using a modified version of Henon's algorithm (\cite{Henon1982,Palaniyandi2009}).
These dots are basically the Poincar\'e points obtained by finding all the intersections of the $x$ trajectory with the surface $\dot{x}=0$.

The two examples in Fig.~\ref{fig:sampleTimeSeries} show how the bifurcation parameter $\alpha$ can influence the system behavior.
This dependence on $\alpha$ is further illustrated in the top graph of Fig.~\ref{fig:lyapExponent} which depicts the maximum Lyapunov exponent computed using the algorithm described by \cite{Benettin1980,Eckmann1985}; and \cite{Sandri1986}.
The bottom graph of Fig.~\ref{fig:lyapExponent} shows the score of the zero-one test for chaos (\cite{Gottwald2004,Gottwald2009,Gottwald2016}): a binary test that yields a score of $0$ for regular dynamics, and $1$ for chaotic dynamics.
In order to avoid the failure of the $0$-$1$ test due to oversampling, the test was applied to the subsampled data which was obtained by retaining every sixth point from the original signal.
The periodic windows shown in Figs.~\ref{fig:bifDiagram} and \ref{fig:lyapExponent} were identified by examining the plot of the bifurcation diagram in Fig.~\ref{fig:bifDiagram} as well as plots of the maximum Lyapunov exponent and the $0$-$1$ test scores in Fig.~\ref{fig:lyapExponent}.

\begin{figure}[!htb]
	\centering
	\includegraphics[width=0.6\textwidth]{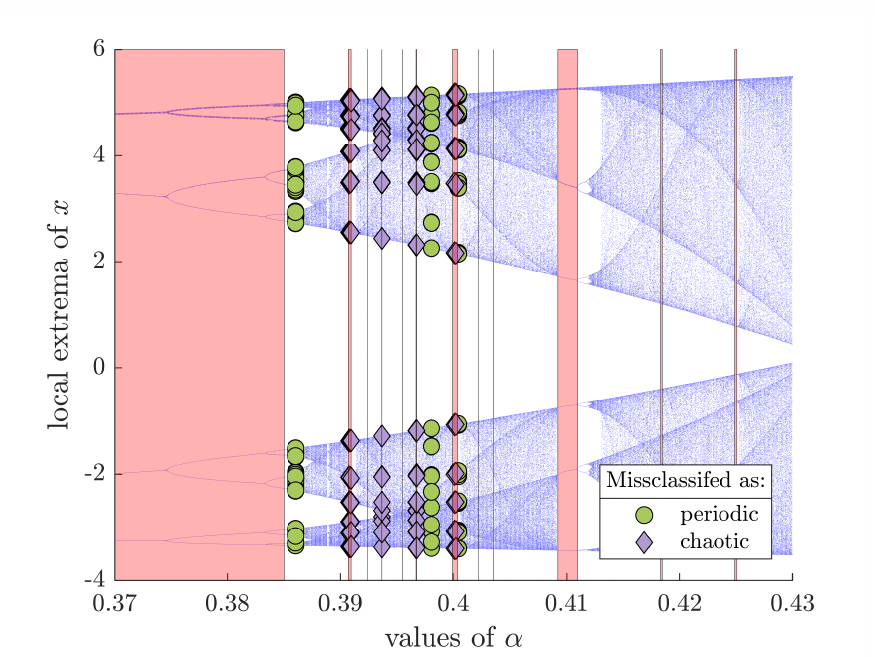}
	\caption{The bifurcation diagram for the Rossler system with $\alpha$ as the bifurcation parameter and the extrema of $x$ as the response parameter.
	The shaded windows indicate the regions that were tagged as periodic.
	The misclassified points are superimposed with diamonds indicating the points that were incorrectly identified as chaotic, while dots indicate that the algorithm incorrectly identified chaotic points as periodic.}
	\label{fig:bifDiagram}
\end{figure}

\begin{figure}[!htb]
	\centering
	\includegraphics[width=0.6\textwidth, trim={0 0.35in 0 0}, clip]{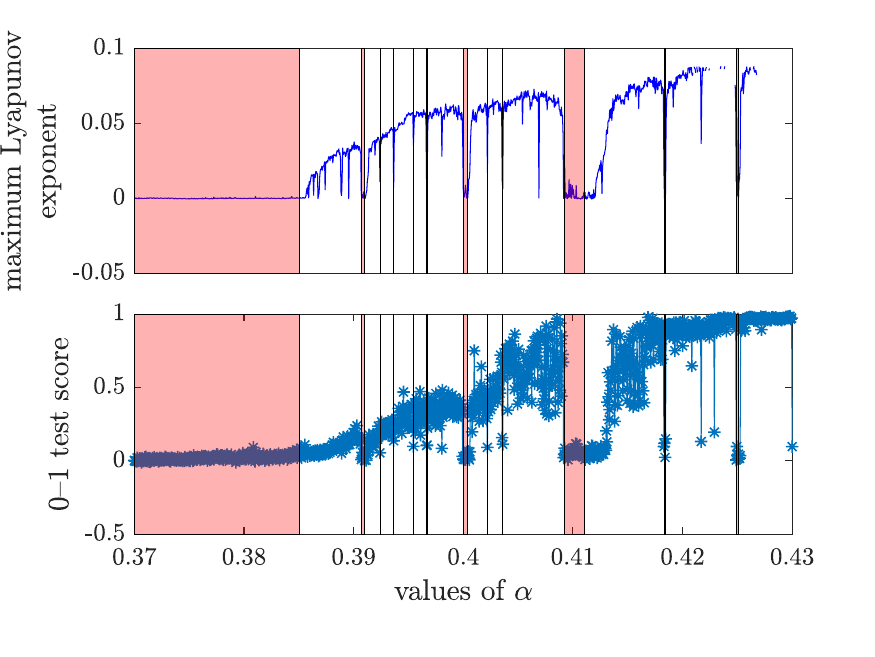}
	\caption{\textbf{Top:} The maximum Lyapunov exponent for the Rossler system as a function of the bifurcation parameter $\alpha$.
	\textbf{Bottom:}~The score of the $0$-$1$ test for chaos where $0$ indicates periodicity, while $1$ indicates chaos.
	The shaded windows denote the regions that were tagged as periodic.}
	\label{fig:lyapExponent}
\end{figure}

We applied the feature function method on the resulting data set using the tent functions.
In this experiment, we set $d = 10$, $\delta = 0.4$, and $\varepsilon $ to be machine precision.
% Table \ref{tab:RosslerParameters} shows the parameters used for applying the described approach to the Rossler experiment.
For this test, we got a score of $98.9\%$ on the training set, and $97.2\%$ on the test set.
The misclassified time series are superimposed on the bifurcation diagram in Fig.~\ref{fig:bifDiagram}.
In this figure, green circles show the values of $\alpha$ that were tagged chaotic but that the algorithm identified as periodic.
Similarly, purple diamonds indicate the $\alpha$ values that were identified as chaotic even though they were tagged as periodic.
It can be seen, unsurprisingly, that misclassification occurs near the transitions of the system behavior from chaotic to periodic or vice versa.

We note that tagging of the data used for both training and testing was performed by inspecting the bifurcation diagram, the maximum Lyapunov exponent plot, and the $0$-$1$ test.
Therefore, for the very few misclassified $\alpha$ values we would actually conjecture that our approach can provide a check for the correctness of tagging in the testing and training sets especially for the boundary cases.

\section{Discussion}
\label{sec:Discussion}

In this paper, we have provided a new method for the featurization of persistence diagrams through the use of template functions; that is, collections of functions compactly supported on the upper half plane away from the diagonal whose induced functions on diagrams separate points.
To do this, we further gave a complete description of compact sets in persistence diagram space endowed with the bottleneck distance.

This method of featurization allows for a great deal of flexibility for the end user.
In particular, we have provided two options for template functions, tent functions and interpolating polynomials, but surely there are many other collections of functions which could be tested for optimizing classification and regression tasks.
We showed these two functions worked quite well on standard experiments, as well as in comparison to other methods available in the literature.

We find the particular results of the SHREC data set (Sec.~\ref{sec:UliShapeData}) to be quite fascinating due to the vast improvement seen from tent functions to interpolating polynomials.
The usual knee-jerk reaction to setting up these featurization methods for persistence diagrams is that localization is key.
This was the impetus for creation of the tent functions as they have support contained in a small box, so each tent function truly only sees a small window of the diagram.
Meanwhile, the interpolating polynomials are nonzero well away from their chosen ``basepoint'' so the fact that these functions work at all is surprising to say the least.

Template function featurization has also found successful applications in the literature. Specifically, in \cite{Yesilli2019} the authors show a comparison between several TDA-based featurization methods such as persistence landscapes, persistence images, and Carlsson coordinates for classifying chatter in metal cutting from vibration signals.
That paper shows that template functions yield higher accuracy in comparison to some of the existing featurization methods.
Further, template functions (and Carlsson coordinates) were shown to be computationally the fastest methods.
Another application of template functions can be found in \cite{yesilli2019chatter}, where the authors show that template functions can be used to detect chatter from simulated, noisy vibration signals in milling--another  metal cutting process.
In addition to its speed, template functions were shown to yield high classification accuracy and did not require the user to manually identify the classification features, which is the case in traditional chatter classification methods.

Future work is certainly needed to expore the available options that can be utilized within the provided mathematical framework.
Because the template system definition is rather broad, we suspect there are many possible collections which could be utilized to improve the results seen here.
We have begun to see this in follow-up work which takes a data driven approach to adaptively chose the relevant template functions \cite{tymochko2019adaptive,polanco2019adaptive}.
We hope to better understand this behavior in future work.

% Acknowledgements should go at the end, before appendices and references

\acks{
JAP acknowledges the support of the National Science Foundation (NSF) under grants DMS-1622301,  CCF-2006661,
CAREER award  DMS-1943758, and  DARPA under grant HR0011-16-2-003.
EM was supported by the NSF through grants CMMI-1800466, DMS-1800446, CCF-1907591, and CCF-2106578.
FAK was supported by the NSF through grants CMMI-1759823 and DMS-1759824.
}

% Manual newpage inserted to improve layout of sample file - not
% needed in general before appendices/bibliography.

% \newpage

\appendix
%********************************
\section{Implementation of the interpolating polynomials algorithm}
%\label{sec:bary_interp}
\label{sec:interpAlg}
%********************************
%--------------------------------
In this appendix, we give more details on the implementation of the interpolating polynomials described in Section \ref{ssec:InterpPoly2}.
The barycentric formula for Lagrange interpolation described by \cite{Berrut2004} is given by
\begin{equation}
\label{eq:bary_lagAppdx}
f(x) := \sum\limits_{j=0}^m{\ell^\AA_j(x) c_j} = \frac{\sum\limits_{j=0}^m{\frac{w_j}{x-\tilde{x}_j}}c_j}{\sum\limits_{j=0}^k{\frac{w_j}{x-\tilde{x}_j}}};
\text{ where }
  \quad w_j=\frac{1}{\ell'(a_j)};
\quad \ell'(a_j) = \prod\limits_{i=0, i\neq j}^{m}{(a_j-a_i)},
\end{equation}
while $\AA = \{a_i\}_{i=0}^m \subset \mathbb{R}$ is a finite set of distinct mesh values, and $\{c_i \in \mathbb{R}\}$ is a collection of evaluation values.
The function in Eq.~\eqref{eq:bary_lagAppdx} has the property that $f(a_i) = c_i$ for all $i$, and it also satisfies the partition of unity condition $\sum\limits_{j=0}^{m}{f(x)}= 1, \,\, \forall \, x $.

Barycentric Lagrange interpolation is often used for approximating $\mathbb{R}$-valued functions and there are efficient algorithms for obtaining the weights associated with it.
However, in our formulation we need to an interpolating polynomial over an $\mathbb{R}^2$-valued function.
Therefore, we next describe how to expand the algorithm for interpolating a scalar valued function to interpolating a function on the plane.
Note that the notation used here is self-contained from Section \ref{ssec:InterpPoly2}.

We assume that our planar mesh is the outer product of $m+1$ mesh points along the birth time $x$-axis, and $n+1$ points along the lifetime $y$-axis.
We also assume that the persistence diagram has $N$ pairs of (birth, lifetime) points.
%----------------------------------
\begin{enumerate}%[nosep]
\item Get $\tilde{\gamma}$ and $\phi$ which correspond to the interpolation matrices along the $x$-mesh and the $y$-mesh, respectively.
These are the matrices that describe the linear transformation from the $m+1$ mesh points of birth times ($n+1$ mesh of lifetimes) to the corresponding interpolated values of the $N$ query birth times ($N$ query lifetimes) for a given diagram.
This step is equivalent to separately obtaining the interpolation matrices for the birth times and the lifetimes.
\item Set $\gamma=\tilde{\gamma}^T$.
\item \begin{enumerate}
		\item Replicate each column in $\gamma$ $n+1$ times to obtain $\Gamma$ whose dimensions are $(m+1)\times (N\times(n+1))$.
		\item Unravel $\phi$ row-wise into a row vector, then replicate each row $m+1$ times to obtain $\Phi$ whose dimensions are $(m+1)\times(N\times(n+1))$.
	  \end{enumerate}
\item Use element-wise multiplication to obtain $\tilde{\Psi}=\Gamma \cdot \Phi$, where $\cdot$ means element-wise multiplication, and $\tilde{\Psi}$ has dimension $(m+1)\times(N\times(n+1))$.
\item \begin{enumerate}
	  \item Split $\tilde{\Psi}$ into $N$ chunks of $(m+1)\times (n+1)$ matrices along the columns axis.
	  \item Concatenate the split pieces row-wise to obtain an $(N\times(m+1))\times(n+1)$ matrix $\Psi$.
	  \end{enumerate}
\item Reshape $\Psi$ by concatenating each $(m+1)\times (n+1)$ piece row-wise to obtain an $N \times ((m+1)\times(n+1))$ matrix $\Xi$.
\item Let the 2D base mesh be given as
\begin{equation*}
\begin{bmatrix}
	f_{00} & f_{01} & \ldots & f_{0n} \\
	f_{10} & f_{11} & \ldots & f_{1n} \\
	\vdots &        &        & \vdots \\
	f_{m0} & f_{m1} & \ldots & f_{mn}
\end{bmatrix},
\end{equation*}
where $f_{ij} = f(x_i, y_j)$ and $(x_i, y_j)$ is a unique point in the 2D mesh.
Define the vector $[f_{00} \, f_{01} \, \ldots \, f_{mn}]$ which is obtained by unraveling the 2D mesh row-wise.
\item We can interpolate the query points $(x_q, y_q)$ using
\begin{equation*}
p(x_q, y_q) = \begin{bmatrix}
			\ell_0(x_0) \ell_0(y_0) & \ldots & \ell_m(x_0) \ell_n(y_0) \\
			\ell_0(x_1) \ell_0(y_1) & \ldots & \ell_m(x_1) \ell_n(y_1) \\
			\vdots                  &        & \vdots \\
			\ell_0(x_{N-1}) \ell_0(y_{N-1}) & \ldots & \ell_m(x_{N-1}) \ell_n(y_{N-1})
			\end{bmatrix}
			\begin{bmatrix}
			f_{00} \\
			f_{01} \\
			\vdots \\
			f_{mn}
			\end{bmatrix}.
\end{equation*}
\end{enumerate}
%----------------------------------
Here is a sketch of the resulting matrices:
\begin{align*}
\tilde{\gamma} &=
	\begin{bmatrix}
		\ell_0(x_0) & \ell_1(x_0) & \ldots & \ell_m(x_0) \\
		\vdots      &             &        & \vdots \\
		\ell_0(x_{N-1}) & \ell_1(x_{N-1}) & \ldots & \ell_m(x_{N-1})
    \end{bmatrix}_{N\times(m+1)},\\
\phi &=
	\begin{bmatrix}
		\ell_0(y_0) & \ell_1(y_0) & \ldots & \ell_n(y_0) \\
		\vdots      &             &        & \vdots \\
		\ell_0(y_{N-1}) & \ell_1(y_{N-1}) & \ldots & \ell_n(y_{N-1})
    \end{bmatrix}_{N\times(n+1)},\\
\gamma = \tilde{\gamma}^T &=
	\begin{bmatrix}
		\ell_0(x_0) & \ell_0(x_1) & \ldots & \ell_0(x_{N-1}) \\
		\ell_1(x_0) & \ell_1(x_1) & \ldots & \ell_1(x_{N-1})\\
		\vdots      &             &        & \vdots \\
		\ell_m(x_0) & \ell_m(x_1) & \ldots & \ell_m(x_{N-1})
    \end{bmatrix}_{(m+1)\times N},
\end{align*}
\begin{equation*}
\Gamma =
	\begin{bmatrix}
		\ell_0(x_0) & \ell_0(x_0) & \ldots & \ell_0(x_0) & \ldots & \ell_0(x_{N-1}) & \ell_0(x_{N-1}) & \ldots & \ell_0(x_{N-1})\\
		\ell_1(x_0) & \ell_1(x_0) & \ldots & \ell_1(x_0) & \ldots & \ell_1(x_{N-1}) & \ell_1(x_{N-1}) & \ldots & \ell_1(x_{N-1})\\
		\vdots &  & \vdots &  & \vdots &  & \vdots &  & \vdots\\
		\ell_m(x_0) & \ell_m(x_0) & \ldots & \ell_m(x_0) & \ldots & \ell_m(x_{N-1}) & \ell_m(x_{N-1}) & \ldots & \ell_m(x_{N-1})
	\end{bmatrix}
\end{equation*}
where $\Gamma$ has dimension $(m+1)\times (N\times(n+1))$.

\begin{equation*}
\Phi =
	\begin{bmatrix}
		\ell_0(y_0) & \ell_1(y_0) & \ldots & \ell_n(y_0) & \ldots & \ell_0(y_{N-1}) & \ell_1(y_{N-1}) & \ldots & \ell_n(y_{N-1}) \\
		\ell_0(y_0) & \ell_1(y_0) & \ldots & \ell_n(y_0) & \ldots & \ell_0(y_{N-1}) & \ell_1(y_{N-1}) & \ldots & \ell_n(y_{N-1}) \\
		\vdots &  & \vdots &  & \vdots &  & \vdots &  & \vdots\\
		\ell_0(y_0) & \ell_1(y_0) & \ldots & \ell_n(y_0) & \ldots & \ell_0(y_{N-1}) & \ell_1(y_{N-1}) & \ldots & \ell_n(y_{N-1})
	\end{bmatrix}
\end{equation*}
where $\Phi$ has dimension $(m+1)\times(N\times(n+1))$.

We can now compute the elementwise product $\Psi = \Gamma \cdot \Phi$, which has the dimension $(m+1)\times(N\times(n+1))$.

We then need to apply the following operations: (i) reshaping $\Psi$ to obtain $\hat{\Psi}_1$ given by
\begin{equation*}
\hat{\Psi}_1 =
	\begin{bmatrix}
		\ell_0(x_0)\ell_0(y_0) & \ell_0(x_0)\ell_1(y_0) & \ldots & \ell_0(x_0)\ell_n(y_0)  \\
		\ell_1(x_0)\ell_0(y_0) & \ell_1(x_0)\ell_1(y_0) & \ldots & \ell_1(x_0)\ell_n(y_0)  \\
		\vdots &  & \vdots &  \\
		\ell_m(x_0)\ell_0(y_0) & \ell_m(x_0)\ell_1(y_0) & \ldots & \ell_m(x_0)\ell_n(y_0)  \\
		\vdots &  & \vdots &  \\
		\ell_0(x_{N-1})\ell_0(y_{N-1}) & \ell_0(x_{N-1})\ell_1(y_{N-1}) & \ldots & \ell_0(x_{N-1})\ell_n(y_{N-1}) \\
		\ell_1(x_{N-1})\ell_0(y_{N-1}) & \ell_1(x_{N-1})\ell_1(y_{N-1}) & \ldots & \ell_1(x_{N-1})\ell_n(y_{N-1}) \\
		\vdots &  & \vdots &  \\
		\ell_m(x_{N-1})\ell_0(y_{N-1}) & \ell_m(x_{N-1})\ell_1(y_{N-1}) & \ldots & \ell_m(x_{N-1})\ell_n(y_{N-1})
	\end{bmatrix}.
\end{equation*}

(ii) unraveling $\hat{\Psi}_1$ into an $N\times((m+1)\times(n+1))$ matrix $\hat{\Psi}_2$ given by
\begin{equation}
\hat{\Psi}_2 =
	\begin{bmatrix}
		\ell_0(x_0) \ell_0(y_0) & \ldots & \ell_0(x_0) \ell_n(y_0) & \ldots & \ell_m(x_0) \ell_n(y_0) \\
		\vdots                  &                                  & \vdots &                         \\
		\ell_0(x_k) \ell_0(y_k) & \ldots & \ell_0(x_k) \ell_n(y_k) & \ldots & \ell_m(x_k) \ell_n(y_k) \\
		\vdots                  &                                  & \vdots &                         \\
		\ell_0(x_{N-1}) \ell_0(y_{N-1}) & \ldots & \ell_0(x_{N-1}) \ell_n(y_{N-1}) & \ldots & \ell_m(x_{N-1}) \ell_n(y_{N-1})
	\end{bmatrix}.
\end{equation}

The collection of all the scores constitutes the feature vector corresponding to the chosen base mesh point and to the query points where the latter are the persistence diagram points.
In this study we summed the rows of $\hat{\Psi}_2$ after taking the absolute value of each entry.
The resulting number represents the score at each base mesh point.
If the persistence diagram contains the mesh points and we want to find the interpolated values at query points $p_{\rm interp}$, then we would compute $p_{\rm interp.}=\hat{\Psi}_2\, f$.

The implementation of this algorithm can be found in the teaspoon package at\\  \texttt{teaspoon.ML.feature\_functions.interp\_polynomial}.

\section{Additional shape data results}

This appendix gives additional results for the SHREC data set described in Section \ref{sec:UliShapeData} using tent functions instead of interpolating polynomials.
Table \ref{tab:UliShapeResultsTents} should be compared to the results of Table \ref{tab:UliShapeResults}.

\begin{table}[!htb]
\centering
\scalebox{.8}{
\begin{tabular}{|l|c|cc|cc|cc|}
\toprule
     &           & 	  \multicolumn{2}{c}{Dim 0} & \multicolumn{2}{c}{Dim 1} & \multicolumn{2}{c}{Dim 0 \& Dim 1} \\
freq &  MSK      & Train & Test                 &   Train & Test            &   Train & Test    \\
\midrule
1    & \cellcolor{blue!25} $94.7\% \pm  5.1$ &  $8.3\%  \pm  0.5$ &  $3.4\%  \pm 1.1$ &  $8.1\%  \pm  0.2$ &  $3.7\%  \pm 0.5$  &  $8.2 \% \pm  0.3$ &  $ 3.5\% \pm 0.5$\\
2    & \cellcolor{blue!25} $99.3\% \pm  0.9$ &  $8.3\%  \pm  0.3$ &  $3.4\%  \pm 0.7$ &  $8.2\%  \pm  0.5$ &  $3.5\%  \pm 1.1$  &  $8.56 \% \pm  0.4$ &  $ 3.0\% \pm 1.0$\\
3    & \cellcolor{blue!25} $96.3\% \pm  2.2$ &  $66.5\% \pm  2.7$ &  $31.8\% \pm 4.8$ & $50.6\% \pm  2.1$ &  $31.1\% \pm 4.0$ &  $80.5\% \pm  1.3$ &  $44.4\% \pm 4.3$\\
4    & \cellcolor{blue!25} $97.3\% \pm  1.9$ &  $46.2\% \pm  2.5$ &  $27.0\% \pm 3.8$ & $83.1\% \pm  1.6$ &  $63.5\% \pm 4.6$  &  $89.1\% \pm  1.5$ &  $69.0\% \pm 4.9$\\
5    & \cellcolor{blue!25} $96.3\% \pm  2.5$ &  $28.5\% \pm  1.4$ &  $18.9\% \pm 4.0$ & $75.2\% \pm  2.6$ &  $58.3\% \pm 4.6$  &  $76.8\% \pm  2.7$ &  $58.4\% \pm 7.9$\\
6    & \cellcolor{blue!25} $93.7\% \pm  3.2$ &  $25.4\% \pm  1.8$ &  $19.0\% \pm 2.4$ & $96.5\% \pm  1.1$ & \cellcolor{orange!25} $88.7\% \pm 2.4$  &  $96.8\% \pm  0.67$ &  \cellcolor{orange!25} $89.9\% \pm 1.7$ \\
7    &  $88.0\% \pm  4.5$
     &  $19.4\% \pm  2.6$ &  $10.0\% \pm   3.4$
     &  $98.2\% \pm  0.5$ & \cellcolor{orange!25} $93.6\% \pm 1.9$
     &  $98.3\% \pm  0.6$ & \cellcolor{blue!25} $94.1\% \pm 2.5$\\
8    &  \cellcolor{orange!25} $88.3\% \pm  6.0$
     &  $10.8\% \pm  2.6$ &  $3.6\%  \pm   2.4$
     &  $91.9\% \pm  0.9$ & \cellcolor{orange!25} $88.8\% \pm 2.7$
     &  $91.9\% \pm  1.2$ & \cellcolor{blue!25}$89.7\% \pm 3.3$\\
9    &  $88.0\% \pm  5.8$ \cellcolor{blue!25}&  $10.6\% \pm  2.7$ &  $4.3\% \pm 2.2$ & $63.8\% \pm 2.7$ & $53.3\% \pm 5.9$ & $64.9\% \pm  2.3$ & $53.7\% \pm 3.8$ \\
10   &  $91.0\% \pm  4.0$\cellcolor{blue!25} &  $9.2\%  \pm  2.3$ &  $3.6\% \pm 1.7$ & $27.0\% \pm 3.9$ & $16.2\% \pm 3.2$ & $27.3\% \pm  3.4$ & $18.6\% \pm 5.6$ \\
\bottomrule
\end{tabular}
}
\caption{Results of classification of shape data discussed in Section \protect\ref{sec:UliShapeData}.
The functions used are the tent functions with $d=20$, and a ridge regression classifier. The MSK column gives the original results from \cite{Reininghaus2015}; the subsequent columns use the 0-dimensional diagrams only, the 1-dimensional diagrams only, and both, respectively.
Scores highlighted in blue give best average score MSK vs.~template functions; scores highlighted in orange have overlapping intervals with the best score.
}
\label{tab:UliShapeResultsTents}
\end{table}

\vskip 0.2in
\bibliography{TentFunctions-bio}

\begin{thebibliography}{10}

\bibitem{Adams2017}
H.~Adams, T.~Emerson, M.~Kirby, R.~Neville, C.~Peterson, P.~Shipman,
  S.~Chepushtanova, E.~Hanson, F.~Motta, and L.~Ziegelmeier.
\newblock Persistence images: A stable vector representation of persistent
  homology.
\newblock {\em Journal of Machine Learning Research}, 18(8):1--35, 2017.

\bibitem{Adcock2012}
A.~Adcock, E.~Carlsson, and G.~Carlsson.
\newblock The ring of algebraic functions on persistence bar codes.
\newblock {\em Homology, Homotopy and Applications}, 18(1):381--402, 2016.

\bibitem{Anirudh2016}
R.~Anirudh, V.~Venkataraman, K.~N. Ramamurthy, and P.~Turaga.
\newblock A {R}iemannian framework for statistical analysis of topological
  persistence diagrams.
\newblock In {\em The {IEEE} Conference on Computer Vision and Pattern
  Recognition (CVPR) Workshops}, 2016.

\bibitem{baire1899fonctions}
R.~Baire.
\newblock Sur les fonctions de variables r{\'e}elles.
\newblock {\em Annali di Matematica Pura ed Applicata (1898-1922)},
  3(1):1--123, 1899.

\bibitem{Bendich2016a}
P.~Bendich, J.~S. Marron, E.~Miller, A.~Pieloch, and S.~Skwerer.
\newblock Persistent homology analysis of brain artery trees.
\newblock {\em The Annals of Applied Statistics}, 10(1):198--218, 2016.

\bibitem{Benettin1980}
G.~Benettin, L.~Galgani, A.~Giorgilli, and J.-M. Strelcyn.
\newblock Lyapunov characteristic exponents for smooth dynamical systems and
  for hamiltonian systems; a method for computing all of them. part 2:
  Numerical application.
\newblock {\em Meccanica}, 15(1):21--30, Mar 1980.

\bibitem{Berrut2004}
J.~Berrut and L.~N. Trefethen.
\newblock Barycentric {L}agrange interpolation.
\newblock {\em SIAM Review}, 46(3):501--517, 2004.

\bibitem{Berry2018}
E.~Berry, Y.-C. Chen, J.~Cisewski-Kehe, and B.~T. Fasy.
\newblock Functional summaries of persistence diagrams.
\newblock {\em Journal of Applied and Computational Topology}, 4(2):211--262,
  2020.

\bibitem{blumberg2014robust}
A.~J. Blumberg, I.~Gal, M.~A. Mandell, and M.~Pancia.
\newblock Robust statistics, hypothesis testing, and confidence intervals for
  persistent homology on metric measure spaces.
\newblock {\em Foundations of Computational Mathematics}, 14(4):745--789, 2014.

\bibitem{Bubenik2015}
P.~Bubenik.
\newblock Statistical topological data analysis using persistence landscapes.
\newblock {\em Journal of Machine Learning Research}, 16:77--102, 2015.

\bibitem{bubenik2019universality}
P.~Bubenik and A.~Elchesen.
\newblock Universality of persistence diagrams and the bottleneck and
  wasserstein distances.
\newblock {\em arXiv preprint arXiv:1912.02563}, 2019.

\bibitem{Bubenik2018b}
P.~Bubenik and T.~Vergili.
\newblock Topological spaces of persistence modules and their properties.
\newblock {\em Journal of Applied and Computational Topology}, 2018.

\bibitem{carlsson2010zigzag}
G.~Carlsson and V.~De~Silva.
\newblock Zigzag persistence.
\newblock {\em Foundations of computational mathematics}, 10(4):367--405, 2010.

\bibitem{CarlssonVerovsek2016}
G.~Carlsson and S.~K. Verovsek.
\newblock Symmetric and r-symmetric tropical polynomials and rational
  functions.
\newblock {\em Journal of Pure and Applied Algebra}, pages 3610--3627, 2016.

\bibitem{Carriere2018}
M.~Carri\`ere and U.~Bauer.
\newblock On the metric distortion of embedding persistence diagrams into
  reproducing kernel hilbert spaces.
\newblock {\em arXiv:1806.06924}, 2018.

\bibitem{Carriere2019}
M.~Carriere, F.~Chazal, Y.~Ike, T.~Lacombe, M.~Royer, and Y.~Umeda.
\newblock Perslay: A neural network layer for persistence diagrams and new
  graph topological signatures.
\newblock In S.~Chiappa and R.~Calandra, editors, {\em Proceedings of the
  Twenty Third International Conference on Artificial Intelligence and
  Statistics}, volume 108 of {\em Proceedings of Machine Learning Research},
  pages 2786--2796, Palermo, Sicily, Italy, 2020. PMLR.

\bibitem{Carriere2017c}
M.~Carri{\`e}re, M.~Cuturi, and S.~Oudot.
\newblock Sliced {W}asserstein kernel for persistence diagrams.
\newblock In D.~Precup and Y.~W. Teh, editors, {\em Proceedings of the 34th
  International Conference on Machine Learning}, volume~70 of {\em Proceedings
  of Machine Learning Research}, pages 664--673, Sydney NSW Australia, 06--11
  Aug 2017. PMLR.

\bibitem{Carriere2015}
M.~Carriere, S.~Y. Oudot, and M.~Ovsjanikov.
\newblock Stable topological signatures for points on 3d shapes.
\newblock {\em Computer Graphics Forum}, 34(5):1--12, 2015.

\bibitem{Chazal2016}
F.~Chazal, V.~de~Silva, M.~Glisse, and S.~Oudot.
\newblock {\em The Structure and Stability of Persistence Modules}.
\newblock Springer International Publishing, Switzerland, 2016.

\bibitem{Chazal2014b}
F.~Chazal, B.~T. Fasy, F.~Lecci, A.~Rinaldo, and L.~Wasserman.
\newblock Stochastic convergence of persistence landscapes and silhouettes.
\newblock In {\em Proceedings of the Thirtieth Annual Symposium on
  Computational Geometry}, SOCG'14, pages 474:474--474:483, New York, NY, USA,
  2014. ACM.

\bibitem{Chen2015}
Y.-C. Chen, D.~Wang, A.~Rinaldo, and L.~Wasserman.
\newblock Statistical analysis of persistence intensity functions.
\newblock {\em arXiv preprint arXiv:1510.02502}, 2015.

\bibitem{Chevyrev2018}
I.~Chevyrev, V.~Nanda, and H.~Oberhauser.
\newblock Persistence paths and signature features in topological data
  analysis.
\newblock {\em IEEE transactions on pattern analysis and machine intelligence},
  42(1):192--202, 2018.

\bibitem{Chung2009}
M.~K. Chung, P.~Bubenik, and P.~T. Kim.
\newblock Persistence diagrams of cortical surface data.
\newblock In J.~L. Prince, D.~L. Pham, and K.~J. Myers, editors, {\em
  Information Processing in Medical Imaging}, volume 5636 of {\em Lecture Notes
  in Computer Science}, pages 386--397. Springer Berlin Heidelberg,
  Williamsburg, VA, USA, 2009.

\bibitem{Cohen-Steiner2007}
D.~Cohen-Steiner, H.~Edelsbrunner, and J.~Harer.
\newblock Stability of persistence diagrams.
\newblock {\em Discrete Comput. Geom.}, 37(1):103--120, 2007.

\bibitem{conway2013course}
J.~B. Conway.
\newblock {\em A course in functional analysis}, volume~96.
\newblock Springer, New York, NY, New York, NY, USA, 2013.

\bibitem{Corbet2018a}
R.~Corbet, U.~Fugacci, M.~Kerber, C.~Landi, and B.~Wang.
\newblock A kernel for multi-parameter persistent homology.
\newblock {\em Computers \& graphics: X}, 2:100005, 2019.

\bibitem{crawley2015decomposition}
W.~Crawley-Boevey.
\newblock Decomposition of pointwise finite-dimensional persistence modules.
\newblock {\em Journal of Algebra and its Applications}, 14(05):1550066, 2015.

\bibitem{Fabio2015}
B.~{Di Fabio} and M.~Ferri.
\newblock Comparing persistence diagrams through complex vectors.
\newblock In {\em Image Analysis and Processing {\textemdash} {ICIAP} 2015},
  pages 294--305. Springer International Publishing, Berlin, Heidelberg, 2015.

\bibitem{Diaconis2013}
P.~Diaconis, S.~Holmes, and M.~Shahshahani.
\newblock Sampling from a manifold.
\newblock In {\em Advances in Modern Statistical Theory and Applications: A
  Festschrift in honor of Morris L. Eaton}, pages 102--125. Institute of
  Mathematical Statistics, 2013.

\bibitem{divol2021understanding}
V.~Divol and T.~Lacombe.
\newblock Understanding the topology and the geometry of the space of
  persistence diagrams via optimal partial transport.
\newblock {\em Journal of Applied and Computational Topology}, 5(1):1--53,
  2021.

\bibitem{Donatini1998}
P.~Donatini, P.~Frosini, and A.~Lovato.
\newblock Size functions for signature recognition.
\newblock In R.~A. Melter, A.~Y. Wu, and L.~J. Latecki, editors, {\em Vision
  Geometry {VII}}, San Diego, CA, United States, 1998. {SPIE}.

\bibitem{Eckmann1985}
J.~P. Eckmann and D.~Ruelle.
\newblock Ergodic theory of chaos and strange attractors.
\newblock {\em Rev. Mod. Phys.}, 57:617--656, Jul 1985.

\bibitem{Fasy2014}
B.~T. Fasy, F.~Lecci, A.~Rinaldo, L.~Wasserman, S.~Balakrishnan, and A.~Singh.
\newblock Confidence sets for persistence diagrams.
\newblock {\em Annals of Statistics}, 42(6):2301--2339, 12 2014.

\bibitem{Ferri1998}
M.~Ferri, P.~Frosini, A.~Lovato, and C.~Zambelli.
\newblock Point selection: A new comparison scheme for size functions (with an
  application to monogram recognition).
\newblock In {\em Proceedings of the Third Asian Conference on Computer
  Vision-Volume I - Volume I}, ACCV '98, page 329–337, Berlin, Heidelberg,
  1998. Springer-Verlag.

\bibitem{Gottwald2004}
G.~A. Gottwald and I.~Melbourne.
\newblock A new test for chaos in deterministic systems.
\newblock {\em Proceedings of the Royal Society of London A: Mathematical,
  Physical and Engineering Sciences}, 460(2042):603--611, 2004.

\bibitem{Gottwald2009}
G.~A. Gottwald and I.~Melbourne.
\newblock On the validity of the 0–1 test for chaos.
\newblock {\em Nonlinearity}, 22(6):1367, 2009.

\bibitem{Gottwald2016}
G.~A. Gottwald and I.~Melbourne.
\newblock The 0-1 test for chaos: A review.
\newblock In C.~H. Skokos, G.~A. Gottwald, and J.~Laskar, editors, {\em Chaos
  Detection and Predictability}, pages 221--247. Springer Berlin Heidelberg,
  Berlin, Germany, 2016.

\bibitem{Henon1982}
M.~Henon.
\newblock On the numerical computation of {P}oincar\'e maps.
\newblock {\em Physica D: Nonlinear Phenomena}, 5(2):412 -- 414, 1982.

\bibitem{Kalisnik2018}
S.~Kali{\v{s}}nik.
\newblock Tropical coordinates on the space of persistence barcodes.
\newblock {\em Foundations of Computational Mathematics}, 2018.

\bibitem{Kusano2017}
G.~Kusano, K.~Fukumizu, and Y.~Hiraoka.
\newblock Kernel method for persistence diagrams via kernel embedding and
  weight factor.
\newblock {\em Journal of Machine Learning Research}, 18(189):1--41, 2018.

\bibitem{Kusano2016}
G.~Kusano, Y.~Hiraoka, and K.~Fukumizu.
\newblock Persistence weighted gaussian kernel for topological data analysis.
\newblock In {\em International Conference on Machine Learning}, pages
  2004--2013, 2016.

\bibitem{Kusano2018}
G.~Kusano, Y.~Hiraoka, and K.~Fukumizu.
\newblock Persistence weighted gaussian kernel for topological data analysis.
\newblock In {\em ICML}, 2016.

\bibitem{Kwitt2015}
R.~Kwitt, S.~Huber, M.~Niethammer, W.~Lin, and U.~Bauer.
\newblock Statistical topological data analysis - a kernel perspective.
\newblock In C.~Cortes, N.~Lawrence, D.~Lee, M.~Sugiyama, R.~Garnett, and
  R.~Garnett, editors, {\em Advances in Neural Information Processing Systems
  28}, pages 3052--3060. Curran Associates, Inc., Montreal, Quebec, Canada,
  2015.

\bibitem{Le2018}
T.~Le and M.~Yamada.
\newblock Persistence {F}isher kernel: A {R}iemannian manifold kernel for
  persistence diagrams.
\newblock In {\em 32nd Conference on Neural Information Processing Systems
  (NIPS 2018), Montr\'eal, Canada.}, 2018.

\bibitem{Lesnick2015}
M.~Lesnick.
\newblock The theory of the interleaving distance on multidimensional
  persistence modules.
\newblock {\em Foundations of Computational Mathematics}, 15(3):613--650, 2015.

\bibitem{Li2014}
C.~Li, M.~Ovsjanikov, and F.~Chazal.
\newblock Persistence-based structural recognition.
\newblock In {\em Proceedings of the IEEE Conference on Computer Vision and
  Pattern Recognition}, pages 1995--2002, 2014.

\bibitem{McCullough2015}
M.~McCullough, M.~Small, T.~Stemler, and H.~H.-C. Iu.
\newblock Time lagged ordinal partition networks for capturing dynamics of
  continuous dynamical systems.
\newblock {\em Chaos: An Interdisciplinary Journal of Nonlinear Science},
  25(5):053101, 2015.

\bibitem{Mileyko2011}
Y.~Mileyko, S.~Mukherjee, and J.~Harer.
\newblock Probability measures on the space of persistence diagrams.
\newblock {\em Inverse Problems}, 27(12):124007, 2011.

\bibitem{Munch2015}
E.~Munch, K.~Turner, P.~Bendich, S.~Mukherjee, J.~Mattingly, and J.~Harer.
\newblock Probabilistic fr\'echet means for time varying persistence diagrams.
\newblock {\em Electron. J. Statist.}, 9:1173--1204, 2015.

\bibitem{Pachauri2011}
D.~Pachauri, C.~Hinrichs, M.~K. Chung, S.~C. Johnson, and V.~Singh.
\newblock Topology-based kernels with application to inference problems in
  alzheimer's disease.
\newblock {\em {IEEE} Transactions on Medical Imaging}, 30(10):1760--1770,
  2011.

\bibitem{Padellini2017}
T.~Padellini and P.~Brutti.
\newblock Persistence flamelets: Multiscale persistent homology for kernel
  density exploration.
\newblock {\em arXiv preprint arXiv:1709.07097}, 2017.

\bibitem{Palaniyandi2009}
P.~Palaniyandi.
\newblock On computing {P}oincar\'e map by {H}\'enon method.
\newblock {\em Chaos, Solitons \& Fractals}, 39(4):1877 -- 1882, 2009.

\bibitem{Pickup2014}
D.~Pickup, X.~Sun, P.~L. Rosin, R.~R. Martin, Z.~Cheng, Z.~Lian, M.~Aono,
  A.~Ben~Hamza, A.~Bronstein, M.~Bronstein, S.~Bu, U.~Castellani, S.~Cheng,
  V.~Garro, A.~Giachetti, A.~Godil, J.~Han, H.~Johan, L.~Lai, B.~Li, C.~Li,
  H.~Li, R.~Litman, X.~Liu, Z.~Liu, Y.~Lu, A.~Tatsuma, and J.~Ye.
\newblock S{H}{R}{E}{C}'14 track: Shape retrieval of non-rigid 3d human models.
\newblock In {\em Proceedings of the 7th Eurographics workshop on 3D Object
  Retrieval}, EG 3DOR'14. Eurographics Association, 2014.

\bibitem{polanco2019adaptive}
L.~Polanco and J.~A. Perea.
\newblock Adaptive template systems: Data-driven feature selection for learning
  with persistence diagrams.
\newblock In {\em 2019 18th IEEE International Conference On Machine Learning
  And Applications (ICMLA)}, pages 1115--1121. IEEE, 2019.

\bibitem{Reininghaus2015}
J.~Reininghaus, S.~Huber, U.~Bauer, and R.~Kwitt.
\newblock A stable multi-scale kernel for topological machine learning.
\newblock In {\em The IEEE Conference on Computer Vision and Pattern
  Recognition (CVPR)}, June 2015.

\bibitem{Rouse2015}
D.~Rouse, A.~Watkins, D.~Porter, J.~Harer, P.~Bendich, N.~Strawn, E.~Munch,
  J.~DeSena, J.~Clarke, J.~Gilbert, P.~Chin, and A.~Newman.
\newblock Feature-aided multiple hypothesis tracking using topological and
  statistical behavior classifiers.
\newblock In I.~Kadar, editor, {\em Signal Processing, Sensor/Information
  Fusion, and Target Recognition {XXIV}}, Baltimore, Maryland, United States,
  may 2015. {SPIE}.

\bibitem{rudin2006real}
W.~Rudin.
\newblock {\em Real and complex analysis}.
\newblock Tata McGraw-Hill Education, New York, NY, USA, 2006.

\bibitem{Sandri1986}
M.~Sandri.
\newblock Numerical calculation of lyapunov exponents.
\newblock {\em The Mathematica Journal}, 6(3):78--84, 1986.

\bibitem{Singh2014}
N.~Singh, H.~D. Couture, J.~S. Marron, C.~Perou, and M.~Niethammer.
\newblock Topological descriptors of histology images.
\newblock In G.~Wu, D.~Zhang, and L.~Zhou, editors, {\em Machine Learning in
  Medical Imaging: 5th International Workshop, MLMI 2014, Held in Conjunction
  with MICCAI 2014, Boston, MA, USA, September 14, 2014. Proceedings}, pages
  231--239, Boston, MA, USA, 2014. Springer International Publishing.

\bibitem{Sun2009}
J.~Sun, M.~Ovsjanikov, and L.~Guibas.
\newblock A concise and provably informative multi-scale signature based on
  heat diffusion.
\newblock In {\em Proceedings of the Symposium on Geometry Processing}, SGP
  '09, pages 1383--1392, Aire-la-Ville, Switzerland, Switzerland, 2009.
  Eurographics Association.

\bibitem{Trefethen2012}
L.~N. Trefethen.
\newblock {\em Approximation Theory and Approximation Practice (Applied
  Mathematics)}.
\newblock SIAM, Philadelphia, PA, USA, 2012.

\bibitem{Turner2014}
K.~Turner, Y.~Mileyko, S.~Mukherjee, and J.~Harer.
\newblock Fr\'{e}chet means for distributions of persistence diagrams.
\newblock {\em Discrete \& Computational Geometry}, 52(1):44--70, 2014.

\bibitem{tymochko2019adaptive}
S.~Tymochko, E.~Munch, and F.~A. Khasawneh.
\newblock Adaptive partitioning for template functions on persistence diagrams.
\newblock In {\em 2019 18th IEEE International Conference On Machine Learning
  And Applications (ICMLA)}, pages 1227--1234. IEEE, 2019.

\bibitem{wagner2019nonembeddability}
A.~Wagner.
\newblock Nonembeddability of persistence diagrams with $ p> 2$ wasserstein
  metric.
\newblock {\em arXiv preprint arXiv:1910.13935}, 2019.

\bibitem{Yesilli2019}
M.~C. Yesilli, F.~A. Khasawneh, and A.~Otto.
\newblock Topological feature vectors for chatter detection in turning
  processes.
\newblock {\em The International Journal of Advanced Manufacturing Technology},
  pages 1--27, 2022.

\bibitem{yesilli2019chatter}
M.~C. Yesilli, S.~Tymochko, F.~A. Khasawneh, and E.~Munch.
\newblock Chatter diagnosis in milling using supervised learning and
  topological features vector.
\newblock In {\em 2019 18th IEEE International Conference On Machine Learning
  And Applications (ICMLA)}, pages 1211--1218. IEEE, 2019.

\bibitem{Zhao2019}
Q.~Zhao and Y.~Wang.
\newblock Learning metrics for persistence-based summaries and applications for
  graph classification.
\newblock In {\em Proceedings of the 33rd International Conference on Neural
  Information Processing Systems}, pages 9859--9870, Red Hook, NY, USA, 2019.
  Curran Associates Inc.

\bibitem{Zhu2016}
X.~Zhu, A.~Vartanian, M.~Bansal, D.~Nguyen, and L.~Brandl.
\newblock Stochastic multiresolution persistent homology kernel.
\newblock In {\em IJCAI}, pages 2449--2457, 2016.

\bibitem{Zielinski2018a}
B.~Zieli\'{n}ski, M.~Lipi\'{n}ski, M.~Juda, M.~Zeppelzauer, and P.~D\l{}otko.
\newblock Persistence bag-of-words for topological data analysis.
\newblock In {\em Proceedings of the 28th International Joint Conference on
  Artificial Intelligence}, IJCAI'19, page 4489–4495. AAAI Press, 2019.

\bibitem{Zielinski2018}
B.~Zieli{\'n}ski, M.~Lipi{\'n}ski, M.~Juda, M.~Zeppelzauer, and P.~D{\l}otko.
\newblock Persistence codebooks for topological data analysis.
\newblock {\em Artificial Intelligence Review}, 54(3):1969--2009, 2021.

\end{thebibliography}

\end{document}